\documentclass[aps,11pt,twoside,notitlepage,superscriptaddress]{revtex4-2}

\usepackage{graphicx,epic,eepic,epsfig,amsmath,amsfonts,latexsym,amssymb,enumerate}
\usepackage{hhline}
\usepackage[mathscr]{eucal}
\usepackage[dvipsnames]{xcolor}
\usepackage{booktabs,multirow,tabularx,threeparttable,colortbl,arydshln,array}
\usepackage[colorlinks,
            linkcolor=blue,
            anchorcolor=blue,
            citecolor=red,
            ]{hyperref}

\def\squareforqed{\hbox{\rlap{$\sqcap$}$\sqcup$}}
\def\qed{\ifmmode\squareforqed\else{\unskip\nobreak\hfil
\penalty50\hskip1em\null\nobreak\hfil\squareforqed
\parfillskip=0pt\finalhyphendemerits=0\endgraf}\fi}
\def\endenv{\ifmmode\;\else{\unskip\nobreak\hfil
\penalty50\hskip1em\null\nobreak\hfil\;
\parfillskip=0pt\finalhyphendemerits=0\endgraf}\fi}

\newtheorem{theorem}{Theorem}

\newtheorem{corollary}[theorem]{Corollary}

\newtheorem{definition}[theorem]{Definition}

\newtheorem{lemma}[theorem]{Lemma}

\newtheorem{proposition}[theorem]{Proposition}
\newtheorem{remark}[theorem]{Remark}

\newenvironment{proof}[1][Proof]{\noindent\textbf{Proof.} }{\hfill\qed}

\newcommand{\nc}{\newcommand}
\nc{\rnc}{\renewcommand}
\nc{\beq}{\begin{equation}}
\nc{\eeq}{\end{equation}}
\nc{\bsp}{\begin{split}}
\nc{\esp}{\end{split}}
\nc{\beqa}{\begin{eqnarray}}
\nc{\eeqa}{\end{eqnarray}}
\nc{\lbar}[1]{\overline{#1}}
\nc{\ket}[1]{|#1\rangle}
\nc{\bra}[1]{\langle#1|}
\nc{\braket}[2]{\langle #1 | #2 \rangle}
\nc{\ketbra}[2]{|#1\rangle\!\langle#2|}
\nc{\proj}[1]{| #1\rangle\!\langle #1 |}
\nc{\avg}[1]{\langle#1\rangle}
\nc{\Rank}{\operatorname{Rank}}
\nc{\smfrac}[2]{\mbox{$\frac{#1}{#2}$}}
\nc{\tr}{\operatorname{Tr}}
\nc{\ox}{\otimes}
\nc{\dg}{\dagger}
\nc{\dn}{\downarrow}
\nc{\supp}{{\operatorname{supp}}}
\nc{\qsupp}{{\operatorname{qsupp}}}
\nc{\var}{\operatorname{var}}
\nc{\rar}{\rightarrow}
\nc{\lrar}{\longrightarrow}
\nc{\polylog}{\operatorname{polylog}}
\nc{\1}{{\openone}}
\nc{\id}{{\operatorname{id}}}
\nc{\<}{\langle}
\nc{\Hom}[2]{\mbox{Hom}(\CC^{#1},\CC^{#2})}
\nc{\rU}{\mbox{U}}
\nc{\mc}{\mathcal}
\nc{\mscr}{\mathscr}
\nc{\M}{\mathsf{M}}
\nc{\X}{\mathsf{X}}
\nc{\Y}{\mathsf{Y}}
\nc{\A}{\mathsf{A}}
\nc{\B}{\mathsf{B}}
\nc{\C}{\mathsf{C}}
\nc{\D}{\mathsf{D}}
\nc{\E}{\mathsf{E}}
\nc{\U}{\mathsf{U}}
\nc{\V}{\mathsf{V}}
\nc{\R}{\mathsf{R}}

\begin{document}

\title{Tight any-shot quantum decoupling}

\author{Mario Berta}
\affiliation{Institute for Quantum Information, RWTH Aachen University, Germany}
\author{Hao-Chung Cheng}
\affiliation{Department of Electrical Engineering and Graduate Institute of Communication Engineering, National Taiwan University, Taiwan}
\affiliation{Department of Mathematics, National Taiwan University, Taiwan}
\affiliation{Center for Quantum Science and Engineering,  National Taiwan University, Taiwan}
\affiliation{Physics/Mathematics Division, National Center for Theoretical Sciences, Taiwan}
\affiliation{Hon Hai (Foxconn) Quantum Computing Center, Taiwan}
\author{Yongsheng Yao}
\affiliation{Institute for Quantum Information, RWTH Aachen University, Germany}

\date{\today}

\begin{abstract}
Quantum information decoupling is a fundamental primitive in quantum information theory, underlying various applications in quantum physics. We prove a novel one-shot decoupling theorem formulated in terms of quantum relative entropy distance, with the decoupling error bounded by two sandwiched R\'enyi conditional entropies. In the asymptotic i.i.d.~setting of standard information decoupling via partial trace, we show that this bound is ensemble-tight in quantum relative entropy distance and thereby yields a characterization of the associated decoupling error exponent in the low-cost-rate regime. 

Leveraging this framework, we derive several operational applications formulated in terms of purified distance: (i) single-letter expressions for the exact error exponents of quantum state merging in the low-entanglement-cost and high-entanglement-distillation-rate regimes, in terms of Petz--R\'enyi conditional entropies, and
(ii) regularized expressions for achievable error exponents of entanglement distillation and quantum channel coding in terms of Petz--R\'enyi coherent informations. We further prove that these achievable bounds are tight for maximally correlated states and generalized dephasing channels, respectively, for the high distillation-rate/coding-rate regimes.

\end{abstract}

\maketitle


\section{Introduction}

Quantum information decoupling addresses the challenge of eliminating correlations between a local system and its environment through quantum evolution. This task involves transforming a bipartite quantum state $\rho_{\A\E}$ by applying a unitary operation on system $\A$, followed by a decoupling map $\mathcal{T}_{\A \rightarrow \C}$, such that the the resulting system $\C$ becomes independent to the environment $\E$. A prominent special case of this theory is standard quantum information decoupling, when $\mathcal{T}_{\A \rightarrow \C}$ is given by the partial trace over a subsystem of $\A$. As a fundamental structural pillar in quantum information theory, decoupling provides the theoretical foundation for numerous landmark results. Its applications are widespread, ranging from quantum state merging~\cite{Li_2024,Horodecki_2006,berta2009singleshotquantumstatemerging}, quantum channel simulation~\cite{Berta_2011,Li_2025,Bennett_2014}, entanglement distillation~\cite{Devetak_2005,Buscemi_2010}, to quantum channel coding~\cite{Dupuis_2009,Dupuis_2010,dupuis2010decouplingapproachquantuminformation}.

In many physically relevant scenarios,  decoupling is considered in the absence of auxiliary resources, where the unitary operation is drawn from the Haar measure and no additional catalytic systems are available. The performance of such a scheme is quantified by a divergence measuring the residual correlations between $\C$ and $\E$. A substantial body of work has been devoted to the trace distance and purified distance criteria. In particular, one-shot upper bounds on the decoupling error under the trace distance were derived for general maps $\mathcal{T}_{\A \rightarrow \C}$, expressed in terms of smooth conditional min-entropies~\cite{Dupuis_2014}. While the analysis suffices for proving standard coding theorems, these smooth-entropy-based bounds necessarily involve non-negligible fudge terms. As a consequence, they do not yield exact non-asymptotic error exponent characterizations and are primarily tailored to first- and second-order asymptotics.  To address this issue, Cheng \emph{et al.}~\cite{cheng2024jointstatechanneldecouplingoneshot} recently strengthened the one-shot bound by expressing it in terms of sandwiched conditional R\'enyi entropies, thereby clarifying the associated achievable error exponents. However, even in this refined form, the analysis remains tied to the trace distance and does not provide a sharp characterization under quantum relative entropy (or purified distance).

Quantum relative entropy plays a central role in quantum information theory.  Beyond its operational significance, bounds formulated under relative entropy can be converted into purified-distance statements via standard entropy-fidelity inequalities, thereby enabling Uhlmann-type arguments that are indispensable across a wide range of fully quantum applications. Consequently, for tasks such as quantum state merging and quantum channel coding, relative entropy thus provides a robust performance criterion. Despite its foundational importance, quantum information decoupling under the relative entropy has remained comparatively unexplored. In particular, general one-shot bounds\,---\,analogous to the well-established results for trace distance\,---\,have been notably absent from the literature. This gap constitutes a genuine technical bottleneck: Without a sharp one-shot bound formulated directly in terms of the relative entropy, a precise characterization of exact finite-blocklength error exponents remained out of reach.

In this work, we overcome this bottleneck by leveraging a recently developed trace inequality~\cite{sharp25} to establish a general one-shot upper bound on the decoupling error measured by the quantum relative entropy. 
Our bound takes an exact exponential form without the need for smoothing and remains valid for arbitrary blocklengths. 
Informally speaking, for arbitrary decoupling channels $\mc{T}_{\A \to \C}$ with Choi state $\omega_{\A\C}$ and bipartite states $\rho_{\A\E}$, we prove that

\begin{equation}
     \mathbb{E}_{\mathbb{U}(\A)} D(\mc{T}_{\A \rightarrow \C} (U_{\A} \rho_{\A\E} U_{\A}^*)\|\omega_{\C} \otimes \rho_{\E} ) 
            \leq \inf_{0<s<1} {2} G_s 2^{-s\widetilde{H}_{1+s}(\A|\E)_\rho-s\widetilde{H}_{1+s}(\A'|\C)_\omega},
\end{equation}
where $\mathbb{E}_{\mathbb{U}(\A)}$ denotes the integration over the Haar measure over the unitary group $\mathbb{U}(\A)$, $\widetilde{H}_{\alpha}(\A|\E)_\rho$ is the sandwiched conditional R\'enyi entropy defined later in Eq.~\eqref{equ:defsnad1}, and for the constant
\begin{align}
G_s:=\sup_{r>0} \frac{\log(1+r)}{r^s} \leq
\frac{s^{s-1}(1-s)^{1-s}}{\ln2}, \quad s \in (0,1].
\end{align}
A key structural feature is that the bound is directly expressed in terms of the exponents, is fully non-asymptotic, and involves no smoothing parameters or additive fudge terms.  Consequently, in the $n$-fold product setting, the induced error exponent is valid for every blocklength $n$, and is positive whenever the sum-entropy criterion \cite{Dupuis_2014}, $\widetilde{H}_{1}(\A|\E)_\rho+\widetilde{H}_{1}(\A'|\C)_\omega>0$, is satisfied. This stricture is particularly appealing for finite-resource quantum regimes and consequently for applications in physics.

For the standard decoupling scenario, we further establish a lower bound on the one-shot relative entropy decoupling error using the pinching technique combined with an operator inequality. This lower bound is expressed through an information-spectrum quantity, allowing us to obtain an ensemble-tight converse bound that matches the achievable error exponent, provided the subsystem being removed grows at a rate below a certain \emph{critical rate}. Hence, in the most operationally significant low-cost-rate regime, we determine the exact error exponent of standard quantum decoupling under the relative entropy criterion. While it remains an open question whether this ensemble tightness can be elevated to a fully general converse independent of the Haar ensemble, the appearance of a critical rate is fundamentally analogous to phenomena observed in classical data compression with quantum side information~\cite{Cheng_2021,Renes_2023}, classical-quantum channel coding~\cite{Dalai_2013,Renes_2025,Li_2025cq,CL25}, privacy amplification~\cite{Hayashi_2013, Li_2023} and catalytic decoupling~\cite{Li_2024}.

Our general framework applies in particular to decoupling maps implemented by partial isometries. In this setting, we derive an explicit error exponent for quantum state merging~\cite{Horodecki_2005,Horodecki_2006,berta2009singleshotquantumstatemerging}, when the entanglement cost rate is not too high. Furthermore, a similar approach also provides an achievability bound on the error exponents for entanglement distillation assisted by local operation and classical communication (LOCC) and a broad class of quantum communication tasks, including subspace transmission, entanglement transmission, entanglement generation as well as their one-way LOCC-assisted counterparts. Our bounds offer strong performance guarantees for both unassisted and LOCC-assisted protocols: Specifically, for coding rates below the first-order asymptotic capacity, we show that the error decays exponentially for every blocklength $n$. This provides a stronger large-deviation characterization than that offered by conventional first-order analyses. Notably, while broader axiomatic classes of operations exist, our focus remains on the physically motivated unassisted and LOCC-assisted scenarios. For certain classes of bipartite states and quantum channels, we establish corresponding exact error exponents. These applications demonstrate both the versatility of our framework and the analytical advantages of a relative entropy based approach. A comprehensive summary of the obtained error exponents is provided in Table~\ref{table:main}.\\

\begin{table}[t!]
	\centering
	\resizebox{1\columnwidth}{!}{
		\begin{tabular}{l  c c}

        \toprule
		\addlinespace

        \multirow{2}{*}{\textbf{Task}} & \textbf{Achievable Error Exponent} & \textbf{Asymptotic Tightness}

        \\

        & (for any $n$-shot) & (as $n\to \infty)$

        \\
        \addlinespace[0.2em]
        
        \midrule

        \addlinespace[0.3em]
        \multirow{2}{*}{General decoupling}
        & 
        \multirow{2}{*}{$\sup\limits_{s\in(0,1]} s\widetilde{H}_{1+s}(\A|\E)_\rho + s\widetilde{H}_{1+s}(\A'|\C)_\omega$}
        &
        \multirow{2}{*}{?}

        \\

        & &

        \\
        \addlinespace[0.3em]

        \hhline{>{\arrayrulecolor{black!25}}--->{\arrayrulecolor{black}}}

        \addlinespace[0.3em]
        \multirow{2}{*}{Standard decoupling}
        & 
        \multirow{2}{*}{$\sup\limits_{s\in(0,1]} s\left( 2r - \log |\A| + \widetilde{H}_{1+s}(\A|\E)_\rho\right)$}
        & \multirow{2}{*}{Ensemble tight}

        \\

        & &

        \\
        \addlinespace[0.3em]

        \hhline{>{\arrayrulecolor{black!25}}--->{\arrayrulecolor{black}}}

        \addlinespace[0.3em]
        Entanglement cost of
        & 
        \multirow{2}{*}{$\sup\limits_{\alpha \in[1/2,1)} \frac{1-\alpha}{2\alpha}\left(r-H^{\uparrow}_{\alpha}(\A|\B)_\rho\right)$}
        & \multirow{2}{*}{Tight}

        \\  

        quantum state merging& &
        
        \\
        \addlinespace[0.3em]
        
        \hhline{>{\arrayrulecolor{black!25}}--->{\arrayrulecolor{black}}}
        \addlinespace
         Entanglement distillation of
        & 
        \multirow{2}{*}{$\sup\limits_{s\in(0,1]} \frac{s}{2}\left(\widetilde{H}_{1+s}(\A|\R)_\rho - r\right)$}
        & \multirow{2}{*}{Tight}

        \\  

        quantum state merging& &

        \\
        \addlinespace[0.3em]
        
        \hhline{>{\arrayrulecolor{black!25}}--->{\arrayrulecolor{black}}}

        \addlinespace[0.3em]
        \multirow{3}{*}{Entanglement distillation}
        & 
        \multirow{3}{*}{$\sup\limits_{\alpha \in[1/2,1)} \!\!\frac{1-\alpha}{2\alpha}\!\left( \tfrac{1}{n}\sup\limits_{\mscr{L}_{\A^n:\B^n \rightarrow \C:\D}}I_{\alpha}(\C \rangle \D)_{\mscr{L}(\rho_{\A\B}^{\ox n})}\!-r\right)$}
        & Single-letter tight

        \\

        & & for maximally

        \\

        & & correlated states

        \\
        \addlinespace[0.3em]

        \hhline{>{\arrayrulecolor{black!25}}--->{\arrayrulecolor{black}}}

        \addlinespace[0.3em]
        \multirow{3}{*}{Quantum communication}
        & 
        \multirow{3}{*}{$ \sup\limits_{\alpha \in[1/2,1)} \!\!\tfrac{1-\alpha}{2\alpha}\!\left( \tfrac{1}{n} \sup\limits_{\phi_{\bar{\A}^n\A^n}}I_{\alpha}(\bar{\A}^n \rangle \B^n)_{\mscr{N}^{\otimes n}(\phi_{\bar{\A}^n\A^n}\!)}\!-r\right)$}
        & Single-letter tight 

        \\

        & & for generalized

        \\

        & & dephasing channels

        \\
        \addlinespace[0.1em]

        \bottomrule
        
        \end{tabular}
	}
	\caption{
		\small 
        Summary of the any $n$-shot achievable error exponents for various fully quantum tasks; see Theorems~\ref{thm:maindec}, \ref{thm:main}, \ref{thm:statemer}, \ref{thm:entan}, and \ref{thm:loccachi}. The relevant R\'enyi entropic information measures are defined in Section~\ref{sec:information-measures}. Note the the tightness in the third column is up to a certain \emph{critical rate}. The operation $\mscr{L}_{\A^n:\B^n \rightarrow \C:\D}$ corresponding to entanglement distillation protocol pertains to one-way LOCC, two-way LOCC.
	}	\label{table:main}	
\end{table}

\emph{Relation to previous work.} The closest related result is due to Cheng \emph{et al.}~\cite{cheng2024jointstatechanneldecouplingoneshot} who established a one-shot upper bound on the  decoupling error under the trace distance for arbitrary decoupling channels  $\mc{T}_{\A \rightarrow \C}$ as
\begin{equation}\label{eq:previous}
\frac{1}{2}\mathbb{E}_{\mathbb{U}(\A)} \|\mc{T}_{\A \rightarrow \C} (U_{\A} \rho_{\A\E} U_{\A}^*)-\omega_{\C} \otimes \rho_{\E} \|_1 \leq\inf_{\alpha \in [1,2]}2^{\frac{1-\alpha}{\alpha}\left(\widetilde{H}_\alpha^{\uparrow}(\A|\E)_\rho+\widetilde{H}_\alpha^{\uparrow}(\A'|\C)_\omega+\log 3\right)}.
\end{equation}
Such one-shot bounds imply achievability results for the error exponent under the trace-distance. However, the absence of corresponding converse bounds prevents a full characterization of the error exponent. In contrast, focusing on the quantum relative entropy\,---\,a more demanding criterion for measuring errors\,---\,our work establishes both achievability and ensemble-tight converse bounds for standard decoupling and shows that their coincidence in the low-rate regime. This yields the first exact characterization of the error exponent of quantum information decoupling under the quantum relative entropy. We also remark that by reverse Pinsker's inequality, the error exponent established in Eq.~\eqref{eq:previous} does not cover ours (see Remark~\ref{remark:dupuis}).

A recent work \cite{HAP24} also studies the achievability of quantum information decoupling under quantum relative entropy, where the error bound is only obtained for $\alpha = 2$. However, no tight exponent characterization was derived there, and achieving the general R\'enyi parameter regime $\alpha \in (1,2)$ remained open. Another line of work studies catalytic decoupling, where auxiliary systems assist the protocol. Tight one-shot characterizations, second-order asymptotics and large deviation analysis have been obtained under purified distance~\cite{Majenz_2017, Anshu_2017, li2024operationalinterpretationsandwichedrenyi,Li_2024}. However, these results rely on catalytic assistance and do not address the structureless setting considered here.

While recent work~\cite{lin2026exponentialanalysisentanglementdistillation} establishes the exact error exponent for entanglement distillation assisted by the more powerful class of non-entangling operations, we focus here on the physically motivated LOCC-assisted setting. Since the non-entangling distillation capacity can strictly exceed that achievable under LOCC~\cite{Lami_Regula_2024_non-entangling}, our achievability bound may at first sight appear more conservative than the corresponding result in~\cite{lin2026exponentialanalysisentanglementdistillation}. However, the converse analysis of~\cite{lin2026exponentialanalysisentanglementdistillation} continues to provide a valid upper bound in the LOCC regime. By combining these two viewpoints, we show that for maximally correlated states the optimal error exponent in the high-distillation-rate regime is already achievable using one-way LOCC.\\

\textit{Outline}.
The remainder of this paper is organized as follows. In Section~\ref{sec:notation}, we introduce the notations and definitions used throughout the paper. Section~\ref{sec:main} formalizes the quantum information decoupling problem and presents our main results. Applications of these results to quantum state merging, entanglement distillation and several quantum communication tasks are discussed in Sections~\ref{sec:statemerging}, \ref{sec:entangle} and \ref{sec:quantumcom}, respectively. Finally, Section~\ref{sec:conclu} concludes the paper with a discussion and several open problems.


\section{Preliminaries}
\label{sec:notation}

\subsection{Notation}

Throughout this paper, all logarithms are to base two.
For a finite-dimensional Hilbert space $\mathcal{H}$, let $\mc{L}(\mc{H})$ denote the set of all linear operators acting on $\mc{H}$ and let $\mc{P}(\mc{H})$ be the set of the  positive semi-definite  operators  on $\mc{H}$. The set of normalized and sub-normalized quantum states on $\mc{H}$ are defined respectively as
\begin{align*}
    \mathcal{S}(\mathcal{H})&=\{ \rho \in \mathcal{P}(\mc{H})~|~\tr \rho=1\},\\
    \mathcal{S}_{\leq}(\mathcal{H})&=\{ \rho \in \mathcal{P}(\mc{H})~|~\tr \rho \leq 1\}.
\end{align*}
We use $|\mc{H}|$ to denote the dimension of $\mc{H}$, and $I_{\mc{H}}$ for the identity operator on $\mc{H}$. When $\mc{H}$ is associated with a quantum system $\A$, the above notations $\mc{L}(\mc{H})$, $\mc{P}(\mc{H})$, $\mathcal{S}(\mc{H})$, $\mathcal{S}_{\leq}(\mc{H})$, $|\mc{H}|$ and $I_{\mc{H}}$ also can be written as $\mc{L}(\A)$, $\mc{P}(\A)$, $\mathcal{S}(\A)$, $\mathcal{S}_{\leq}(\mc{H})$, $|\A|$ and $I_{\A}$, respectively. 

For an operator $X \in \mc{L}(\mc{H})$, $\supp(X)$ denotes the support of $X$. 
For a positive semidefinite
operator $X$, the operators $\log X$ and $X^{-1}$ are evaluated on
$\supp(X)$. Whenever operator monotonicity or operator concavity of the
logarithm is invoked for possibly singular operators, we first add
$\epsilon I$ and then let $\epsilon\downarrow0$.

For $A, B \in \mathcal{P}(\mc{H})$, $\{A \geq B\}$ denotes the projection onto the positive part of $A-B$, namely  the subspace spanned by the eigenvectors corresponding to the non-negative eigenvalues of $A-B$,  $\{A>B\}$, $\{A \leq B\}$ and $\{A<B\}$ are defined analogously. A bipartite state $\rho_{\A\B} \in \mc{S}(\A\B)$ is called maximally correlated if there exist orthonormal bases $\{\ket{a_x}\}_x$ and $\{\ket{b_x}\}_x$ of $\mc{H}_{\A}$ and $\mc{H}_{\B}$ such that $\supp(\rho_{\A\B})$ is contained in the subspace spanned by $\{\ket{a_x}\ox \ket{b_x} \}_x$.

The purified distance between two sub-normalized quantum states $\rho, \sigma \in \mc{S}_{\leq}(\mc{H})$ is given by
\[
P(\rho,\sigma):=\sqrt{1-F^2(\rho,\sigma)},
\]
where $F(\rho,\sigma):=\|\sqrt{\rho}\sqrt{\sigma}\|_1+\sqrt{(1-\tr\rho)(1-\tr\sigma)}$ is the fidelity function. The trace distance between $\rho$ and $\sigma$ is defined as
\[
d(\rho,\sigma):=\frac{1}{2}\|\rho-\sigma\|_1.
\]

For isomorphic systems $\A \cong \A'$, $\Phi_{\A\A'}$ denotes the normalized maximally entangled state between $\A$ and $\A'$, given by
\[
\Phi_{\A\A'}=\frac{1}{|\A|} \sum^{|\A|}_{i,j =1}\ket{i}\bra{j}_{\A}\otimes \ket{i}\bra{j}_{\A'}.
\]

A quantum channel $\mscr{N}_{\A \rightarrow \B}$ is a completely positive and trace-preserving linear map from $\mc{L}(\A)$ to $\mc{L}(\B)$.  For a completely positive map $\mathcal{T}_{\A \rightarrow \C}$, its normalized Choi state is defined as 
\[
\omega_{\A'\C}=\mathcal{T}_{\A \rightarrow \C}(\Phi_{\A\A'}).
\]
It follows that for any quantum state $\rho_{\A\E}$,
\begin{equation}
\label{equ:choi}
  \mathcal{T}_{\A \rightarrow \C}(\rho_{\A\E})=|\A|^2 \bra{\Phi_{\A\A'}}  \omega_{\A'\C}\otimes \rho_{\A\E} \ket{\Phi_{\A\A'}}.
\end{equation}
Let $A$ be a self-adjoint operator with spectral projections $P_1, \ldots, P_r$. The pinching channel $\mathcal{E}_A$ associated with $A$ is defined as
\[
\mathcal{E}_A : X\mapsto\sum_{i=1}^r P_i X P_i.
\]
The pinching inequality~\cite{Hayashi_2002} states that for any $\sigma \in \mathcal{P}(\mathcal{H})$, we have
\begin{equation}
\sigma \leq v(A) \mathcal{E}_A(\sigma),
\end{equation}
where $v(A)$ is the number of different eigenvalues of $A$.

Let $G$ be a group with unitary representations $\{U_g\}_{g \in G}$ on the Hilbert space $\mc{H}_{\A}$ and 
 $\{V_g\}_{g \in G}$ on $\mc{H}_{\B}$. A quantum channel $\mscr{N}_{\A \rightarrow \B}$ is said to be covariant with respect to $G$ if
\begin{equation}
    V_g \mscr{N}(\cdot) V_g^*=\mscr{N}(U_g\cdot U_g^*), \quad \forall g \in G.
\end{equation}


\subsection{Quantum divergences}
\label{sec:information-measures}

The Umegaki relative entropy~\cite{Umegaki1954conditional}, also known as the quantum relative entropy, is one of the most fundamental information measures in quantum information theory. For $\rho, \sigma \in \mathcal{P}(\mathcal{H})$, it is defined as 
\begin{equation}
D(\rho\|\sigma):= \begin{cases}
\tr(\rho(\log\rho-\log\sigma)) & \text{ if }\supp(\rho)\subseteq\supp(\sigma), \\
+\infty                        & \text{ otherwise.}
                  \end{cases}
\end{equation}
For  $\rho_{\A\B} \in \mathcal{S}(\A\B)$, the conditional entropy and coherent information of $\rho_{\A\B}$ are given by
\begin{align}
H(\A|\B)_\rho:&= -\min_{\sigma_{\B} \in \mc{S}(\B)} D(\rho_{\A\B} \| I_{\A} \ox \sigma_{\B}) \\
&=-D(\rho_{\A\B} \|I_{\A} \ox \rho_{\B}), \\
I(\A \rangle \B)_\rho:&=\min_{\sigma_{\B} \in \mc{S}(\B)} D(\rho_{\A\B} \| I_{\A} \ox \sigma_{\B})\\
&=D(\rho_{\A\B} \|I_{\A} \ox \rho_{\B}).
\end{align}

Quantum Rényi divergences are parameterized families of divergence measures that extend the quantum relative entropy. Among the various proposals, the Petz R\'enyi divergence~\cite{petz1986quasi} and the sandwiched R\'enyi divergence~\cite{MDS+13,WWY14} play a particularly prominent role due to their favorable mathematical properties and well-established operational interpretations in quantum information theory.

\begin{definition}
\label{definition:sand}
Let $\alpha\in(0,+\infty)\setminus\{1\}$, $\rho, \sigma\in\mc{P}(\mc{H})$.
When $\alpha >1$ and $\supp(\rho)\subseteq\supp(\sigma)$ or $\alpha\in (0,1)$ and $\supp(\rho)\not\perp\supp(\sigma)$, the Petz R\'enyi divergence $D_{\alpha}(\rho\|\sigma)$ and the sandwiched R{\'e}nyi divergence $\tilde{D}_\alpha(\rho\|\sigma)$
are defined as
\begin{align}
&D_{\alpha}(\rho \| \sigma):=\frac{1}{\alpha-1} \log Q_{\alpha}(\rho \| \sigma) ,
\quad\text{with}\ \
Q_{\alpha}(\rho \| \sigma)=\tr \rho^\alpha \sigma^{1-\alpha}; \\
&\widetilde{D}_{\alpha}(\rho \| \sigma):=\frac{1}{\alpha-1} \log \tilde{Q}_{\alpha}(\rho \| \sigma),
\quad\text{with}\ \
\widetilde{Q}_{\alpha}(\rho \| \sigma)=\tr {({\sigma}^{\frac{1-\alpha}{2\alpha}} \rho {\sigma}^{\frac{1-\alpha}{2\alpha}})}^\alpha;
\end{align}
otherwise, we set $D_\alpha(\rho\|\sigma)=\widetilde{D}_{\alpha}(\rho \| \sigma)=+\infty$.
\end{definition}
When $\alpha$ goes to infinity,  $\tilde{D}_\alpha(\rho\|\sigma)$ converges to the max-relative entropy~\cite{Datta_2009}
\begin{equation}
D_{\rm{max}}(\rho\|\sigma):=\inf\{\lambda~|~\rho \leq 2^\lambda \sigma\}.
\end{equation}

Based on these R\'enyi divergences, one can define R\'enyi generalizations of conditional entropy and coherent information, which are widely used in the analysis of non-asymptotic quantum information theory. For $\alpha\in (0,\infty)\setminus \{1\}$ and $\rho_{\A\B} \in \mc{P}(\A\B)$, the Petz and sandwiched conditional R\'enyi entropies are defined as
\begin{align}
\label{equ:conren}
H^{\uparrow}_\alpha(\A|\B)_\rho:&=-\inf_{\sigma_{\B} \in \mc{S}(\B)} D_{\alpha}(\rho_{\A\B}\|I_{\A} \ox \sigma_{\B}), \\ \label{def:con2}
H_\alpha(\A|\B)_\rho:&= -D_{\alpha}(\rho_{\A\B}\|I_{\A} \ox \rho_{\B}), \\ 
\label{equ:defsnad}
\widetilde{H}^{\uparrow}_\alpha(\A|\B)_\rho:&=-\inf_{\sigma_{\B} \in \mc{S}(\B)} \widetilde{D}_{\alpha}(\rho_{\A\B}\|I_{\A} \ox \sigma_{\B}), \\
\label{equ:defsnad1}
\widetilde{H}_\alpha(\A|\B)_\rho:&= -\widetilde{D}_{\alpha}(\rho_{\A\B}\|I_{\A} \ox \rho_{\B}).
\end{align}
Similarly, the Petz and sandwiched R\'enyi coherent information are defined as
\begin{align}
\label{def:mul1}
I_\alpha(\A\rangle \B)_\rho:&=\inf_{\sigma_{\B} \in \mc{S}(\B)} D_{\alpha}(\rho_{\A\B}\|I_{\A} \ox \sigma_{\B}), \\ 
\label{equ:cohrenlast}
\widetilde{I}_\alpha(\A\rangle \B)_\rho:&=\inf_{\sigma_{\B} \in \mc{S}(\B)} \widetilde{D}_{\alpha}(\rho_{\A\B}\|I_{\A} \ox \sigma_{\B}).
\end{align}

In the following proposition, we summarize several important properties of these quantum R\'enyi divergences.
\begin{proposition}
\label{prop:mainpro}
For $\rho ,\sigma \in \mc{P}(\mc{H})$, the Petz R\'enyi divergence and the sandwiched R{\'e}nyi
divergence satisfy the following properties.
\begin{enumerate}[(i)]
  \item Monotonicity in $\sigma$~\cite{MDS+13,Mosonyi_2017}: if $\sigma' \geq \sigma$, then $D_{\alpha}(\rho \| \sigma') \leq D_{\alpha}(\rho \| \sigma)$,
      for $\alpha \in [0,2]$
      and $\widetilde{D}_{\alpha}(\rho \| \sigma') \leq \widetilde{D}_{\alpha}(\rho \| \sigma)$, for $\alpha \in [\frac{1}{2},+\infty)$;
   \item  Data processing inequality \cite{petz1986quasi, Frank_2013, Beigi_2013, MDS+13, Wilde_2014, Mosonyi_2017}: for any quantum channel $\mscr{N}$ from $\mc{L}(\mc{H})$ to $\mc{L}(\mc{H}')$, we have
      \begin{align}
      D_{\alpha}(\mscr{N}(\rho) \| \mscr{N}(\sigma)) &\leq D_{\alpha}(\rho \| \sigma), \quad \forall \alpha \in [0,2], \\
       \widetilde{D}_{\alpha}(\mscr{N}(\rho) \| \mscr{N}(\sigma)) &\leq \widetilde{D}_{\alpha}(\rho \| \sigma), \quad \forall \alpha \in [\tfrac{1}{2},\infty);
      \end{align}
 \item Duality relation~\cite{Hayashi_2016}: for pure state $\rho_{ABC}\in \mc{S}(ABC)$ and $\tau_{\A} \in \mc{P}(A)$ such that $\supp(\rho_A) \subseteq \supp(\tau_{\A})$, we have
  \begin{equation}
 \begin{split}
\inf_{\sigma_{\B} \in \mc{S}(\B)}\widetilde{D}_\alpha(\rho_{\A\B} \|\tau_{\A} \ox \sigma_{\B})&=- \inf_{\sigma_{\C} \in \mc{S}(C)}\widetilde{D}_\beta(\rho_{\A\C} \|\tau^{-1}_A \ox \sigma_{\C}),~\text{for}~\alpha \in [\tfrac{1}{2},+\infty],~\frac{1}{\alpha}+\frac{1}{\beta}=2,\\
 \inf_{\sigma_{\B} \in \mc{S}(\B)}
D_\alpha(\rho_{\A\B} \|\tau_{\A} \ox \sigma_{\B})&=-\widetilde{D}_\beta(\rho_{\A\C} \|\tau^{-1}_A \ox \rho_C),~\text{for}~\alpha \in [\tfrac{1}{2},+\infty],~\beta=\frac{1}{\alpha},\\
D_\alpha(\rho_{\A\B} \|\tau_{\A} \ox \rho_{\B})&=-D_\beta(\rho_{\A\C} \|\tau^{-1}_A \ox \rho_C),~\text{for}~\alpha \in [0,2],~\alpha+\beta=2;
 \end{split}
 \end{equation}
\item Isometry invariance~\cite{MDS+13}: Let $\rho_{\A\B} \in \mc{S}(\A\B)$ and Let $V_{\A \rightarrow \C}$ and $V_{\B \rightarrow \D}$ be isometries. Then all quantum conditional R\'enyi entropies and quantum R\'enyi coherent informations defined in Eq.~(\ref{equ:conren})--Eq.~(\ref{equ:cohrenlast}) are invariant under local isometries, i.e.,
\begin{equation}
    F(\rho_{\A\B})=F(V_{\A\rightarrow \C} \ox V_{\B \rightarrow \D} \rho_{\A\B}V^*_{\A\rightarrow \C} \ox V^*_{\B \rightarrow \D}),
\end{equation}
where $F$ denotes any of these quantities;

\item Additivity~\cite{MDS+13}:  For any states $\rho_{\A\B}\in\mathcal{S}(\A\B)$ and $\sigma_{\A'\B'}\in\mathcal{S}(\A'\B')$, all quantum conditional R\'enyi entropies and quantum R\'enyi coherent informations defined in Eq.~(\ref{equ:conren})--Eq.~(\ref{equ:cohrenlast}) are additive under tensor products, i.e.,
\begin{equation}
    F(\rho_{\A\B} \ox \sigma_{A'B'})=F(\rho_{\A\B})+F(\sigma_{\A'\B'}),
\end{equation}
where $F$ denotes any of these quantities.
\end{enumerate}
\end{proposition}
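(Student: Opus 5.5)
Since every item of Proposition~\ref{prop:mainpro} is a known property with the cited sources, the plan is to reduce each one to a standard operator inequality rather than re-derive the theory. The order is (i) and (ii) first, then (iv) as an easy consequence, then (iii), and finally (v), which uses (iii).

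\emph{Item (i).} For the Petz quantity, note that $t\mapsto t^{1-\alpha}$ is operator monotone for $\alpha\in[0,1]$ and operator anti-monotone for $\alpha\in[1,2]$. Hence $\tr\rho^\alpha\sigma'^{1-\alpha}$ is $\geq$ (resp.\ $\leq$) $\tr\rho^\alpha\sigma^{1-\alpha}$. The sign of the prefactor $\frac{1}{\alpha-1}$ then turns both cases into $D_\alpha(\rho\|\sigma')\le D_\alpha(\rho\|\sigma)$.

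For the sandwiched quantity, write $\widetilde Q_\alpha(\rho\|\sigma)=\tr\bigl(\rho^{1/2}\sigma^{\frac{1-\alpha}{\alpha}}\rho^{1/2}\bigr)^\alpha$ (same nonzero spectrum). For $\alpha\in[\frac12,1)$ we have $\frac{1-\alpha}{\alpha}\in(0,1]$, so $\sigma\mapsto\sigma^{\frac{1-\alpha}{\alpha}}$ is operator monotone. For $\alpha>1$ we have $\frac{1-\alpha}{\alpha}\in(-1,0)$, so it is anti-monotone. In both cases $X\mapsto\tr X^\alpha$ is monotone on positive operators, which gives the claim.

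\emph{Item (ii).} For Petz, use the Stinespring representation to reduce to isometries and partial traces. Isometries leave $Q_\alpha$ unchanged. For partial traces, use Lieb concavity ($\alpha\in[0,1)$) and Ando convexity ($\alpha\in(1,2]$) of $(\rho,\sigma)\mapsto\tr\rho^\alpha\sigma^{1-\alpha}$, combined with the standard twirling argument: the partial trace is an average of unitary conjugations followed by tensoring with the maximally mixed state.

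For the sandwiched case I would invoke Frank--Lieb: joint concavity/convexity of $\widetilde Q_\alpha$ for $\alpha\in[\frac12,1)$ and $\alpha>1$, with the same twirling reduction. The limiting cases $\alpha\to1$ and $\alpha\to\infty$ follow by continuity.

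\emph{Item (iv).} For local isometries $V_\A\otimes V_\B$, the image state commutes appropriately with $V_\A V_\A^*\otimes V_\B V_\B^*$. Moreover, any $\sigma_\D$ may be compressed to $V_\B V_\B^*$ without increasing the divergence, by data processing (ii) under the pinching onto $V_\B V_\B^*$ and its complement. This shows the infimum over $\sigma_\D$ reduces to $V_\B\sigma_\B V_\B^*$, giving invariance.

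\emph{Item (iii)} is where I expect the real work. For pure $\rho_{\A\B\C}$, I would write $\widetilde Q_\alpha$ via the variational (Hölder-dual) formula $\widetilde Q_\alpha(\rho\|\sigma)^{1/\alpha}=\sup_{\tau}\tr(\cdot)$ or $\inf_\tau$, depending on the sign of $\alpha-1$. Then transfer operators from $\B$ to $\C$ using the Schmidt decomposition, i.e.\ the transpose trick $X_\B\ket{\rho}=X^T_{\A\C}$-type identities. Finally, exchange the $\inf_{\sigma_\B}$ with the inner optimization via Sion's minimax theorem. Compactness and convexity come from (i)--(ii).

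The Petz-to-sandwiched duality (second line) uses the closed-form optimizer $\sigma_\B^*\propto(\tr_\A\rho_{\A\B}^\alpha\tau_\A^{1-\alpha})^{1/\alpha}$ of the Petz infimum (Sibson's identity), followed by the same Schmidt-transfer argument. The third line is a direct computation: $\rho_{\A\B}^{\alpha}$ and $\rho_{\A\C}$ are linked through the common spectrum of $\rho_{\A\B}$ and $\rho_\C$.

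The main obstacle is handling supports and the minimax exchange rigorously, especially at $\alpha=\frac12$ and $\alpha=\infty$. There I would argue by continuity in $\alpha$.

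\emph{Item (v).} Additivity is immediate for the non-optimized quantities $H_\alpha$ and $\widetilde H_\alpha$, since the states and their marginals factorize. For the optimized ones, subadditivity follows by restricting to product $\sigma$.

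For Petz, the reverse inequality follows from the explicit Sibson optimizer, which is a tensor product for product $\rho$. For sandwiched, I would prove the reverse inequality via duality (iii): the dual expression is again an optimization, and its restriction to products yields the opposite inequality. Combining the two directions gives equality.
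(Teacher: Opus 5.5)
Note first that the paper gives no proof of this proposition; each item is imported from the cited references. Your outline follows the standard route of those references: operator monotonicity for (i); Lieb/Ando and Frank--Lieb joint concavity/convexity with the twirling reduction for (ii); Sibson's closed form, Schmidt-basis transfer and Sion's minimax for (iii); product restriction plus duality for (v). Wherever (ii) and (iii) apply, it is sound.

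\textbf{The gap is the compression step in (iv).} As stated, (iv) covers $H^{\uparrow}_\alpha$, $I_\alpha$, $\widetilde{H}^{\uparrow}_\alpha$ and $\widetilde{I}_\alpha$ for every $\alpha\in(0,\infty)\setminus\{1\}$. But (ii) supplies data processing under the pinching onto $V_{\B}V_{\B}^*$ only for Petz $\alpha\le 2$ and sandwiched $\alpha\ge\frac12$. Outside these ranges the step is not merely unjustified; it is false pointwise. Take trivial $\A$ and $\rho_{\B}=\proj{w}$. For Petz with $\alpha>1$ your step then requires $\bra{w}(V_{\B}^*\sigma_{\D}V_{\B})^{1-\alpha}\ket{w}\le\bra{w}V_{\B}^*\sigma_{\D}^{1-\alpha}V_{\B}\ket{w}$ for all $\sigma_{\D}$. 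This is an operator Jensen inequality, and it holds for all isometries only if $t^{1-\alpha}$ is operator convex, i.e.\ only if $\alpha\le2$. The same obstruction appears for sandwiched $\alpha<\frac12$.

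The invariance itself is still true, but you must prove it after optimizing rather than pointwise:
\begin{itemize}
\item For Petz quantities, use Sibson's identity $\inf_{\sigma_{\B}}D_\alpha(\rho_{\A\B}\|I_{\A}\ox\sigma_{\B})=\frac{\alpha}{\alpha-1}\log\tr\bigl(\tr_{\A}\rho_{\A\B}^{\alpha}\bigr)^{1/\alpha}$. It holds for every $\alpha$ and is manifestly invariant, because $\tr_{\C}$ of the $\alpha$-th power of the embedded state equals $V_{\B}(\tr_{\A}\rho_{\A\B}^{\alpha})V_{\B}^*$.
\item For sandwiched $\alpha<\frac12$, replace the compression $V_{\B}^*\sigma_{\D}V_{\B}$ by $\tau_{\B}:=(V_{\B}^*\sigma_{\D}^{\gamma}V_{\B})^{1/\gamma}$ with $\gamma=\frac{1-\alpha}{\alpha}$. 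Writing $\widetilde{Q}_\alpha(\rho\|\sigma)=\tr(\rho^{1/2}\sigma^{\gamma}\rho^{1/2})^{\alpha}$, this $\tau_{\B}$ reproduces exactly the same $\widetilde{Q}_\alpha$, and $\tr\tau_{\B}\le1$ by Cauchy interlacing.
\end{itemize}
In every case you also need a step your sketch omits: the candidate on $\B$ is subnormalized, so you must renormalize it, using (i) or plain scaling, before comparing with the infimum over $\mc{S}(\B)$.

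\textbf{Range restrictions elsewhere.} Your reverse inequality in (v) for $\widetilde{H}^{\uparrow}_\alpha$ and $\widetilde{I}_\alpha$ goes through (iii), so it only covers $\alpha\ge\frac12$. Also, the convex--concave structure that Sion's theorem needs in (iii) does not come from (i)--(ii). It comes from operator concavity/convexity of $t\mapsto t^{(1-\alpha)/\alpha}$, which is exactly where $\alpha\ge\frac12$ enters. You should state these restrictions explicitly. They are harmless for the paper, which invokes (iv)--(v) only for sandwiched quantities at $\alpha=1+s$ and Petz quantities at $\alpha\in(0,1)$; your arguments cover those ranges, with Sibson's identity giving Petz additivity at every $\alpha$.

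\textbf{A technical point in (i).} For $\alpha>1$, generalized inverse powers are not anti-monotone when $\supp\sigma\subsetneq\supp\sigma'$. Run the argument on $\sigma+\epsilon I\le\sigma'+\epsilon I$ and let $\epsilon\downarrow0$, using $\supp\rho\subseteq\supp\sigma$.
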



\section{Quantum information decoupling}

\label{sec:main}
\subsection{Achievability bound for quantum information decoupling}
\label{sub:achi}

Let $\rho_{\A\E} \in \mathcal{S}(\A\E)$ be a bipartite state. Quantum information decoupling aims to suppress correlations between systems $\A$ and $\E$ by applying a random unitary transformation on $\A$,
followed by a quantum channel $\mathcal{T}_{\A \rightarrow \C}$.

For a given state $\rho_{\A\E}$, we quantify the decoupling performance of $\mathcal{T}_{\A \rightarrow \C}$ by the expected quantum relative entropy between the resulting state and the ideal product state. Specifically, the decoupling error is defined as
\begin{equation}
  \epsilon_{\rm{dec}}(\mathcal{T}_{\A \rightarrow \C},\rho_{\A\E}) :=\mathbb{E}_{\mathbb{U}(\A)}  D(\mathcal{T}_{\A \rightarrow \C}(U_{\A} \rho_{\A\E} U_{\A}^*)\|\omega_{\C} \otimes \rho_{\E}),
\end{equation}
where the expectation is taken with respect to the Haar measure on the unitary group $\mathbb{U}(\A)$, and
$\omega_{\C}:=\mathcal{T}_{\A \rightarrow \C}(\frac{I_{\A}}{|\A|})$ denotes the output of the maximally mixed state on $A$ under $\mathcal{T}_{\A \rightarrow \C}$.

In this subsection, we study the behavior of $\epsilon_{\rm{dec}}(\mathcal{T}_{\A \rightarrow \C},\rho_{\A\E})$ for any completely positive and trace non-increasing map $\mathcal{T}_{\A \rightarrow \C}$. Our main result establishes a one-shot upper bound on the decoupling error.

\begin{theorem}
\label{thm:maindec}
For any bipartite state $\rho_{\A\E} \in \mathcal{S}(\A\E)$, non-zero, completely positive and trace non-increasing map $\mathcal{T}_{\A \rightarrow \C}$ and $s \in (0,1]$, the decoupling error satisfies
    \begin{equation}
    \label{equ:maindec}
        \begin{split}
       \epsilon_{\rm{dec}}(\mathcal{T}_{\A \rightarrow \C},\rho_{\A\E}) 
            \leq 2G_s 2^{-s\widetilde{H}_{1+s}(\A|\E)_\rho-s\widetilde{H}_{1+s}(\A'|\C)_\omega},
        \end{split}
    \end{equation}
    where $\omega_{\A'\C}=\mc{T}_{\A \rightarrow \C}(\Phi_{\A'\A})$ and $G_s:=\sup_{r>0} \frac{\log(1+r)}{r^s}  \leq \frac{1}{\ln 2} s^{s-1}(1-s)^{1-s}$ for $s\in(0,1]$.
    \end{theorem}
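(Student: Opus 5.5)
We will bound the relative entropy by a Rényi-type quantity through a scalar inequality, and then evaluate the Haar average with the standard second-moment (Schur--Weyl) calculus. Write $\sigma:=\omega_{\C}\ox\rho_{\E}$ and $\rho^U_{\C\E}:=\mc{T}_{\A\to\C}(U_{\A}\rho_{\A\E}U_{\A}^*)$. The starting point is the elementary bound $\log(1+r)\le G_s r^s$ for $r\ge 0$, which is the definition of $G_s$. We combine it with the trace inequality of~\cite{sharp25}. In the form we intend to use, it controls $D(\rho\|\sigma)$ by a function of the ``relative density'' $\sigma^{-1/2}\rho\,\sigma^{-1/2}$, roughly
\[
D(\rho\|\sigma)\le \tr\rho\log\big(I+\sigma^{-1/2}(\rho-\sigma)\sigma^{-1/2}\big)
\]
up to a symmetrization. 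Here the perturbation $\Delta_U:=\rho^U_{\C\E}-\sigma$ has vanishing Haar mean, because $\mathbb{E}_U\rho^U_{\C\E}=\omega_{\C}\ox\rho_{\E}$.

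Applying $\log(1+x)\le G_s x^s$ in operator/trace form, and splitting the positive and negative parts of $\Delta_U$ (which is where the factor $2$ enters), we aim to reach
\[
D(\rho^U\|\sigma)\le 2G_s\,\tr\Big[\big(\sigma^{-\frac{s}{2(1+s)}}\Delta_U\,\sigma^{-\frac{s}{2(1+s)}}\big)^{2}\Big]^{\cdots}
\]
or, more precisely, a sandwiched quantity $\widetilde Q_{1+s}$-like expression in $\Delta_U$ relative to $\sigma$. We then take the Haar expectation, using concavity ($s\le 1$) and Jensen to move $\mathbb{E}_U$ inside a power. This reduces everything to the second moment $\mathbb{E}_U\,\tr[(\sigma^{-a}\Delta_U\sigma^{-a})^2]$, possibly with the freedom to replace $\sigma$ by $\omega_{\C}\ox\tau_{\E}$.

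The second moment is computed exactly as in Dupuis' decoupling theorem. Using the Choi representation~\eqref{equ:choi} and $\mathbb{E}_U U^{\ox2}XU^{*\ox2}=\alpha I+\beta F$, the swap trick gives
\[
\mathbb{E}_U\tr\big[(\tilde\rho^U-\tilde\sigma)^2\big]\le |\A|^2\,\tr\tilde\rho_{\A\E}^2\,\tr\tilde\omega_{\A'\C}^2 ,
\]
where tildes denote conjugation by the weights $\rho_{\E}^{-a}$ and $\omega_{\C}^{-a}$. We then interpolate between the exponent $2$ moment and the trivial exponent $1$ via the Riesz--Thorin/Hölder interpolation idea that underlies \cite{cheng2024jointstatechanneldecouplingoneshot}, choosing the weights with exponent $a=\frac{s}{2(1+s)}$. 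The resulting factors are $\widetilde Q_{1+s}(\rho_{\A\E}\|I\ox\rho_{\E})^{\cdot}$ and $\widetilde Q_{1+s}(\omega_{\A'\C}\|I\ox\omega_{\C})^{\cdot}$. Since $|\A|$-factors cancel, these combine to $2^{-s\widetilde H_{1+s}(\A|\E)_\rho-s\widetilde H_{1+s}(\A'|\C)_\omega}$. Trace non-increasing $\mc{T}$ is handled by the same computation, since only positivity and the Choi identity are used.

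The main obstacle is the interpolation step: obtaining the sandwiched order-$(1+s)$ quantities for all $s\in(0,1]$ rather than only the collision ($s=1$) case of~\cite{HAP24}. We would first try to apply the trace inequality of~\cite{sharp25} directly in the form $D(\rho\|\sigma)\le G_s\,\mathbb{E}$-compatible $\widetilde Q_{1+s}(\rho\|\sigma)-1$-type bounds. After that, we would use an operator-convexity/pinching-free argument in which the map $\Delta\mapsto\tr|\sigma^{-a}\Delta\sigma^{-a}|^{1+s}$ is bounded by complex interpolation between $1+s=1$ (trivially $\le 2$) and $1+s=2$ (the swap computation). If that fails, we would fall back on a Hölder-weighted argument on the joint system $\C\E$.
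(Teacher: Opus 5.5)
Your plan has a genuine gap at its first step, and the step that would produce the order-$(1+s)$ quantities is left open. The inequality of \cite{sharp25} (Lemma~\ref{lemm:sharp_one-shot}) controls $\tr\rho(\log(\rho+\sigma)-\log\sigma)$ only for a \emph{positive} increment. It does not control $\tr\rho^U(\log(\sigma+\Delta_U)-\log\sigma)$ for the signed, mean-zero, non-commuting perturbation $\Delta_U=\rho^U-\sigma$. You give no mechanism by which ``splitting positive and negative parts'' would yield $D(\rho^U\|\sigma)\lesssim 2G_s\|\sigma^{-\frac{s}{2(1+s)}}\Delta_U\sigma^{-\frac{s}{2(1+s)}}\|_{1+s}^{1+s}$. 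If you apply the lemma pointwise in $U$ instead (via $\log\rho^U\le\log(\rho^U+\sigma)$), you only get $G_s\widetilde Q_{1+s}(\rho^U\|\sigma)$, which for a channel is at least $G_s$ and does not decay.

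Even granting a pointwise bound of the form you want, Hölder plus Jensen reduces $\mathbb{E}_U\|\sigma^{-\frac{s}{2(1+s)}}\Delta_U\sigma^{-\frac{s}{2(1+s)}}\|_{1+s}^{1+s}$ to a second moment only after changing the weights to $\sigma^{-1/4}$. That produces sandwiched collision entropies, roughly $2^{-\frac{1+s}{2}(\widetilde H_2(\A|\E)_\rho+\widetilde H_2(\A'|\C)_\omega)}$, not $\widetilde H_{1+s}$. What you would need instead is a Stein-type interpolation with $s$-dependent non-commutative weights, bilinear in $(\rho_{\A\E},\omega_{\A'\C})$. You do not carry it out, and nothing indicates it would return the constant $2G_s$.

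The missing idea is to move the Haar average \emph{inside the logarithm} before any R\'enyi-type estimate. Since $\mathbb{E}_U\rho^U=\sigma$, only $\mathbb{E}_U\tr\rho^U\log\rho^U$ is nonlinear. The paper writes $\rho^U=|\A|^2\bra{\Phi_{\A\A'}}U\rho_{\A\E}U^*\ox\omega_{\A'\C}\ket{\Phi_{\A\A'}}$. It then uses three facts: $\Phi\ox\log T=\log(\Phi\ox T)$, Haar invariance, and the mutual orthogonality of the projectors $U^{i*}U^*\Phi_{\A\A'}UU^i$ (with $U^i$ Heisenberg--Weyl). Together these rewrite the term as $\tr(\rho_{\A\E}\ox\omega_{\A'\C})\log(\cdot)$, with an argument depending linearly on $U^{i*}U^*\Phi UU^i\ox U^{i*}U^*\Phi UU^i$.

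Operator concavity of the logarithm (Lemma~\ref{lem:inver}) then pulls $\mathbb{E}_U$ inside. The second-moment formula (Lemma~\ref{lem:avehar}) evaluates the averaged argument, which after discarding the negative swap terms is at most $\frac{|\A|^2}{|\A|^2-1}\bigl(I_{\A\A'}\ox\rho_{\E}\ox\omega_{\C}+(1-\frac{2}{|\A|^2})\rho_{\A\E}\ox\omega_{\A'\C}\bigr)$. Now the increment $\rho_{\A\E}\ox\omega_{\A'\C}$ is positive and small relative to $I_{\A\A'}\ox\rho_{\E}\ox\omega_{\C}$. So Lemma~\ref{lemm:sharp_one-shot} applies verbatim and yields $G_s\widetilde Q_{1+s}(\rho_{\A\E}\ox\omega_{\A'\C}\|I_{\A\A'}\ox\rho_{\E}\ox\omega_{\C})=G_s2^{-s\widetilde H_{1+s}(\A|\E)_\rho-s\widetilde H_{1+s}(\A'|\C)_\omega}$ for every $s\in(0,1]$ simultaneously, with no interpolation.

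The factor $2$ does not come from splitting $\Delta_U$. It comes from absorbing the leftover additive term $\log\frac{|\A|^2}{|\A|^2-1}\tr\omega_{\C}$, using $\widetilde Q_{1+s}\ge|\A|^{-2s}\tr\omega_{\C}$, the definition of $G_s$ at $r=(|\A|^2-1)^{-1}$, and concavity of $x\mapsto x^s$.
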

\begin{proof}
 When $|A|=1$, $\epsilon_{\rm{dec}}(\mathcal{T}_{\A \rightarrow \C},\rho_{\A\E})=0$, Eq.~(\ref{equ:maindec}) holds trivially. Hence, for the rest we assume that $|A| \geq 2$.
 Firstly, by direct calculation, we have
    \begin{equation}
    \label{equ:gendec1}
    \begin{split}
   &\epsilon_{\rm{dec}}(\mathcal{T}_{\A \rightarrow \C},\rho_{\A\E})   \\
   =&\mathbb{E}_{\mathbb{U}_{\A}} \tr \mc{T}_{\A \rightarrow \C}(U_{\A} \rho_{\A\E} U_{\A}^*) \log \mc{T}_{\A \rightarrow \C} (U_{\A} \rho_{\A\E} U_{\A}^*)-\mathbb{E}_{\mathbb{U}_{\A}}  \tr \mc{T}_{\A \rightarrow \C}(U_{\A} \rho_{\A\E} U_{\A}^*) \log (\omega_{\C}\ox \rho_{\E}) \\
   =&\mathbb{E}_{\mathbb{U}_{\A}} \tr \mc{T}_{\A \rightarrow \C}(U_{\A} \rho_{\A\E} U_{\A}^*) \log \mc{T}_{\A \rightarrow \C} (U_{\A} \rho_{\A\E} U_{\A}^*)-\tr \omega_{\C}\ox \rho_{\E}\log (\omega_{\C}\ox \rho_{\E}).
    \end{split}
    \end{equation}
Next, we use Eq.~(\ref{equ:choi}) to evaluate the first term in Eq.~(\ref{equ:gendec1}):
\begin{equation}
\label{equ:pregiv}
    \begin{split}
         &\mathbb{E}_{\mathbb{U}_{\A}} \tr \mc{T}_{\A \rightarrow \C}(U_{\A} \rho_{\A\E} U_{\A}^*) \log \mc{T}_{\A \rightarrow \C} (U_{\A} \rho_{\A\E} U_{\A}^*) \\
        =&\mathbb{E}_{\mathbb{U}_{\A}} |\A|^2\tr\bra{\Phi_{\A\A'}} U_{\A}\rho_{\A\E}U_{\A}^*\ox \omega_{\A'\C} \ket{\Phi_{\A\A'}} \log  |\A|^2\bra{\Phi_{\A\A'}} U_{\A}\rho_{\A\E}U_{\A}^*\ox \omega_{\A'\C} \ket{\Phi_{\A\A'}} \\
        =&\mathbb{E}_{\mathbb{U}_{\A}} |\A|^2\tr (U_{\A}\rho_{\A\E}U_{\A}^*\ox \omega_{\A'\C}) ( \Phi_{\A\A'}\ox \log  |\A|^2\bra{\Phi_{\A\A'}} U_{\A}\rho_{\A\E}U_{\A}^*\ox \omega_{\A'\C} \ket{\Phi_{\A\A'}} )
    \end{split}
\end{equation}
Because for any pure state $\phi_{\A} \in \mathcal{S}(\A)$ and 
$T_{\B} \in \mathcal{P}(\B)$, it holds that
\begin{equation}
\label{equ:purel}
    \phi_{\A} \otimes \log T_{\B}=\log (\phi_{\A} \otimes T_{\B}).
\end{equation}
(Here, by our convention, the operator logarithm is only acting on the relevant support.)
Eq.~(\ref{equ:pregiv}) and Eq.~(\ref{equ:purel}) imply that 

\begin{equation}
\label{equ:cyc}
    \begin{split}
     &\mathbb{E}_{\mathbb{U}_{\A}} \tr \mc{T}_{\A \rightarrow \C}(U_{\A} \rho_{\A\E} U_{\A}^*) \log \mc{T}_{\A \rightarrow \C} (U_{\A} \rho_{\A\E} U_{\A}^*) \\
     =& \mathbb{E}_{\mathbb{U}_{\A}} |\A|^2\tr (U_{\A}\rho_{\A\E}U_{\A}^*\ox \omega_{\A'\C})   \log (\Phi_{\A\A'}\ox |\A|^2\bra{\Phi_{\A\A'}} U_{\A}\rho_{\A\E}U_{\A}^*\ox \omega_{\A'\C} \ket{\Phi_{\A\A'}} ) \\
        =& \mathbb{E}_{\mathbb{U}_{\A}} |\A|^2\tr (\rho_{\A\E}\ox \omega_{\A'\C})  \log ( U_{\A}^*\Phi_{\A\A'}U_{\A}\ox |\A|^2\bra{\Phi_{\A\A'}} U_{\A}\rho_{\A\E}U_{\A}^*\ox \omega_{\A'\C} \ket{\Phi_{\A\A'}} ) \\
        =& \mathbb{E}_{\mathbb{U}_{\A}} \sum_{i=1}^{|\A|^2}\tr (\rho_{\A\E}\ox \omega_{\A'\C})   \log (U_{\A}^{i *}U_{\A}^*\Phi_{\A\A'}U_{\A}U_{\A}^{i}\ox |\A|^2\bra{\Phi_{\A\A'}} U_{\A}U_{\A}^i\rho_{\A\E}U_{\A}^{i*}U_{\A}^*\ox \omega_{\A'\C} \ket{\Phi_{\A\A'}} ) \\
         =& \mathbb{E}_{\mathbb{U}_{\A}} \tr (\rho_{\A\E}\ox \omega_{\A'\C})  \log (\sum_{i=1}^{|\A|^2}U_{\A}^{i *}U_{\A}^*\Phi_{\A\A'}U_{\A}U_{\A}^{i}\ox |\A|^2\bra{\Phi_{\A\A'}} U_{\A}U_{\A}^i\rho_{\A\E}U_{\A}^{i*}U_{\A}^*\ox \omega_{\A'\C} \ket{\Phi_{\A\A'}} ) \\
        =& \mathbb{E}_{\mathbb{U}_{\A}} \tr \left(\rho_{\A\E}\ox \omega_{\A'\C}\right)  \log \left(|\A|^2 \sum_{i=1}^{|\A|^2} \tr_{\tilde{\A}\bar{\A}} (U_{\A}^{i *}U_{\A}^*\Phi_{\A\A'}U_{\A}U_{\A}^{i}\ox \Phi_{\tilde{\A}\bar{\A}}\ox I_{\C\E})( I_{\A\A'}\ox U_{\tilde{\A}}U^i_{\tilde{\A}}\rho_{\tilde{\A}\E}U^{i *}_{\tilde{\A}}U_{\tilde{\A}}^*\ox \omega_{\bar{\A}\C})  \right) \\
        =& \mathbb{E}_{\mathbb{U}_{\A}} \tr \left(\rho_{\A\E}\ox \omega_{\A'\C}\right)   \log\left(|\A|^2 \sum_{i=1}^{|\A|^2}\tr_{\tilde{\A}\bar{\A}} (U_{\A}^{i *}U_{\A}^*\Phi_{\A\A'}U_{\A}U_{\A}^{i}\ox U^{i *}_{\tilde{\A}}U^*_{\tilde{\A}}\Phi_{\tilde{\A}\bar{\A}}U_{\tilde{\A}}U^i_{\tilde{\A}}\ox I_{\C\E})( I_{\A\A'}\ox \rho_{\tilde{\A}\E}\ox \omega_{\bar{\A}\C})  \right) ,
    \end{split}
\end{equation}
where $\{U_{\A}^i\}_{i=1}^{|\A|^2}$ denotes the set of Heisenberg–Weyl operators, the fourth line is due to the unitary invariance of the Haar measure, the fifth line is from the mutual orthogonality of $\{U_{\A}^{i *}U_{\A}^* \ket{\Phi}_{\A\A'}\}_{i=1}^{|\A|^2}$ for any $U_{\A} \in \mathbb{U}(\A)$ and  the block-diagonal functional calculus on the support, and the sixth and the last line come from the cyclicity of the trace.

The pinching inequality gives that 
\begin{equation}
\label{equ:newpin}
    \rho_{\A\E}\ox \omega_{\A'\C} \leq \sum_{i=1}^{|\A|^2}U_{\A}^{i *}U_{\A}^*\Phi_{\A\A'}U_{\A}U_{\A}^{i}\ox |\A|^2\bra{\Phi_{\A\A'}} U_{\A}U_{\A}^i\rho_{\A\E}U_{\A}^{i*}U_{\A}^*\ox \omega_{\A'\C} \ket{\Phi_{\A\A'}},  \quad \forall U_{\A} \in \mathbb{U}(\A).
\end{equation}
Eq.~(\ref{equ:newpin}) implies that 
\[
\supp(\rho_{\A\E}\ox \omega_{\A'\C}) \subseteq \supp (\sum_{i=1}^{|\A|^2}U_{\A}^{i *}U_{\A}^*\Phi_{\A\A'}U_{\A}U_{\A}^{i}\ox |\A|^2\bra{\Phi_{\A\A'}} U_{\A}U_{\A}^i\rho_{\A\E}U_{\A}^{i*}U_{\A}^*\ox \omega_{\A'\C} \ket{\Phi_{\A\A'}}), \quad \forall U_{\A} \in \mathbb{U}(\A).
\]
Therefore, we can further use Lemma~\ref{lem:inver} to upper bound Eq.~(\ref{equ:cyc}) as

\begin{equation}
\label{equ:gendec2}
    \begin{split}
        &\mathbb{E}_{\mathbb{U}_{\A}} \tr \mc{T}_{\A \rightarrow \C}(U_{\A} \rho_{\A\E} U_{\A}^*) \log \mc{T}_{\A \rightarrow \C} (U_{\A} \rho_{\A\E} U_{\A}^*) \\
        \leq &   \tr \left(\rho_{\A\E}\ox \omega_{\A'\C}\right) \log  \left( |\A|^2 \sum_{i=1}^{|\A|^2}  \mathbb{E}_{\mathbb{U}_{\A}}\tr_{\tilde{\A}\bar{\A}} (U_{\A}^{i *}U_{\A}^*\Phi_{\A\A'}U_{\A}U_{\A}^{i}\ox U^{i *}_{\tilde{\A}}U^*_{\tilde{\A}}\Phi_{\tilde{\A}\bar{\A}}U_{\tilde{\A}}U^i_{\tilde{\A}}\ox I_{\C\E})( I_{\A\A'}\ox \rho_{\tilde{\A}\E}\ox \omega_{\bar{\A}\C})  \right)  \\
        =&\tr \left(\rho_{\A\E}\ox \omega_{\A'\C}\right)   \log \left( |\A|^4  \mathbb{E}_{\mathbb{U}_{\A}}\tr_{\tilde{\A}\bar{\A}} (U_{\A}^*\Phi_{\A\A'}U_{\A}\ox U^*_{\tilde{\A}}\Phi_{\tilde{\A}\bar{\A}}U_{\tilde{\A}}\ox I_{\C\E})( I_{\A\A'}\ox \rho_{\tilde{\A}\E}\ox \omega_{\bar{\A}\C})  \right) \\
        =& \tr (\rho_{\A\E}\ox \omega_{\A'\C})   \log\Bigg(\tr_{\tilde{\A}\bar{\A}} \Big( \big(\frac{|\A|^2}{|\A|^2-1}I_{\A\A'\tilde{\A}\bar{\A}}+\frac{|\A|^2}{|\A|^2-1}F_{\A\tilde{\A}}\ox F_{A'\bar{\A}}-\frac{|\A|^2}{|\A|^3-|\A|}F_{\A\tilde{\A}}\ox I_{\A'\bar{\A}} \\
        &-\frac{|\A|^2}{|\A|^3-|\A|}I_{\A\tilde{\A}}\ox F_{\A'\bar{\A}}\big)\ox I_{\C\E}\Big)\big( I_{\A\A'}\ox \rho_{\tilde{\A}\E}\ox \omega_{\bar{\A}\C}\big)  \Bigg) \\
        =& \tr \rho_{\A\E}\ox \omega_{\A'\C}   \log (\frac{|\A|^2}{|\A|^2-1}I_{\A\A'}\ox \rho_{\E}\ox \omega_{\C}+\frac{|\A|^2}{|\A|^2-1}\rho_{\A\E}\ox \omega_{\A'\C}-\frac{|\A|^2}{|\A|^3-|\A|}\rho_{\A\E}\ox I_{\A'}\ox \omega_{\C}\\
        &-\frac{|\A|^2}{|\A|^3-|\A|}I_{\A}\ox \rho_{\E}\ox \omega_{\A'\C}) \\
        \leq &\tr \rho_{\A\E}\ox \omega_{\A'\C}   \log  \left\{ \frac{|\A|^2}{|\A|^2-1}\left(I_{\A\A'}\ox \rho_{\E}\ox \omega_{\C}+(1-\frac{2}{|\A|^2})\rho_{\A\E}\ox \omega_{\A'\C}\right) \right\} \\
        =  &\tr \rho_{\A\E}\ox \omega_{\A'\C} \log \left\{I_{\A\A'}\ox \rho_{\E}\ox \omega_{\C}+(1-\frac{2}{|\A|^2})\rho_{\A\E}\ox \omega_{\A'\C}\right\}+ \log \frac{|\A|^2}{|\A|^2-1}\tr \omega_{\C} ,
    \end{split}
\end{equation}
where the first equality follows from the unitary invariance of the Haar measure, and the second equality follows from Lemma~\ref{lem:avehar}. 
To justify the second inequality, we used the operator inequalities~\cite[Lemma~A.2]{tomamichel2013frameworknonasymptoticquantuminformation} 
$\rho_{\A\E}\ox I_{\A'}\ox\omega_{\C}
\geq \frac{1}{|\A|}\rho_{\A\E}\ox\omega_{\A'\C}$,
and
$I_{\A}\ox\rho_{\E}\ox\omega_{\A'\C}
\geq \frac{1}{|\A|}\rho_{\A\E}\ox\omega_{\A'\C}$.
Moreover, Eq.~(\ref{equ:newpin}), after averaging over $U_{\A}$, gives the required support inclusion for the operator mono. We then add $\epsilon I$ to both operators inside the logarithms, apply the operator monotonicity of the logarithm, and let $\epsilon\downarrow0$.

By Eq.~(\ref{equ:gendec1}) with Eq.~(\ref{equ:gendec2}), we have
\begin{equation}
\label{equ:new2}
    \begin{split}
         & \epsilon_{\rm{dec}}(\mathcal{T}_{\A \rightarrow \C},\rho_{\A\E})   \\
   \leq & \tr \rho_{\A\E}\ox \omega_{\A'\C} \left( \log \left\{I_{\A\A'}\ox \rho_{\E}\ox \omega_{\C}+(1-\frac{2}{|\A|^2})\rho_{\A\E}\ox \omega_{\A'\C}\right\} -\log ( I_{\A\A'}\ox \rho_{\E}\ox \omega_{\C})\right)
   \\
   &\quad+ \log \frac{|\A|^2}{|\A|^2-1}\tr \omega_{\C}.
    \end{split}
\end{equation}
For any $s \in (0,1]$, applying Lemma~\ref{lemm:sharp_one-shot} to $(1-\frac{2}{|\A|^2})\rho_{\A\E}\ox \omega_{\A'\C}$ and $I_{\A\A'}\ox \rho_{\E}\ox \omega_{\C}$ and using homogeneity of $\tilde{Q}_{1+s}$ in its first argument. yields
\begin{equation}
\label{equ:new3}
     \epsilon_{\rm{dec}}(\mathcal{T}_{\A \rightarrow \C},\rho_{\A\E}) \leq \log \frac{|\A|^2}{|\A|^2-1}\tr \omega_{\C}+(1-\frac{2}{|\A|^2})^sG_s \tilde{Q}_{1+s}(\rho_{\A\E}\ox \omega_{\A'\C}\|I_{\A\A'}\ox \rho_{\E}\ox \omega_{\C}).
\end{equation}

The data processing inequality implies, for every positive semidefinite operator $X$, $Y$ with $\supp(X) \subseteq \supp(Y)$,
\begin{equation}
\label{equ:data}
    \tilde{Q}_{1+s}(X\|Y) \geq (\tr X)^{1+s} (\tr Y)^{-s}.
\end{equation}
Applied to the two operators in Eq~(\ref{equ:new3}) and Eq.~(\ref{equ:data}) gives
\begin{equation}
\label{equ:datapp}
   \tilde{Q}_{1+s}(\rho_{\A\E}\ox \omega_{\A'\C}\|I_{\A\A'}\ox \rho_{\E}\ox \omega_{\C}) \geq \frac{\tr \omega_{\C}}{|\A|^{2s}} .
\end{equation}
Consequently, Eq.~(\ref{equ:new3}) and Eq.~(\ref{equ:datapp}) imply 
\begin{equation}
\label{equ:new5}
    \epsilon_{\rm{dec}}(\mathcal{T}_{\A \rightarrow \C},\rho_{\A\E}) \leq \left[\left(1-\frac{2}{|\A|^2}\right)^sG_s +|\A|^{2s} \log \frac{|\A|^2}{|\A|^2-1} \right]\tilde{Q}_{1+s}(\rho_{\A\E}\ox \omega_{\A'\C}\|I_{\A\A'}\ox \rho_{\E}\ox \omega_{\C})  .
\end{equation}

By the definition of $G_s = \sup_{r>0} \frac{\log(1+r)}{r^s}$, evaluated at $r=(|\A|^2-1)^{-1}$, we have 
\begin{equation}
  |\A|^{2s} \log \frac{|\A|^2}{|\A|^2-1} \leq (\frac{|\A|^2}{|\A|^2-1})^sG_s.
\end{equation}
Since $x \mapsto x^s$ is concave on $(0,\infty)$ for $s \in (0,1]$,
\begin{equation}
\label{equ:conlog}
  \left(1-\frac{2}{|\A|^2}\right)^s+  \left(\frac{|\A|^2}{|\A|^2-1}\right)^s\leq 2\left[\frac{1}{2}\left(1-\frac{2}{|\A|^2}+\frac{|\A|^2}{|\A|^2-1}\right) \right]^s \leq 2.
\end{equation}
Combing Eq.~(\ref{equ:new5}) to Eq.~(\ref{equ:conlog}) gives 
\begin{align}
    \epsilon_{\rm{dec}}(\mathcal{T}_{\A \rightarrow \C},\rho_{\A\E}) 
    \leq 2G_s \tilde{Q}_{1+s}(\rho_{\A\E}\ox \omega_{\A'\C}\|I_{\A\A'}\ox \rho_{\E}\ox \omega_{\C}) 
    =2G_s 2^{-s\widetilde{H}_{1+s}(\A|\E)_\rho-s\widetilde{H}_{1+s}(\A'|\C)_\omega}.
\end{align}

We complete the proof.
\end{proof}

As a direct consequence of Theorem~\ref{thm:maindec}, we obtain an achievability bound on the error exponent of quantum information decoupling, defined as the asymptotic exponential rate of decay of the decoupling error $\epsilon_{\rm{dec}}(\mathcal{T}^{\otimes n}_{\A \rightarrow \C},\rho_{\A\E}^{\ox n})$ in the i.i.d.~limit.

\begin{corollary}
    Let $\rho_{\A\E} \in \mathcal{S}(\A\E)$ and $\mathcal{T}_{\A \rightarrow \C}$ be a quantum channel. Then the error exponent of quantum information decoupling satisfies 
    \begin{equation}
       \liminf_{n\to\infty} \frac{-1}{n} \log \epsilon_{\rm{dec}}(\mathcal{T}^{\otimes n}_{\A \rightarrow \C}, \rho_{\A\E}^{\ox n})
        \geq \sup_{0 < s < 1} s(\widetilde{H}_{1+s}(\A|\E)_\rho+\widetilde{H}_{1+s}(\A'|\C)_\omega).
    \end{equation}
\end{corollary}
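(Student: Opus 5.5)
The corollary follows by applying Theorem~\ref{thm:maindec} to the $n$-fold product $\rho_{\A\E}^{\ox n}$ and the map $\mathcal{T}^{\otimes n}_{\A\rightarrow\C}$, then using additivity of the sandwiched conditional R\'enyi entropies.

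First we check the hypotheses. $\mathcal{T}^{\ox n}$ is a quantum channel on $\A^n$, so it is non-zero, completely positive and trace non-increasing. Its normalized Choi state with respect to $\Phi_{\A^n\A'^n}=\Phi_{\A\A'}^{\ox n}$ is $\omega_{\A'\C}^{\ox n}$, and its output on the maximally mixed state is $\omega_{\C}^{\ox n}$. Note that the Haar average in the theorem is now over the full unitary group $\mathbb{U}(\A^n)$, which is exactly how $\epsilon_{\rm dec}$ is defined for the $n$-fold problem. Theorem~\ref{thm:maindec} then gives, for each fixed $s\in(0,1)$ and every $n$,
\begin{equation*}
\epsilon_{\rm{dec}}(\mathcal{T}^{\otimes n}_{\A \rightarrow \C}, \rho_{\A\E}^{\ox n}) \leq 2G_s\, 2^{-s\widetilde{H}_{1+s}(\A^n|\E^n)_{\rho^{\ox n}}-s\widetilde{H}_{1+s}(\A'^n|\C^n)_{\omega^{\ox n}}}.
\end{equation*}

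Next we single-letterize the exponent. By the additivity property in Proposition~\ref{prop:mainpro}(v), applied to $\widetilde{H}_{1+s}$ as defined in \eqref{equ:defsnad1}, we have $\widetilde{H}_{1+s}(\A^n|\E^n)_{\rho^{\ox n}}=n\widetilde{H}_{1+s}(\A|\E)_\rho$, and likewise for $\omega$. This is also immediate from the definition, since $\rho_{\E^n}=\rho_\E^{\ox n}$ and $\widetilde{Q}_{1+s}$ is multiplicative on tensor products. Hence
\[
-\tfrac1n\log\epsilon_{\rm dec}\ \ge\ s\bigl(\widetilde{H}_{1+s}(\A|\E)_\rho+\widetilde{H}_{1+s}(\A'|\C)_\omega\bigr)-\tfrac1n\log(2G_s).
\]
Since $G_s$ is finite for $s\in(0,1)$ (it is at most $s^{s-1}(1-s)^{1-s}/\ln 2$) and independent of $n$, the last term vanishes as $n\to\infty$. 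Taking $\liminf$ gives the bound for each fixed $s$, and taking the supremum over $s\in(0,1)$ completes the proof.

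There is no real obstacle here. The only points to watch are the degenerate cases. If $\epsilon_{\rm dec}=0$, the left-hand side is $+\infty$ and the inequality holds trivially. If the exponent is negative, the inequality is likewise trivially true, since $\epsilon_{\rm dec}$ remains finite.
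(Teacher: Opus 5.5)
Your proof is correct and follows the paper's argument: apply Theorem~\ref{thm:maindec} to $\rho_{\A\E}^{\otimes n}$ and $\mathcal{T}^{\otimes n}_{\A\to\C}$ (whose Choi state is $\omega_{\A'\C}^{\otimes n}$), use additivity of $\widetilde{H}_{1+s}$ to obtain the exponent $ns(\cdot)$, discard the $n$-independent prefactor $2G_s$, and take the supremum over $s\in(0,1)$. The hypothesis checks you spell out are left implicit in the paper. Your separate remark about negative exponents is unnecessary, because the per-$s$ inequality you derived already holds whatever the sign of the exponent; finiteness of $\epsilon_{\rm dec}$ alone would not have justified that case anyway.
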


\begin{proof}
    By Theorem~\ref{thm:maindec}, for any $s \in (0,1)$ and $n \in \mathbb{N}$, we have
    \begin{equation}
    \label{equ:achi}
       \epsilon_{\rm{dec}}(\mathcal{T}^{\otimes n}_{\A \rightarrow \C}, \rho_{\A\E}^{\ox n}) \\
       \leq 2G_s 2^{-ns\widetilde{H}_{1+s}(\A|\E)_\rho-ns\widetilde{H}_{1+s}(\A'|\C)_\omega}.
    \end{equation}
Taking the logarithm, dividing by $-n$, and letting $n$ tend to infinity, we observe that the prefactor does not affect the exponential decay rate. Consequently,
\begin{equation}
\label{equ:achi2}
    \liminf_{n\to\infty} \frac{-1}{n} \log \epsilon_{\rm{dec}}(\mathcal{T}^{\otimes n}_{\A \rightarrow \C}, \rho_{\A\E}^{\ox n})
        \geq  s(\widetilde{H}_{1+s}(\A|\E)_\rho+\widetilde{H}_{1+s}(\A'|\C)_\omega).
\end{equation}
Since Eq.~(\ref{equ:achi2}) holds for $s \in (0,1)$, taking the supremum over $s$ completes the proof.
\end{proof}

    
\subsection{Converse bound for standard quantum information decoupling}

In subsection~\ref{sub:achi}, we derived a one-shot upper bound on the decoupling error for general quantum channels and obtained an achievable error exponent in the i.i.d. limit. In this subsection, we turn to the standard setting of quantum information decoupling, where the decoupling map is given by a partial trace.

Specifically, let $\mathcal{H}_{\A}=\mathcal{H}_{\A_1} \otimes \mathcal{H}_{\A_2}$ be a Hilbert space decomposition, and consider the map $\mathcal{T}_{\A \rightarrow \A_1}=\tr_{\A_2}$. Our first result establishes a one-shot lower bound on the decoupling error, which serves as a converse bound in this setting.

\begin{theorem}
\label{thm:stancon}
    Let $\mathcal{H}_{\A}$ be a Hilbert space with decomposition $\mathcal{H}_{\A}=\mathcal{H}_{\A_1} \otimes \mathcal{H}_{\A_2}$ and $\rho_{\A\E} \in \mathcal{S}(\A\E)$. Then the decoupling error associated with the partial trace $\tr_{\A_2}$ satisfies
    \begin{equation}
        \epsilon_{\rm{dec}}(\tr_{\A_2},\rho_{\A\E}) \geq \frac{1}{v(\rho_{\E})}\tr\!\left( \rho_{\A\E} - 9v(\rho_{\E})\frac{|\A_2|}{|\A_1|} I_{\A} \otimes \rho_{\E} \right)_{+},
    \end{equation}
where $v(\rho_{\E})$ denotes the number of different eigenvalues of $\rho_{\E}$.
\end{theorem}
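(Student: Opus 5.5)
The plan is to lower bound the relative entropy for each fixed unitary $U_{\A}$ by an information-spectrum quantity. I would first make the two operators commute using the pinching map $\mathcal{E}_{\rho_{\E}}$, and then use the elementary scalar inequality $p\log\frac{p}{q}\ge (p-q)\log e$ to isolate the part where $p$ dominates $q$.

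\emph{Step 1 (pinching).} Fix $U_{\A}$ and write $\rho'_{\A_1\E}:=\tr_{\A_2}(U_{\A}\rho_{\A\E}U_{\A}^*)$ and $\sigma_{\A_1\E}:=\frac{I_{\A_1}}{|\A_1|}\ox\rho_{\E}$. This $\sigma$ is the reference state, since $\omega_{\C}=\tr_{\A_2}\frac{I_{\A}}{|\A|}=I_{\A_1}/|\A_1|$. Let $\mathcal{E}$ be the pinching with respect to $I_{\A_1}\ox\rho_{\E}$. It leaves $\sigma$ invariant, so data processing gives $D(\rho'\|\sigma)\ge D(\mathcal{E}(\rho')\|\sigma)$. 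On each eigenspace block of $\rho_{\E}$, $\sigma$ is a scalar multiple of the identity. Hence $\mathcal{E}(\rho')$ and $\sigma$ commute and can be diagonalized simultaneously.

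\emph{Step 2 (classical lower bound).} For commuting $P,Q$ with $\tr P=\tr Q=1$ and a constant $K>e$, split the common eigenbasis into the part where $p\ge Kq$ and its complement. On the first part $p\log(p/q)\ge p\log K$. On the complement, use $p\log(p/q)\ge (p-q)\log e$. The complement contributes at least $-\log e\sum_{p<Kq}(q-p)_+$. Comparing with $\sum_{p\ge Kq}p$, and using $\tr P=\tr Q$, gives an inequality of the form $D(P\|Q)\ge \tr(P-KQ)_+$ for a suitable absolute $K$. I expect $K$ of order $9$ to be what produces the constant in the statement. Next, the pinching inequality $\mathcal{E}(\rho')\ge\rho'/v(\rho_{\E})$ and monotonicity of $X\mapsto\tr(X)_+$ under $X\le Y$ give
\[
D(\rho'\|\sigma)\ \ge\ \tr(\mathcal{E}(\rho')-K\sigma)_+\ \ge\ \frac{1}{v(\rho_{\E})}\tr\!\big(\rho'-Kv(\rho_{\E})\sigma\big)_+ .
\]

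\emph{Step 3 (undoing the partial trace).} This is the step I expect to be the main obstacle. I need to pass from $\tr_{\A_2}(U\rho U^*)$ back to $\rho_{\A\E}$ itself, where unitary invariance lets $U$ disappear. The natural tool is the operator inequality $X_{\A_1\A_2\E}\le|\A_2|\,X_{\A_1\E}\ox I_{\A_2}$ (\cite[Lemma~A.2]{tomamichel2013frameworknonasymptoticquantuminformation}), together with $\tr(Y\ox I_{\A_2}/|\A_2|)_+=\tr(Y)_+$. This yields
\[
\tr\!\big(\rho'-c\,\sigma\big)_+\ \ge\ \frac{1}{|\A_2|^2}\tr\!\Big(U\rho_{\A\E}U^*-c\frac{|\A_2|}{|\A_1|}I_{\A}\ox\rho_{\E}\Big)_+ ,
\]
and the right-hand side is independent of $U$ because $I_{\A}\ox\rho_{\E}$ is invariant. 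The threshold $c|\A_2|/|\A_1|$ with $c=9v(\rho_{\E})$ matches the statement. However, this route loses a prefactor $|\A_2|^{-2}$, which the statement does not have. To avoid the loss, I would try a test-operator argument instead. Take $\Pi=\{\rho_{\A\E}>\lambda I_{\A}\ox\rho_{\E}\}$ and lower bound $\tr(\rho'-c\sigma)_+\ge\tr\Pi'(\rho'-c\sigma)$ with the test $\Pi'$ chosen on $\A_1\E$ as the support projection of $\tr_{\A_2}(U\Pi U^*)$, suitably truncated. Then use that $\tr\Pi'\sigma\le |\A_2|\tr\Pi(I\ox\rho_{\E})/|\A_1|$, which is where the factor $|\A_2|/|\A_1|$ in the threshold originates. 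If that fails, I would average $\tr(\cdot)_+$ over $U$ via the Haar second-moment formula (Lemma~\ref{lem:avehar}) to control the mass lost under the partial trace.

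\emph{Step 4.} Take the Haar expectation of the bound from Steps 1--3. Since the final expression does not depend on $U$, this gives the claimed inequality with prefactor $1/v(\rho_{\E})$ and threshold $9v(\rho_{\E})|\A_2|/|\A_1|$.
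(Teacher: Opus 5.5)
\textbf{There is a genuine gap in Step~3, which you leave open.} Steps 1, 2 and 4 are sound. With $K=9$ your splitting gives $D(P\|Q)\ge\log(9/e)\,\tr P\{P\ge 9Q\}\ge\tr(P-9Q)_+$, which is Lemma~\ref{lem:reldt}. The paper proves that lemma with a two-outcome measurement, so no pinching is needed at this point. Step~3 is where the $|\A_2|/|\A_1|$ scaling, and hence the tightness, is decided.

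\textbf{Why your routes do not close it.} Your first route loses $|\A_2|^{-2}$. With $|\A_2^n|=2^{nr}$ this adds $2r$ to the converse exponent and destroys the match with achievability. Your test-operator route needs two things:
\begin{enumerate}[(a)]
\item $\Pi'\ox I_{\A_2}\ge U\Pi U^*$, so that $\tr\Pi'\rho'\ge\tr\Pi\rho_{\A\E}$;
\item $\tr\Pi'\sigma\le\frac{|\A_2|}{|\A_1|}\tr\Pi(I_{\A}\ox\rho_{\E})$.
\end{enumerate}
Any ``truncation'' of $\Pi'$ breaks (a). Condition (b) rests only on the rank count $\operatorname{rank}\Pi'\le|\A_2|\operatorname{rank}\Pi$. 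That count bounds $\tr\Pi'\sigma$ only where the reference operator is a multiple of the identity. For non-flat $\rho_{\E}$ it fails in general. Take $\A_1$ trivial, $|\A_2|=|\E|=2$, $\rho_{\E}=\mathrm{diag}(1-\delta,\delta)$, and $\Pi=\proj{\psi}$ with $\ket{\psi}=\sqrt{1-\eta}\ket{0}\ket{1}+\sqrt{\eta}\ket{1}\ket{0}$ (first factor $\A_2$). Then $\Pi'=I_{\E}$ and $\tr\Pi'\sigma=1$, while $2\bra{\psi}I\ox\rho_{\E}\ket{\psi}=2[(1-\eta)\delta+\eta(1-\delta)]$ is small. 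Nothing in your sketch uses the special form $\Pi=\{\rho_{\A\E}>\lambda I_{\A}\ox\rho_{\E}\}$ to exclude this.

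You also apply the pinching inequality already in Step~2. That gives up the block structure that would make the count valid without further loss. The Haar second-moment fallback is unnecessary, since the bound is $U$-independent, and it is unlikely to control the nonlinear functional $\tr(\cdot)_+$ without losses.

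\textbf{The missing idea.} Do the counting blockwise on the eigenspaces $E_i$ of $\rho_{\E}$, and use the pinching inequality only once, at the end. The paper proceeds as follows.
\begin{enumerate}[(i)]
\item Lemma~\ref{lem:reldt} gives $\tr\bigl(\tr_{\A_2}(U\rho_{\A\E}U^*)-9\frac{I_{\A_1}}{|\A_1|}\ox\rho_{\E}\bigr)_+$.
\item It writes $\tr_{\A_2}(X)\ox\frac{I_{\A_2}}{|\A_2|}=|\A_2|^{-2}\sum_kU_kXU_k^*$, with Heisenberg--Weyl operators $U_k$ on $\A_2$.
\item By data processing it replaces $\rho_{\A\E}$ by $\mathcal{E}_{\rho_{\E}}(\rho_{\A\E})$.
\item It splits $\tr(\cdot)_+$ over the blocks. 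On block $i$ the reference $9\frac{I_{\A}}{|\A|}\ox\rho_{\E}$ equals $9\lambda_i\frac{I_{\A}}{|\A|}\ox E_i$, a scalar.
\item In each block it applies Lemma~\ref{lem:con}, $\tr(\sum_kA_k-\lambda I)_+\ge\sum_k\tr(A_k-\lambda I)_+$.
\item Unitary invariance makes all $|\A_2|^2$ terms equal. This recovers the factor $|\A_2|^2$ and gives $\tr\bigl(\mathcal{E}_{\rho_{\E}}(\rho_{\A\E})-9\frac{|\A_2|}{|\A_1|}I_{\A}\ox\rho_{\E}\bigr)_+$.
\item Only now does it apply $\mathcal{E}_{\rho_{\E}}(\rho_{\A\E})\ge\rho_{\A\E}/v(\rho_{\E})$.
\end{enumerate}

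\textbf{How to repair your test-operator idea.} Drop the pinching inequality from Step~2. Build $\Pi=\sum_i\Pi_i$ from $\mathcal{E}_{\rho_{\E}}(\rho_{\A\E})$, with $\Pi_i\le I_{\A}\ox E_i$. Take $\Pi'=\sum_i\Pi'_i$, where $\Pi'_i$ is the support of $\tr_{\A_2}(U\Pi_iU^*)$. Then $\Pi'\ox I_{\A_2}$ is exactly the support of $\sum_kU_kU\Pi U^*U_k^*$, and the rank count holds block by block. This essentially reproduces the paper's use of Lemma~\ref{lem:con}. If you keep Step~2 as written, you need a second pinching and end with $v(\rho_{\E})^{-2}$ and $9v(\rho_{\E})^2$ instead of the stated constants.
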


\begin{proof}
Let the spectral decomposition of $\rho_{\E}$ be given by 
\[
\rho_{\E}=\sum_{i \in I} \lambda_iE_i,
\]
where $\{E_i\}_{i \in I}$ are the spectral projections.

From Lemma~\ref{lem:reldt}, we have
    \begin{equation}
    \label{equ:low}
        \begin{split}
           \epsilon_{\rm{dec}}(\tr_{\A_2},\rho_{\A\E}) &\geq \mathbb{E}_{\mathbb{U}_{\A}} \tr(\tr_{\A_2}(U_{\A} \rho_{\A\E} U_{\A}^\ast) - 9\frac{I_{\A_1}}{|\A_1|} \otimes \rho_{\E} )_+ \\
           & =\mathbb{E}_{\mathbb{U}_{\A}}\tr \left( \tr_{\A_2}(U_{\A} \rho_{\A\E} U_{\A}^\ast) \otimes \frac{I_{\A_2}}{|\A_2|} - 9\frac{I_{\A_1}}{|\A_1|} \otimes \frac{I_{\A_2}}{|\A_2|} \otimes \rho_{\E} \right)_{+} \\
           &= \mathbb{E}_{\mathbb{U}_{\A}}\tr \left( \frac{1}{|\A_2|^2} \sum_{k=1}^{|\A_2|^2} (U_k U_{\A} \rho_{\A\E} U_{\A}^\ast U_k^\ast) - 9\frac{I_{\A_1}}{|\A_1|} \otimes \frac{I_{\A_2}}{|\A_2|} \otimes \rho_{\E} \right)_{+} \\
           &\geq \mathbb{E}_{\mathbb{U}_{\A}}\tr \left( \frac{1}{|\A_2|^2} \sum_{k=1}^{|\A_2|^2} (U_k U_{\A} \mathcal{E}_{\rho_{\E}}(\rho_{\A\E}) U_{\A}^\ast U_k^\ast) - 9\frac{I_{\A_1}}{|\A_1|} \otimes \frac{I_{\A_2}}{|\A_2|} \otimes \rho_{\E} \right)_{+} \\
           &=\mathbb{E}_{\mathbb{U}_{\A}}\tr \left( \frac{1}{|\A_2|^2} \sum_{i \in I} \sum_{k=1}^{|\A_2|^2} (U_k U_{\A} E_i \rho_{\A\E} E_i U_{\A}^\ast U_k^\ast) - 9\sum_{i \in I} \frac{I_{\A_1}}{|\A_1|} \otimes \frac{I_{\A_2}}{|\A_2|} \otimes \lambda_i E_i \right)_{+},
        \end{split}
    \end{equation}
where $\{U_k\}_{k=1}^{|\A_2|^2}$ are the Heisenberg–Weyl operators on $\mc{H}_{\A_2}$, the inequality comes from the data processing inequality.
Leveraging Lemma~\ref{lem:con} , we can further lower bound Eq.~(\ref{equ:low}) as
\begin{equation}
    \begin{split}
  \epsilon_{\rm{dec}}(\tr_{\A_2},\rho_{\A\E}) &\geq   \mathbb{E}_{\mathbb{U}_{\A}} \sum_{i \in I}\sum_{k=1}^{|\A_2|^2} \tr \left( \frac{1}{|\A_2|^2}   U_k U_{\A} E_i \rho_{\A\E} E_i U_{\A}^\ast U_k^\ast- 9 \frac{I_{\A_1}}{|\A_1|} \otimes \frac{I_{\A_2}}{|\A_2|} \otimes \lambda_i E_i \right)_{+} \\
  &=\mathbb{E}_{\mathbb{U}_{\A}}\sum_{k=1}^{|\A_2|^2} \tr \left( \frac{1}{|\A_2|^2}  U_k U_{\A} \mathcal{E}_{\rho_{\E}}(\rho_{\A\E}) U_{\A}^\ast U_k^\ast - 9\frac{I_{\A_1}}{|\A_1|} \otimes \frac{I_{\A_2}}{|\A_2|} \otimes \rho_{\E} \right)_{+} \\
  &=\sum_{k=1}^{|\A_2|^2} \tr \left( \frac{1}{|\A_2|^2}  \mathcal{E}_{\rho_{\E}}(\rho_{\A\E}) - 9\frac{I_{\A_1}}{|\A_1|} \otimes \frac{I_{\A_2}}{|\A_2|} \otimes \rho_{\E} \right)_{+} \\
  &=\tr \left( \mathcal{E}_{\rho_{\E}}(\rho_{\A\E}) - 9\frac{|\A_2|}{|\A_1|} I_{\A} \otimes \rho_{\E} \right)_{+} \\
  &\geq  \frac{1}{v(\rho_{\E})}\tr\!\left( \rho_{\A\E} - 9v(\rho_{\E})\frac{|\A_2|}{|\A_1|} I_{\A} \otimes \rho_{\E} \right)_{+},
    \end{split}
\end{equation}
where the last inequality follows from the pinching inequality.
\end{proof}

Applying the one-shot lower bound in Theorem~\ref{thm:stancon} to $n$-fold i.i.d. states, we obtain a converse bound on the error exponent of standard quantum information decoupling.

\begin{corollary}
   For any $\rho_{\A\E} \in \mathcal{S}(\A\E)$ and rate of decoupling cost $r >0$, it holds that
   \begin{equation}
      \limsup_{n\to\infty}\frac{-1}{n} \log \epsilon_{\rm{dec}}(\tr_{{\A}_2^n},\rho^{\ox n}_{\A\E})
       \leq \sup_{s>0} s\left\{2r-\log|\A|+\widetilde{H}_{1+s}(\A|\E)_\rho\right\},
   \end{equation}
   where $|{\A}_2^n|=2^{nr}$.
\end{corollary}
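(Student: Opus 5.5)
The plan is to apply Theorem~\ref{thm:stancon} to $\rho_{\A\E}^{\ox n}$ with $|\A_2^n|=2^{nr}$ and $|\A_1^n|=|\A|^n 2^{-nr}$. This gives
\[
\epsilon_{\rm{dec}}(\tr_{\A_2^n},\rho^{\ox n}_{\A\E}) \geq \frac{1}{v_n}\tr\!\left(\rho_{\A\E}^{\ox n}-9v_n 2^{-n(\log|\A|-2r)} I_{\A^n}\ox\rho_{\E}^{\ox n}\right)_+ ,
\]
where $v_n:=v(\rho_{\E}^{\ox n})\le (n+1)^{|\E|}$ grows only polynomially. Taking $-\frac1n\log$, the prefactor $1/v_n$ and the factor $9v_n$ inside the threshold contribute nothing asymptotically. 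The one subtlety is that $9v_n$ shifts the threshold by $O(\log n)$, so I would absorb it into an arbitrarily small rate perturbation $\delta>0$ and let $\delta\to0$ at the end via continuity of the final expression in $r$.

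The problem is thereby reduced to a strong-converse-type large-deviation lower bound of the form
\[
\tr\big(\rho^{\ox n}-2^{-na}\sigma_n\big)_+ \approx 2^{-n\sup_{s>0}s(\widetilde{D}_{1+s}(\rho\|\sigma)-a)},
\]
with $\sigma_n=(I_{\A}\ox\rho_{\E})^{\ox n}$ and $a=\log|\A|-2r$. Indeed, $\widetilde{D}_{1+s}(\rho_{\A\E}\|I_{\A}\ox\rho_{\E})=-\widetilde{H}_{1+s}(\A|\E)_\rho$, so
\[
s\bigl(a-\widetilde{D}_{1+s}\bigr)^{\text{sign-flipped}} = s\bigl(2r-\log|\A|+\widetilde{H}_{1+s}(\A|\E)_\rho\bigr),
\]
which is the claimed exponent.

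For the lower bound on $\tr(\rho^{\ox n}-c\,\sigma_n)_+$, I would use the standard method (Mosonyi--Ogawa strong converse exponent).
\begin{enumerate}[(i)]
\item Pinch $\rho^{\ox n}$ with respect to $\sigma_n$. By data processing, $\tr(\rho^{\ox n}-c\sigma_n)_+\ge \tr(\mathcal{E}_{\sigma_n}(\rho^{\ox n})-c\sigma_n)_+$, and the two pinched operators commute, which reduces the problem to classical distributions.
\item Apply a change-of-measure (tilting) argument to get $\tr(P-cQ)_+\ge 2^{-n(\cdots)}$, matching the Legendre transform of $\frac1n\log\tr(\mathcal{E}_{\sigma_n}(\rho^{\ox n}))^{1+s}\sigma_n^{-s}$.
\item Use the fact that this pinched Petz quantity converges to the sandwiched $\widetilde{Q}_{1+s}$ as $n\to\infty$, since the pinching inequality costs only a polynomial factor.
\end{enumerate}
Alternatively, I would cite the known result that $\liminf\frac{-1}{n}\log\tr(\rho^{\ox n}-2^{-na}\sigma^{\ox n})_+\le\sup_{s\ge0}s(a-\widetilde{D}_{1+s}(\rho\|\sigma))$, which holds for non-normalized $\sigma$ as well.

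The main obstacle is this large-deviation step. The difficulty is establishing the matching exponent via the Gärtner--Ellis-type argument for the pinched sequence, including regularity (differentiability) of the limiting log-moment function. A further point is handling the regime where the supremum is attained at $s\to\infty$, or where the expression is infinite, in which case the bound is trivial. If the direct argument becomes messy, I would rely on the established strong-converse exponent formula for hypothesis testing with sandwiched Rényi divergence and only verify that the polynomial factors and the rescaling of $\sigma$ by $|\A|$ commute with the limit.
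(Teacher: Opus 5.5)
Your plan is the paper's proof: Theorem~\ref{thm:stancon} applied to $\rho_{\A\E}^{\ox n}$, the polynomial factor $v(\rho_\E^{\ox n})$ absorbed into a rate shift $2^{2n\delta}$, and then the exponent of $\tr(\rho^{\ox n}-t2^{nb}\sigma^{\ox n})_+$. The paper simply cites this last ingredient as Lemma~\ref{lem:mons} (with $t=9$, $\sigma=\frac{I_\A}{|\A|}\ox\rho_\E$, $b=2(r+\delta)$), so your pinching/tilting sketch can be replaced by that citation.

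Your sign bookkeeping is off and should be fixed. The correct form is $\lim_n-\frac1n\log\tr(\rho^{\ox n}-t2^{nb}\sigma^{\ox n})_+=\sup_{s\geq0}s\bigl(b-\widetilde D_{1+s}(\rho\|\sigma)\bigr)$. With your threshold $2^{-na}(I_\A\ox\rho_\E)^{\ox n}$ you must therefore take $b=-a=2r-\log|\A|$, up to the $2\delta$ shift. Since $\widetilde D_{1+s}(\rho_{\A\E}\|I_\A\ox\rho_\E)=-\widetilde H_{1+s}(\A|\E)_\rho$, this gives the target exponent directly, with no ``sign flip'' needed. Your displayed $\sup_s s(\widetilde D_{1+s}-a)$ does not match the threshold $2^{-na}$. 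Your quoted $\sup_s s(a-\widetilde D_{1+s})$ is the formula for threshold $2^{+na}$, not $2^{-na}$. Also, a $\liminf$ inequality for $-\frac1n\log\tr(\cdot)_+$ would not control the $\limsup$ in the statement. You need the genuine limit that Lemma~\ref{lem:mons} provides.

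The step that genuinely fails is letting $\delta\downarrow0$ ``by continuity in $r$''; the paper drops $\delta$ without comment and has the same blind spot. Let $g(r)$ denote the right-hand side and $M:=D_{\rm max}(\rho_{\A\E}\|\frac{I_\A}{|\A|}\ox\rho_\E)$. There are three cases:
\begin{enumerate}[(i)]
\item For $2r>M$, $g(r)=+\infty$ and the claim is trivial.
\item For $2r<M$, $g$ is finite and convex (a supremum of affine functions of $r$), hence continuous, and your argument goes through.
\item At $2r=M$, $g(r)$ can be finite while $g(r+\delta)=+\infty$ for every $\delta>0$. Moreover, there $\rho^{\ox n}\leq 2^{nM}(\frac{I_\A}{|\A|}\ox\rho_\E)^{\ox n}$ makes the lower bound of Theorem~\ref{thm:stancon} vanish identically, so the method yields nothing.
\end{enumerate}
This is not a removable technicality. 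Take $\rho_{\A\E}=\Phi_{\A\E}$ and $r=\log|\A|$, so that $2r=M$. Then $g(r)=0$, while $\epsilon_{\rm dec}=0$ for every $n$ because all of $\A^n$ is traced out. So the stated inequality fails at that rate, and the claim (your proof and the paper's alike) needs the proviso $2r\neq D_{\rm max}(\rho_{\A\E}\|\frac{I_\A}{|\A|}\ox\rho_\E)$.
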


\begin{proof}
   By Theorem~\ref{thm:stancon}, for any $n \in \mathbb{N}$, 
    \begin{equation}
    \label{equ:conexp}
    \begin{split}
      &\epsilon_{\rm{dec}}(\tr_{{\A}_2^n},\rho^{\ox n}_{\A\E}) \\
      \geq   &\frac{1}{v(\rho_{\E}^{\ox n})}\tr\!\left( \rho^{\ox n}_{\A\E} - 9v(\rho_{\E}^{\ox n})\frac{|{\A}_2^n|}{|A^n_1|} I_{\A}^{\ox n} \otimes \rho_{\E}^{\ox n} \right)_{+} \\
      =&\frac{1}{v(\rho_{\E}^{\ox n})}\tr\!\left( \rho^{\ox n}_{\A\E} - 9v(\rho_{\E}^{\ox n})\frac{2^{2nr}}{|\A|^n} I_{\A}^{\ox n} \otimes \rho_{\E}^{\ox n} \right)_{+}.
      \end{split}
    \end{equation}
For any $\delta>0$, there exists a $N_\delta \in \mathbb{N}$  such that for any $n\geq N_\delta$, it holds that
\[
\frac{\log v(\rho_{\E}^{\ox n})}{2n} \leq \delta.
\]
For $n \geq N_\delta$, we lower bound Eq.~\eqref{equ:conexp} as 
\begin{equation}
\label{equ:revcon}
    \begin{split}
      &\epsilon_{\rm{dec}}(\tr_{{\A}_2^n},\rho^{\ox n}_{\A\E}) \\
      \geq&\frac{1}{v(\rho_{\E}^{\ox n})}\tr\!\left( \rho^{\ox n}_{\A\E} - 9v(\rho_{\E}^{\ox n})\frac{2^{2nr}}{|\A|^n} I_{\A}^{\ox n} \otimes \rho_{\E}^{\ox n} \right)_{+} \\
      =&\frac{1}{v(\rho_{\E}^{\ox n})}\tr\!\left( \rho^{\ox n}_{\A\E} - 9\cdot2^{2n(r+\frac{\log v(\rho_{\E}^{\ox n})}{2n})}\frac{I_{\A}^{\ox n}}{|\A|^n} \otimes \rho_{\E}^{\ox n} \right)_{+} \\
      \geq &\frac{1}{v(\rho_{\E}^{\ox n})}\tr\!\left( \rho^{\ox n}_{\A\E} - 9\cdot2^{2n(r+\delta)}\frac{I_{\A}^{\ox n}}{|\A|^n} \otimes \rho_{\E}^{\ox n} \right)_{+} 
      \end{split}
\end{equation}
    
Eq.~\eqref{equ:revcon} and Lemma~\ref{lem:mons} give 
    \begin{equation}
    \label{equ:newupper}
    \begin{split}
        &\limsup_{n\to\infty} \frac{-1}{n} \log \epsilon_{\rm{dec}}(\tr_{{\A}_2^n},\rho^{\ox n}_{\A\E}) \\
       \leq &\sup_{s>0} s\left\{2r-\widetilde{D}_{1+s}\left(\rho_{\A\E}\big\|\tfrac{I_{\A}}{|\A|}\otimes \rho_{\E}\right)  \right\} \\
       =&\sup_{s>0} s\left\{2r-\log|\A|+\widetilde{H}_{1+s}(\A|\E)_\rho\right\}.
       \end{split}
    \end{equation}
\end{proof}

For standard quantum information decoupling, Theorem~\ref{thm:maindec} yields an $n$-shot upper bound on $\epsilon_{\rm{dec}}(\tr_{{\A}_2^n},\rho^{\ox n}_{\A\E})$:
     \begin{equation}
    \begin{split}
        \epsilon_{\rm{dec}}(\tr_{A^n_2},\rho^{\ox n}_{\A\E}) &\leq 2G_s 2^{-ns\widetilde{H}_{1+s}(\A|\E)_\rho-s\widetilde{H}_{1+s}({A^n_1}'{A_2^n}'|A_1^n)_\omega} \\
        &=2G_s 2^{-ns\widetilde{H}_{1+s}(\A|\E)_\rho-s(\log|A_2^n|-\log|A_1^n|)} \\
        &=2G_s 2^{-ns\widetilde{H}_{1+s}(\A|\E)_\rho+n\log|\A|-2nsr}.
        \end{split}
    \end{equation}
 This bound implies the following achievability bound on the error exponent of standard quantum information decoupling:
    \begin{equation}
    \label{equ:nuwlower}
       \liminf_{n\to\infty} \frac{-1}{n} \log \epsilon_{\rm{dec}}(\tr_{{\A}_2^n},\rho^{\ox n}_{\A\E})
       \geq \sup_{s \in (0,1)} s\left\{2r-\log|\A|+\widetilde{H}_{1+s}(\A|\E)_\rho\right\}. 
    \end{equation}
We next show that this achievability bound matches the converse bound when the rate $r$ is below a critical threshold. As a consequence, we obtain an exact characterization of the error exponent in the low-rate regime for standard quantum information decoupling.

\begin{theorem}
\label{thm:main}
Let $\rho_{\A\E} \in \mathcal{S}(\A\E)$ and $0<r\leq R_{\rm critical}
:=
\frac12
\left(
 \log|\A|
 -
 \left.
 \frac{\mathrm d}{\mathrm ds}
 \left[
  s\widetilde H_{1+s}(\A|\E)_\rho
 \right]
 \right|_{s=1}
\right)$. The error exponent of standard quantum information decoupling satisfies

\begin{equation}
\label{equ:mainexact}
\begin{split}
    \lim_{n \to \infty} -\frac{1}{n} \log \epsilon_{\rm{dec}}(\tr_{{\A}_2^n},\rho^{\otimes n}_{AE})
    =  \sup_{s \in (0,1)} \left\{ s\left( 2r-\log|\A|+\widetilde{H}_{1+s}(\A|\E)_\rho \right) \right\}.
\end{split}
\end{equation}
\end{theorem}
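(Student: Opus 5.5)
The plan is to sandwich the exponent between the achievability bound in Eq.~\eqref{equ:nuwlower} and the converse corollary following Theorem~\ref{thm:stancon}, and then to show that below the critical rate the supremum over $s>0$ in the converse is attained in $(0,1]$. Write
\[
f(s):=s\left(2r-\log|\A|+\widetilde{H}_{1+s}(\A|\E)_\rho\right)=2rs - s\widetilde{D}_{1+s}\!\left(\rho_{\A\E}\,\big\|\,\tfrac{I_{\A}}{|\A|}\ox\rho_{\E}\right).
\]
Eq.~\eqref{equ:nuwlower} gives $\liminf_n -\frac1n\log\epsilon_{\rm dec}\geq\sup_{s\in(0,1)}f(s)$. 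The converse corollary gives $\limsup_n -\frac1n\log\epsilon_{\rm dec}\leq\sup_{s>0}f(s)$. It therefore suffices to prove
\[
\sup_{s>0}f(s)=\sup_{s\in(0,1)}f(s)
\]
whenever $r\leq R_{\rm critical}$; once this holds, the limit exists and equals the claimed value.

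\textbf{Key step: concavity of $f$.} The function $g(s):=s\widetilde{D}_{1+s}(\rho\|\sigma)=\log\widetilde{Q}_{1+s}(\rho\|\sigma)$ is convex in $s>0$. This is a known property of the sandwiched R\'enyi divergence: $\log\widetilde Q_\alpha$ is convex in $\alpha$, which follows for instance from Riesz--Thorin/complex interpolation of $\|\sigma^{-s/(2(1+s))}\rho\,\sigma^{-s/(2(1+s))}\|_{1+s}$ (cf.~Mosonyi--Ogawa). I would cite this rather than reprove it. Since $f(s)=2rs-g(s)$ is $2rs$ minus a convex function, $f$ is concave on $(0,\infty)$. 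Moreover,
\[
f'(1)=2r-g'(1)=2r-\log|\A|+\frac{\mathrm d}{\mathrm ds}\big[s\widetilde H_{1+s}(\A|\E)_\rho\big]\Big|_{s=1},
\]
using $g(s)=s\log|\A|-s\widetilde H_{1+s}(\A|\E)_\rho$. The hypothesis $r\leq R_{\rm critical}$ is exactly $f'(1)\leq0$.

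By concavity, $f'(s)\leq f'(1)\leq 0$ for all $s\geq1$, so $f$ is nonincreasing on $[1,\infty)$. Hence $\sup_{s>0}f=\sup_{s\in(0,1]}f$. Since $f$ is continuous at $s=1$, this also equals $\sup_{s\in(0,1)}f$. This closes the sandwich.

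\textbf{Main obstacle and remaining checks.} The step needing care is justifying the convexity and differentiability of $s\mapsto\log\widetilde Q_{1+s}$ in the present setting, where $\sigma=\frac{I}{|\A|}\ox\rho_{\E}$ may be singular. Restricting to $\supp(\sigma)\supseteq\supp(\rho_{\A\E})$ handles this, and differentiability at $s=1$ is immediate since $\widetilde Q$ is real-analytic in $s$ on the support. Because $\widetilde{H}_{1+s}$ is additive (Proposition~\ref{prop:mainpro}), no regularization enters on either side, so the single-letter expressions match directly.
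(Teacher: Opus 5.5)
Your proposal is correct and follows essentially the same route as the paper. The paper also sandwiches the exponent between the achievability bound \eqref{equ:nuwlower} and the converse \eqref{equ:newupper}. It uses the Mosonyi--Ogawa convexity of $s\mapsto s\widetilde{D}_{1+s}(\rho_{\A\E}\|I_{\A}\ox\rho_{\E})$ to get concavity of $f$, and it observes that $f'(1)=2(r-R_{\rm critical})\leq 0$. Your monotonicity argument on $[1,\infty)$ and the continuity remark at $s=1$ just spell out the step the paper states as ``$\sup_{[0,1]}f=\sup_{s\geq0}f$ iff $f'(1)\leq0$.''
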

\begin{proof}
 Eq.~\eqref{equ:mainexact} follows from the fact that when $r \leq R_{\rm{critical}}$, the lower bound of Eq.~\eqref{equ:nuwlower} and the upper bound of Eq.~\eqref{equ:newupper} coincide. To see this, consider the function
 \[
 f(s)=s\left( 2r-\log|\A|+\widetilde{H}_{1+s}(\A|\E)_\rho \right)
 \]
for $s \in [0,\infty)$. The convexity of $s \rightarrow s\widetilde{D}_{1+s}(\rho_{\A\E} \| I_{\A}\otimes \rho_{\E})$  shown in~\cite[Lemma~III. 12]{Mosonyi_2017} implies that $f(s)$ is concave. So $\sup_{s \in [0,1]} f(s)=\sup_{s\geq 0} f(s)$ holds if and only if $f'(1) \leq 0$. Since
\[
f'(1)=2\bigl(r-R_{\rm critical}\bigr),
\]
the condition $f'(1)\leq0$ is equivalent to
$r\leq R_{\rm critical}$.
\end{proof}

\begin{remark} \label{remark:dupuis}
By reverse Pinsker's inequality and substitution $\alpha = \frac{1}{1-s}$, \eqref{eq:previous} implies an achievable error exponent under the quantum relative entropy of the form
\begin{align}
    \frac{1}{2} \sup_{s \in (0,1)} \left\{ s\left( 2r-\log|\A|+\widetilde{H}^{\uparrow}_{\frac{1}{1-s}}(\A|\E)_\rho \right) \right\},
\end{align}
which is weaker than \eqref{equ:nuwlower} because $\widetilde{H}_{1+s}(\A|\E)_{\rho} \geq \widetilde{H}^{\uparrow}_{\frac{1}{1-s}}(\A|\E)_{\rho}$ for all $s\in(0,1]$ \cite[Eq.~(48)]{TBH14}.
\end{remark}


\section{Quantum state merging}
\label{sec:statemerging}

Let $\rho_{\A\B\R}$ be a tripartite pure state. Alice, Bob and a referee hold system $\A$, $\B$ and $\R$, respectively. Quantum state merging is the task of transmitting the quantum information contained in system $\A$ from Alice to Bob, such that Bob eventually holds systems $\A$ and $\B$, while preserving the global state with the reference system $\R$.

A quantum state merging protocol consists of a shared maximally entangled state $\Phi_{\bar{\A}\bar{\B}}$ between Alice and Bob, together with an LOCC channel $\Lambda_{\A\bar{\A}:\B\bar{\B} \rightarrow \tilde{\A}:\A\B\tilde{\B}}$, which transforms the joint state $\rho_{\A\B\R} \otimes \Phi_{\bar{\A}\bar{\B}}$ into a state close to $\rho_{\A\B\R} \otimes \Phi_{\tilde{\A}\tilde{\B}}$, where $\Phi_{\tilde{\A}\tilde{\B}}$ is a maximally entangled state. The performance of the protocol is quantified by the purified distance between the final state and the ideal target state $\rho_{\A\B\R} \otimes \Phi_{\tilde{\A}\tilde{\B}}$.

It is well known that in order to achieve asymptotically perfect quantum state merging, entanglement must be consumed when $H(\A|\R)_\rho<0$, whereas entanglement can be distilled when $H(\A|\R)_\rho>0$. Accordingly, in the former case we are concerned with the entanglement cost $\log|\bar{\A}|-\log|\tilde{\A}|$, while in the latter case we consider the entanglement distillation $\log|\tilde{\A}|-\log|\bar{\A}|$.

When $H(\A|\R)_\rho<0$, we define the optimal error among all quantum state merging protocols with a fixed \emph{entanglement cost} $k$ as
\[
P^{\rm{merg},c}_{\A \rightarrow \B}(\rho_{\A\B\R},k):=\min_{\mathcal{M}} P(\mathcal{M}(\rho_{\A\B\R}),\rho_{\A\B\R}\otimes \psi_{\tilde{\A}\tilde{\B}}),
\]
where the minimization is taken over all quantum state merging protocols $\mathcal{M}$ with entanglement cost $k$. When $H(\A|\R)_\rho>0$, the optimal error $P^{\rm{merg},d}_{\A \rightarrow \B}(\rho_{\A\B\R},k)$ for a fixed \emph{entanglement distillation} $k$ is defined analogously.

For a fixed entanglement cost~(distillation) rate 
$r$, he error exponent of quantum state merging is defined as the exponential decay rate of $P^{\rm{merg},c}_{\A^n \rightarrow \B^n}(\rho_{\A\B\R}^{\otimes n},nr)$~($P^{\rm{merg},d}_{\A^n \rightarrow \B^n}(\rho_{\A\B\R}^{\otimes n},nr)$) as $n \rightarrow \infty$, i.e.,
\begin{align}
    &E^{\rm{merg},c}(\rho_{\A\B\R},r):=\liminf_{n \rightarrow \infty} \frac{-1}{n} \log P^{\rm{merg},c}_{\A^n \rightarrow \B^n}(\rho_{\A\B\R}^{\otimes n},nr), \\
    &E^{\rm{merg},d}(\rho_{\A\B\R},r):=\liminf_{n \rightarrow \infty} \frac{-1}{n} \log P^{\rm{merg},d}_{\A^n \rightarrow \B^n}(\rho_{\A\B\R}^{\otimes n},nr). 
\end{align}

In this subsection, we study the error exponent of quantum state merging. We first address the achievability part. The main technical tool is the following application of Theorem~\ref{thm:maindec} to partial isometries.

\begin{lemma}
    \label{lem:main}
Let $\rho_{\A\R} \in \mathcal{S}(\A\R)$, $\mathcal{H}_{\bar{\A}} \subseteq \mathcal{H}_{\A}$ be a subspace and $V_{\bar{\A} \rightarrow \tilde{\A}}$ a partial isometry from $\mathcal{H}_{\bar{\A}}$ onto $\mathcal{H}_{\tilde{\A}}$. Then, for any $s \in (0,1)$, we have
\begin{equation}
    \mathbb{E}_{\mathbb{U}(\A)} D\left(VU(\rho_{\A\R})U^*V^* \| \frac{I_{\tilde{\A}}}{|\A|}\otimes \rho_{\R}\right) \leq 2G_s \frac{|\tilde{\A}|^{1+s}}{|\A|} 2^{-s\widetilde{H}_{1+s}(\A|\R)_\rho}.
\end{equation}
\end{lemma}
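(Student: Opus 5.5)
We will apply Theorem~\ref{thm:maindec} with the completely positive, trace non-increasing map $\mathcal{T}_{\A\rightarrow\tilde{\A}}(X):=VXV^*$, where $V$ is extended by zero on the orthogonal complement of $\mathcal{H}_{\bar{\A}}$ in $\mathcal{H}_{\A}$. This map is non-zero whenever $|\tilde{\A}|\geq 1$, and it does not increase trace because $V^*V=\Pi_{\bar{\A}}\leq I_{\A}$.

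\textbf{Step 1: identify the reference state.} With this choice, $\omega_{\tilde{\A}}=\mathcal{T}(I_{\A}/|\A|)=VV^*/|\A|=I_{\tilde{\A}}/|\A|$. Hence the quantity bounded in the lemma is exactly $\epsilon_{\rm dec}(\mathcal{T},\rho_{\A\R})$, with $\E=\R$ and $\C=\tilde{\A}$. Theorem~\ref{thm:maindec} then gives, for every $s\in(0,1)$,
\begin{equation*}
\epsilon_{\rm dec}\leq 2G_s\,2^{-s\widetilde{H}_{1+s}(\A|\R)_\rho-s\widetilde{H}_{1+s}(\A'|\tilde{\A})_\omega}.
\end{equation*}
It remains to show that $2^{-s\widetilde{H}_{1+s}(\A'|\tilde{\A})_\omega}=|\tilde{\A}|^{1+s}/|\A|$.

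\textbf{Step 2: compute the Choi-state entropy.} Write
\begin{equation*}
\omega_{\A'\tilde{\A}}=(I_{\A'}\otimes V)\Phi_{\A'\A}(I_{\A'}\otimes V^*).
\end{equation*}
Since $V$ maps $\mathcal{H}_{\bar{\A}}$ onto $\mathcal{H}_{\tilde{\A}}$, we have $|\bar{\A}|=|\tilde{\A}|$. Projecting $\Phi$ onto $\A'\otimes\bar{\A}$ therefore gives $\omega_{\A'\tilde{\A}}=\frac{|\tilde{\A}|}{|\A|}\,\Psi_{\A'\tilde{\A}}$, where $\Psi$ is a normalized maximally entangled state of Schmidt rank $|\tilde{\A}|$. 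Equivalently, up to local isometries, $\Psi$ is $\Phi_{\bar{\A}'\tilde{\A}}$ embedded in $\A'$. Its marginal is $\omega_{\tilde{\A}}=I_{\tilde{\A}}/|\A|$.

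\textbf{Step 3: evaluate $\widetilde{Q}_{1+s}$.} The theorem's final expression is $\widetilde{Q}_{1+s}(\omega_{\A'\tilde{\A}}\|I_{\A'}\otimes\omega_{\tilde{\A}})$, so we compute this directly. The second argument is the scalar $|\A|^{-1}$ times the identity. This yields
\begin{equation*}
\widetilde{Q}_{1+s}=|\A|^{s}\,\tr\,\omega^{1+s}_{\A'\tilde{\A}}=|\A|^{s}\left(\frac{|\tilde{\A}|}{|\A|}\right)^{1+s}=\frac{|\tilde{\A}|^{1+s}}{|\A|}.
\end{equation*}
Here we used that $\Psi$ is pure, so $\tr\,\Psi^{1+s}=1$. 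Substituting this into Step 1 gives the claimed bound.

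\textbf{Main obstacle.} The only care needed is with subnormalization. The conditional entropy $\widetilde{H}_{1+s}(\A'|\tilde{\A})_\omega$ is evaluated on the unnormalized operator $\omega$, and conventions could introduce extra trace factors. To avoid this ambiguity, we will work directly with the $\widetilde{Q}_{1+s}$ expression that appears at the end of the proof of Theorem~\ref{thm:maindec}, which involves no normalization convention. We also check that the support condition used in that proof holds: $\supp\,\omega_{\A'\tilde{\A}}\subseteq \A'\otimes\tilde{\A}=\supp(I\otimes\omega_{\tilde{\A}})$.
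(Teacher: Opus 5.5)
Your proposal is correct and follows the paper's proof. Both apply Theorem~\ref{thm:maindec} to the map $X\mapsto VXV^*$ and evaluate the Choi-state term, which the paper simply asserts as $-s\widetilde{H}_{1+s}(\A'|\tilde{\A})_{V\Phi V^*}=\log\frac{|\tilde{\A}|^{1+s}}{|\A|}$; your explicit computation via $\omega_{\A'\tilde{\A}}=\frac{|\tilde{\A}|}{|\A|}\Psi$ and $\widetilde{Q}_{1+s}$ correctly fills in that step and matches the paper's unnormalized definition $\widetilde{D}_{\alpha}=\frac{1}{\alpha-1}\log\widetilde{Q}_{\alpha}$.
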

\begin{proof}
    Applying Theorem~\ref{thm:maindec}, we obtain
    \begin{equation}
    \begin{split}
        &\mathbb{E}_{\mathbb{U}(\A)} D\left(VU(\rho_{\A\R})U^*V^* \| \frac{I_{\tilde{A}}}{|\A|}\otimes \rho_{\R}\right) \\
        \leq &2G_s 2 ^{-s\widetilde{H}_{1+s}(\A|\R)_\rho-s\widetilde{H}_{1+s}(\A'|\tilde{\A})_{V\Phi_{\A'\A}V^*}}\\
        =&2G_s \frac{|\tilde{\A}|^{1+s}}{|\A|} 2^{-s\widetilde{H}_{1+s}(\A|\R)_\rho},
        \end{split}
    \end{equation}
    where the equality follows from $-s\widetilde{H}_{1+s}(\A'|\tilde{\A})_{V\Phi_{\A'\A}V^*}=\log\frac{|\tilde{\A}|^{1+s}}{|\A|}$.
\end{proof}

With Lemma~\ref{lem:main} in hand, we can establish an achievability bound on the error exponent for quantum state merging stated as follows.

\begin{theorem}
    \label{thm:statemer}
   Let $\rho_{\A\B\R}$ be a tripartite pure state. If $H(\A|\R)_\rho>0$ and $0<r<H(\A|\R)_\rho$, then for any $0<s<1$ and $n \in \mathbb{N}$, we have
  \begin{equation}
  \label{equ:main}
 P^{\rm{merg},d}_{A^n \rightarrow B^n}(\rho^{\otimes n}_{\A\B\R},nr)   
 \leq\sqrt{2G_s 2^{nrs}2^{-sn\widetilde{H}_{1+s}(\A|\R)_{\rho}} }.
 \end{equation}
Consequently, an achievability bound on the error exponent follows: 
 \begin{equation}
     E^{\rm{merg},d}(\rho_{\A\B\R},r)   \geq \frac{1}{2}\sup_{0<s<1} s\left(\widetilde{H}_{1+s}(\A|\R)_\rho-r\right)
     =\frac{1}{2}\sup_{0<s<1} s\left(-H^{\uparrow}_{\frac{1}{1+s}}(\A|\B)_\rho-r\right).
 \end{equation}
 If $H(\A|\R)_\rho<0$ and $r>-H(\A|\R)_\rho$, then
  \begin{equation}
     E^{\rm{merg},c}(\rho_{\A\B\R},r)   \geq \frac{1}{2}\sup_{0<s<1} s\left(\widetilde{H}_{1+s}(\A|\R)_\rho+r\right)
     =\frac{1}{2}\sup_{0<s<1} s\left(-H^{\uparrow}_{\frac{1}{1+s}}(\A|\B)_\rho+r\right).
 \end{equation}
\end{theorem}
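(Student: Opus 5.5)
We follow the standard decoupling route to state merging. Let $\psi_{\A\B\R}$ denote $\rho^{\otimes n}$ (we write the one-shot argument with $\A,\B,\R$ and substitute at the end). Alice holds $\A$. She picks an orthonormal subspace decomposition $\mc{H}_{\A}=\bigoplus_{j}\mc{H}_{\A_j}$ into blocks of dimension $|\tilde\A|$ (plus possibly one remainder block, handled by padding $\A$ with an ancilla so that $|\tilde\A|$ divides $|\A|$). For each $j$ let $V_j$ be the partial isometry from $\mc{H}_{\A_j}$ onto $\mc{H}_{\tilde\A}$. Alice applies a Haar-random unitary $U$ and measures $\{V_j^* V_j\}_j$, followed by $V_j$, and communicates $j$ to Bob. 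This is a one-way LOCC protocol that keeps $\log|\tilde\A|$ ebits' worth of system on Alice's side. The post-measurement states are sub-normalized operators $\sigma^j_{\tilde\A\R}=V_jU\rho_{\A\R}U^*V_j^*$ with $\sum_j\tr\sigma^j=1$.

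The first step is to show that, on average over $U$, the classical-quantum state $\sum_j\ket{j}\bra{j}\otimes\sigma^j_{\tilde\A\R}$ is close in relative entropy to $\sum_j \frac{|\tilde\A|}{|\A|}\ket{j}\bra{j}\otimes \frac{I_{\tilde\A}}{|\tilde\A|}\otimes\rho_{\R}$. By the direct-sum decomposition of the relative entropy, this divergence equals $\sum_j D(\sigma^j\,\|\,\frac{I_{\tilde\A}}{|\A|}\otimes\rho_\R)$, where each term is for a nonnormalized pair with equal traces on average. Lemma~\ref{lem:main} bounds each term by $2G_s\frac{|\tilde\A|^{1+s}}{|\A|}2^{-s\widetilde H_{1+s}(\A|\R)_\rho}$. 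Summing over the $|\A|/|\tilde\A|$ blocks then gives
\begin{equation*}
2G_s|\tilde\A|^{s}2^{-s\widetilde H_{1+s}(\A|\R)_\rho}.
\end{equation*}
Hence some fixed $U$ attains this bound. Terms with negative relative entropy (for sub-normalized arguments) should be handled by noting that the total is a genuine relative entropy between normalized cq states.

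The second step converts this to purified distance. We use $1-F^2\le D$ via $-\log F^2\le D$ (monotonicity of sandwiched R\'enyi in $\alpha$ at $\alpha=1/2$) and $1-x\le-\ln x$ up to constants, adjusting $\log$ versus $\ln$ through $G_s$ as needed. Using the concavity/joint properties of fidelity over a classical register, for each outcome $j$ the purifications of $\sigma^j_{\tilde\A\R}$ (held on $\tilde\A\B$) and of $\frac{I_{\tilde\A}}{|\tilde\A|}\otimes\rho_\R$ (namely $\Phi_{\tilde\A\tilde\B}\otimes\rho_{\A\B\R}$ with $\A\B$ on Bob's side) are related by Uhlmann's theorem via an isometry on Bob's system $\B$. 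Bob applies this isometry conditioned on $j$. The resulting final-state purified distance is at most $\sqrt{D}$, which gives Eq.~\eqref{equ:main} once we set $|\tilde\A|=2^{nr}$ and use additivity of $\widetilde H_{1+s}$.

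The third step handles the exponent statements. The distillation exponent follows by taking $-\frac1n\log$ and the supremum over $s$. The identity $\widetilde H_{1+s}(\A|\R)=-H^{\uparrow}_{1/(1+s)}(\A|\B)$ is the second duality relation in Proposition~\ref{prop:mainpro}(iii) with $\tau=I$. For the cost case, Alice and Bob append the initial shared $\Phi_{\bar\A\bar\B}$ of dimension $2^{k}$ and apply the same protocol to the pure state $\rho\otimes\Phi$ with $\A\bar\A$ at Alice. We have $\widetilde H_{1+s}(\A\bar\A|\R)=\widetilde H_{1+s}(\A|\R)+\log|\bar\A|$, and the net cost is $\log|\bar\A|-\log|\tilde\A|=nr$, which yields the exponent with $+r$.

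The main obstacle is the first step. Making the block measurement rigorous requires summing Lemma~\ref{lem:main} over blocks and handling the non-divisibility padding. It also requires the conversion from relative entropy to fidelity for cq mixtures without losing the factor $1/2$ in the exponent.
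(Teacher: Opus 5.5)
Your proposal is correct and follows essentially the same route as the paper. The shared steps are: embed $\A^n$ into a space whose dimension is divisible by $2^{nr}$; apply Lemma~\ref{lem:main} to each block partial isometry and sum over blocks to get $2G_s 2^{nrs}2^{-sn\widetilde{H}_{1+s}(\A|\R)_\rho}$; fix a good $U$; convert to purified distance via Uhlmann isometries on Bob's side conditioned on the outcome; and handle the cost case by appending $\Phi_{\bar{\A}\bar{\B}}$, running the distillation protocol, and tracing out $\bar{\A}\bar{\B}$.

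The differences are cosmetic. The paper bounds the block sum from below via the chain rule $\sum_i p_i D(\cdot\|\cdot)+D(p\|q)$, then uses $D\geq P^2$ and concavity of $F^2$, whereas you use a classical-register fidelity argument. Also, no constant adjustment through $G_s$ is needed in either version, since $1-x\leq-\ln x\leq-\log x$ on $(0,1]$ gives $1-F^2\leq D$ exactly.
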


\begin{proof}
We consider the two cases separately.

\textbf{Case 1:} $H(\A|\R)_\rho>0$;

For simplicity, we assume that $2^{nr}$ is an integer. Otherwise, it
can be replaced by $\lfloor2^{nr}\rfloor$, which does not affect the
error exponent.
Let $A_{n,r}$ be the smallest integer not smaller than $|\A|^n$ that
is divisible by $2^{nr}$, $\mathcal{H}_{\bar{\A}_n}$, $\mathcal{H}_{\tilde{\A}_n}$ be the Hilbert space of dimension $|\bar{\A}_n|=\A_{n,r}$, $|\tilde{\A}_n|=2^{nr}$, respectively and $V_{\A^n \rightarrow \bar{\A}_n}$ be an isometry from $\mathcal{H}_{\A}^{\otimes n}$ to $\mathcal{H}_{\bar{\A}_n}$.

We decompose $\mathcal{H}_{\bar{\A}_n}$ into a direct sum of mutually orthogonal subspaces $\{\mathcal{H}_{\A_i}\}_{i=1}^{\frac{\A_{n,r}}{2^{nr}}}$, each of dimension $2^{nr}$. For each $i$, let $V_i$ be a partial isometry from $\mathcal{H}_{\A_i}$ to $\mathcal{H}_{\tilde{\A}_n}$.

Applying  Lemma~\ref{lem:main} to each $V_i$, we obtain
\begin{equation}
\label{equ:upper}
\begin{split}
   &\mathbb{E}_{\mathbb{U}(\bar{\A}_n)} \sum_{i=1}^{\frac{A_{n,r}}{2^{nr}}} D\left(V_iU V_{\A^n \rightarrow \bar{\A}_n}(\rho_{\A\R}^{\otimes n})V^*_{\A^n \rightarrow \bar{\A}_n}U^*V_i^*\| \frac{I_{\tilde{A}_n}}{A_{n,r}}\otimes \rho_{\R}^{\otimes n}\right) \\
   \leq &\sum_{i=1} ^{\frac{A_{n,r}}{2^{nr}}} 2G_s \frac{2^{nr(1+s)}}{A_{n,r}} 2^{-s\widetilde{H}_{1+s}(\bar{\A}_n|R^n)_{V_{\A^n \rightarrow \bar{\A}_n}\rho^{\otimes n}V^*_{\A^n \rightarrow \bar{\A}_n}}} \\
   =&\sum_{i=1} ^{\frac{A_{n,r}}{2^{nr}}} 2G_s \frac{2^{nr(1+s)}}{A_{n,r}} 2^{-sn\widetilde{H}_{1+s}(\A|\R)_{\rho}} \\
   =& 2G_s 2^{nrs}2^{-sn\widetilde{H}_{1+s}(\A|\R)_{\rho}} ,
   \end{split}
\end{equation}
where the first equality comes from Proposition~(\romannumeral4) and (\romannumeral5).

For each $i$, define
\[
p_i=\tr V_i UV_{\A^n \rightarrow \bar{\A}_n}(\rho_{\A\R}^{\otimes n})V^*_{\A^n \rightarrow \bar{\A}_n}U^*V_i^*.
\]
Using the inequality~$D(\rho\|\sigma) \geq P^2(\rho,\sigma)$, we lower bound the first term in \eqref{equ:upper} as
\begin{equation}
\label{equ:lower1}
\begin{split}
    &\mathbb{E}_{\mathbb{U}(\bar{\A}_n)} \sum_{i=1}^{\frac{A_{n,r}}{2^{nr}}} D(V_iU V_{\A^n \rightarrow \bar{\A}_n}(\rho_{\A\R}^{\otimes n})V^*_{\A^n \rightarrow \bar{\A}_n}U^*V_i^* \big\| \tfrac{I_{\tilde{\A}_n}}{A_{n,r}}\otimes \rho_{\R}^{\otimes n})  \\
=&\mathbb{E}_{\mathbb{U}(\bar{\A}_n)} \Big\{\sum_{i=1}^{\frac{A_{n,r}}{2^{nr}}} p_i D\left(\tfrac{1}{p_i}V_iU V_{\A^n \rightarrow \bar{\A}_n}(\rho_{\A\R}^{\otimes n})V^*_{\A^n \rightarrow \bar{\A}_n}U^*V_i^* \big\| \tfrac{I_{\tilde{\A}_n}}{|\tilde{\A}_n|}\otimes \rho_{\R}^{\otimes n}\right)
+ D\left(\{p_1,\ldots, p_{\frac{A_{n,r}}{2^{nr}}}\} \big \| \{\tfrac{2^{nr}}{A_{n,r}},\ldots, \tfrac{2^{nr}}{A_{n,r}}\}\right) \Big\}
\\
\geq& \mathbb{E}_{\mathbb{U}(\bar{\A}_n)} \sum_{i=1}^{\frac{A_{n,r}}{2^{nr}}} p_i D\left(\tfrac{1}{p_i}V_iU V_{\A^n \rightarrow \bar{\A}_n}(\rho_{\A\R}^{\otimes n})V^*_{\A^n \rightarrow \bar{\A}_n}U^*V_i^* \big\| \tfrac{I_{\tilde{\A}_n}}{|\tilde{\A}_n|}\otimes \rho_{\R}^{\otimes n}\right) 
\\
\geq &  \mathbb{E}_{\mathbb{U}(\bar{\A}_n)} \sum_{i=1}^{\frac{A_{n,r}}{2^{nr}}} p_i P^2\left(\tfrac{1}{p_i}{V_iU V_{\A^n \rightarrow \bar{\A}_n}(\rho_{\A\R}^{\otimes n})V^*_{\A^n \rightarrow \bar{\A}_n}U^*V_i^*},\tfrac{I_{\tilde{\A}_n}}{|\tilde{\A}_n|}\otimes \rho_{\R}^{\otimes n}\right) 
\\
= &  \mathbb{E}_{\mathbb{U}(\bar{\A}_n)} \left(1-\sum_{i=1}^{\frac{A_{n,r}}{2^{nr}}} p_i F^2\left(\tfrac{1}{p_i}V_iU V_{\A^n \rightarrow \bar{\A}_n}(\rho_{\A\R}^{\otimes n})V^*_{\A^n \rightarrow \bar{\A}_n}U^*V_i^*,\tfrac{I_{\tilde{\A}_n}}{|\tilde{\A}_n|}\otimes \rho_{\R}^{\otimes n}\right)\right).
\end{split}
\end{equation}
By Ulmman's theorem, for each $i$, there exists an isometry $H^i_{\B^n \rightarrow \tilde{\B}_n\A^n\B^n}$ such that
\begin{equation}
\label{equ:umequi}
\begin{split}
&F^2\left(H^i_{\B^n \rightarrow \tilde{\B}_n\A^n\B^n}(\tfrac{1}{p_i}{V_i UV_{\A^n \rightarrow \bar{\A}_n}(\rho_{\A\B\R}^{\otimes n})V^*_{\A^n \rightarrow \bar{\A}_n}U^*V_i^*}){H^{i*}_{\B^n \rightarrow \tilde{\B}_n\A^n\B^n}}, \Phi_{\tilde{\A}_n\tilde{\B}_n} \otimes \rho_{\A\B\R}^{\otimes n}\right) 
\\
=&F^2\left(\tfrac{1}{p_i}{V_iU V_{\A^n \rightarrow \bar{\A}_n}(\rho_{\A\R}^{\otimes n})V^*_{\A^n \rightarrow \bar{\A}_n}U^*V_i^*},\tfrac{I_{\tilde{\A}_n}}{|\tilde{\A}_n|}\otimes \rho_{\R}^{\otimes n}\right),
\end{split}
\end{equation}
where $\Phi_{\tilde{\A}_n\tilde{\B}_n}$ is the maximally entangled state on  $\tilde{\A}_n\tilde{\B}_n$. 

Combining Eq.~(\ref{equ:lower1}) and Eq.~(\ref{equ:umequi}), and using the concavity of fidelity, we have 
\begin{equation}
\label{equ:lower2}
\begin{split}
&\mathbb{E}_{\mathbb{U}(\bar{\A}_n)} \sum_{i=1}^{\frac{A_{n,r}}{2^{nr}}} D\left(V_iU V_{\A^n \rightarrow \bar{\A}_n}(\rho_{\A\R}^{\otimes n})V^*_{\A^n \rightarrow \bar{\A}_n}U^*V_i^* \big\| \tfrac{I_{\tilde{\A}_n}}{A_{n,r}}\otimes \rho_{\R}^{\otimes n}\right)  
\\
\geq &\mathbb{E}_{U \sim \rm{Haar}} \left(1-\sum_{i=1}^{\frac{A_{n,r}}{2^{nr}}} p_i F^2\left(H^i_{\B^n \rightarrow \tilde{\B}_n\A^n\B^n}(\tfrac{1}{p_i}{V_i UV_{\A^n \rightarrow \bar{\A}_n}(\rho_{\A\B\R}^{\otimes n})V^*_{\A^n \rightarrow \bar{\A}_n}U^*V_i^*}){H^{i *}_{\B^n \rightarrow \tilde{\B}_n\A^n\B^n}}, \Phi_{\tilde{\A}_n\tilde{\B}_n} \otimes \rho_{\A\B\R}^{\otimes n}\right)\right) \\
\geq & \mathbb{E}_{U \sim \rm{Haar}} \left(1- F^2(\sum_{i=1}^{\frac{A_{n,r}}{2^{nr}}} H^i_{\B^n \rightarrow \tilde{\B}_n\A^n\B^n}(V_i UV_{\A^n \rightarrow \bar{\A}_n}(\rho_{\A\B\R}^{\otimes n})V^*_{\A^n \rightarrow \bar{\A}_n}U^*V_i^*){H^{i *}_{\B^n \rightarrow \tilde{\B}_n\A^n\B^n}}, \Phi_{\tilde{\A}_n\tilde{\B}_n} \otimes \rho_{\A\B\R}^{\otimes n})\right),
\end{split}
\end{equation}
Eq.~(\ref{equ:lower2}) and Eq.~(\ref{equ:upper}) imply that there exists a unitary operator $U$ such that
\begin{equation}
\label{equ:newstmer}
    \begin{split}
     &P^{\rm{merg},d}_{\A^n \rightarrow \B^n}(\rho_{\A\B\R}^{\otimes n},nr)   \\
 \leq &   P\left(\sum_{i=1}^{\frac{A_{n,r}}{2^{nr}}} H^i_{\B^n \rightarrow \tilde{\B}_n\A^n\B^n}(V_i UV_{\A^n \rightarrow \bar{\A}_n}(\rho_{\A\B\R}^{\otimes n})V^*_{\A^n \rightarrow \bar{\A}_n}U^*V_i^*)H^{i *}_{\B^n \rightarrow \tilde{\B}_n\A^n\B^n}, \Phi_{\tilde{\A}_n\tilde{\B}_n} \otimes \rho_{\A\B\R}^{\otimes n}\right) \\
 \leq & \sqrt{2G_s 2^{nrs}2^{-sn\widetilde{H}_{1+s}(\A|\R)_{\rho}} }.
    \end{split}
\end{equation}

From Eq.~(\ref{equ:newstmer}), we have

\begin{equation}
     E^{\rm{merg},d}(\rho_{\A\B\R},r)  \geq \frac{1}{2}\sup_{0<s<1} s\left(\tilde{H}_{1+s}(\A|\R)_\rho-r\right)=\frac{1}{2}\sup_{0<s<1} s\left(-H^{\uparrow}_{\frac{1}{1+s}}(\A|\B)_\rho-r\right),
 \end{equation}
 where the equality is due to Proposition~(\romannumeral3).

\textbf{Case 2:} $H(\A|\R)<0$.

We choose $r'>r$ and let  $\Phi_{\bar{\A}\bar{\B}}$ be a maximally entangled state with $|\bar{\A}|=2^{r'}$.
Then $H(\A\bar{\A}|\R)_{\rho \otimes \Phi}>0$ and \textbf{Case 1} applies to $\rho_{\A\B\R}^{\otimes n}\otimes \Phi_{\bar{\A}\bar{\B}}^{\otimes n} $. Consequently, there exists a LOCC channel $\Lambda_{\A^n\bar{\A}^n:\B^n\bar{\B}^n \rightarrow \tilde{\A}_n:\A^nB^n\bar{\A}^n\bar{\B}^n\tilde{\B}_n}$ such that 
\begin{equation}
\label{equ:larmer}
\begin{split}
    &P\left(\Lambda_{\A^n\bar{\A}^n:\B^n\bar{\B}^n \rightarrow \tilde{\A}_n:\A^nB^n\bar{\A}^n\bar{\B}^n\tilde{\B}_n}(\rho_{\A\B\R}^{\otimes n} \otimes \Phi_{\bar{\A}\bar{\B}}^{\otimes n}),\rho_{\A\B\R}^{\otimes n} \otimes \Phi_{\bar{\A}\bar{\B}}^{\otimes n} \otimes \Phi_{\tilde{\A}_n\tilde{\B}_n}\right) \\
    \leq &\sqrt{2G_s 2^{n(r'-r)s}2^{-sn\widetilde{H}_{1+s}(\A\bar{\A}|\R)_{\rho \otimes \Phi}} },
    \end{split}
\end{equation}
where $\tilde{\A}_n=\tilde{\B}_n=2^{nr'-nr}$. Eq.~(\ref{equ:larmer}) implies that if we apply the LOCC channel $$\tr_{\bar{\A}^n\bar{B}^n} \circ\Lambda_{\A^n\bar{\A}^n:\B^n\bar{\B}^n \rightarrow \tilde{\A}_n:\A^nB^n\bar{\A}^n\bar{\B}^n\tilde{\B}_n}$$ to $\rho_{\A\B\R}^{\otimes n}\otimes \Phi_{\bar{\A}\bar{\B}}^{\otimes n}$, we can obtain
\begin{equation}
    \label{equ:encost}
\begin{split}
&P^{\rm{merg},c}_{\A^n \rightarrow \B^n}(\rho_{\A\B\R}^{\otimes n},nr)   \\
\leq & P\left(\tr_{\bar{\A}^n\bar{B}^n}\circ\Lambda_{\A^n\bar{\A}^n:\B^n\bar{\B}^n \rightarrow \tilde{\A}_n:\A^nB^n\bar{\A}^n\bar{\B}^n\tilde{\B}_n}(\rho_{\A\B\R}^{\otimes n} \otimes \Phi_{\bar{\A}\bar{\B}}^{\otimes n}),\rho_{\A\B\R}^{\otimes n}  \otimes \Phi_{\tilde{\A}_n\tilde{\B}_n}\right) \\
\leq &P\left(\Lambda_{\A^n\bar{\A}^n:\B^n\bar{\B}^n \rightarrow \tilde{\A}_n:\A^nB^n\bar{\A}^n\bar{\B}^n\tilde{\B}_n}(\rho_{\A\B\R}^{\otimes n} \otimes \Phi_{\bar{\A}\bar{\B}}^{\otimes n}),\rho_{\A\B\R}^{\otimes n} \otimes \Phi_{\bar{\A}\bar{\B}}^{\otimes n} \otimes \Phi_{\tilde{\A}_n\tilde{\B}_n}\right) \\
\leq & \sqrt{2G_s 2^{n(r'-r)s}2^{-sn\widetilde{H}_{1+s}(\A\bar{\A}|\R)_{\rho \otimes \Phi}} }\\
=&\sqrt{2G_s 2^{-nrs}2^{-sn\widetilde{H}_{1+s}(\A|\R)_{\rho }} }.
\end{split}
\end{equation}
Eq.~(\ref{equ:encost}) gives 
\begin{equation}
    E^{\rm{merg},c}(\rho_{\A\B\R},r)  \geq \frac{1}{2}\sup_{0<s<1} s\left(\widetilde{H}_{1+s}(\A|\R)_\rho+r\right)=\frac{1}{2}\sup_{0<s<1} s\left(-H^{\uparrow}_{\frac{1}{1+s}}(\A|\B)_\rho+r\right).
 \end{equation}
\end{proof}

In the following, we derive a converse bound on the error exponent for quantum state merging. Our approach relies on the one-shot characterization of the minimal entanglement cost $E(\rho_{\A\B\R},\epsilon)$  for merging the state $\rho_{\A\B\R}$ within error $\epsilon$.

It has been shown~ that  $E(\rho_{\A\B\R},\epsilon)$  can be lower bounded by the partially smoothed conditional-min entropy~\cite{Anshu_2020}:
\begin{equation}
\label{equ:con2}
    E(\rho_{\A\B\R},\epsilon) \geq -H^\epsilon_{\rm{min}}(\A|\dot{\R})_\rho\geq  -H^\epsilon_{\rm{min}}(\A|\R)_\rho,
\end{equation}
where
\begin{align*}
H^\epsilon_{\rm min}(\A|\dot{\R})_\rho
&:=
-\inf_{\substack{
 \tilde\rho_{\A\R}\in\mc S_{\leq}(\A\R):\\
 P(\tilde\rho_{\A\R},\rho_{\A\R})\leq\epsilon,\;
 \tilde\rho_{\R}\leq\rho_{\R}
}}
D_{\rm max}
\left(
 \tilde\rho_{\A\R}
 \middle\|
 I_{\A}\ox\rho_{\R}
\right),\\
H^\epsilon_{\rm min}(\A|\R)_\rho
&:=
-\inf_{\substack{
 \tilde\rho_{\A\R}\in\mc S_{\leq}(\A\R):\\
 P(\tilde\rho_{\A\R},\rho_{\A\R})\leq\epsilon
}}
D_{\rm max}
\left(
 \tilde\rho_{\A\R}
 \middle\|
 I_{\A}\ox\rho_{\R}
\right).
\end{align*}
are the partially smoothed conditional-min entropy and smooth conditional-min entropy.

Equivalently, the maximal entanglement distillation for merging the state $\rho_{\A\B\R}$ within error $\epsilon$, $D(\rho_{\A\B\R},\epsilon)=-E(\rho_{\A\B\R},\epsilon) $, satisfies
\begin{equation}
\label{equ:con1}
    D(\rho_{\A\B\R},\epsilon) \leq H^\epsilon_{\rm{min}}(\A|\dot{\R})_\rho\leq  H^\epsilon_{\rm{min}}(\A|\R)_\rho.
\end{equation}
It follows that $P^{\rm{merg},c}_{\A \rightarrow \B}(\rho_{\A\B\R},\lambda)$ and $P^{\rm{merg},d}_{\A \rightarrow \B}(\rho_{\A\B\R},\lambda)$ satisfy
\begin{equation}
     \begin{split}
         &P^{\rm{merg},c}_{\A \rightarrow \B}(\rho_{\A\B\R},\lambda)\\
         =&\min\{\delta~|~   E(\rho_{\A\B\R},\delta)\leq \lambda \} \\
         \geq &\min\{\delta~|~   H^\delta_{\rm{min}}(\A|\R)_\rho\geq -\lambda   \} \\
         =&\min\{P(\tilde{\rho}_{\A\R},\rho_{\A\R})~|~\tilde{\rho}_{\A\R}\leq 2^{\lambda}I_{\A} \ox \rho_{\R}  \}.
     \end{split}
 \end{equation}
and
 \begin{equation}
     \begin{split}
         &P^{\rm{merg,d}}_{\A \rightarrow \B}(\rho_{\A\B\R},\lambda)\\
         =&\min\{\delta~|~   D(\rho_{\A\B\R},\delta)\geq \lambda \} \\
         \geq &\min\{\delta~|~   H^\delta_{\rm{min}}(\A|\R)_\rho\geq \lambda   \} \\
         =&\min\{P(\tilde{\rho}_{\A\R},\rho_{\A\R})~|~\tilde{\rho}_{\A\R}\leq 2^{-\lambda}I_{\A} \ox \rho_{\R}  \}.
     \end{split}
 \end{equation}
Taking the asymptotic limit, the following error exponent of smooth conditional min-entropy~\cite{Li_2023} provides a converse bound on the error exponent for quantum state merging: 
\begin{equation}
\lim_{n\to\infty}
-\frac1n
\log
\inf\left\{
 \delta\in[0,1]
 \,\middle|\,
 H^\delta_{\rm min}
 (\A^n|\R^n)_{\rho^{\ox n}}
 \geq nr
\right\}
=
\frac12
\sup_{s\geq0}
s\left(
 \widetilde H_{1+s}(\A|\R)_\rho-r
\right).
\end{equation}
As a result, we have
\begin{equation}
    E^{\rm{merg},c}(\rho_{\A\B\R},r)  \leq \frac{1}{2}\sup_{s>0} s\left(\widetilde{H}_{1+s}(\A|\R)_\rho+r\right)=\frac{1}{2}\sup_{s>0} s\left(-H^{\uparrow}_{\frac{1}{1+s}}(\A|\B)_\rho+r\right),
\end{equation}
and
\begin{equation}
     E^{\rm{merg},d}(\rho_{\A\B\R},r)  \leq \frac{1}{2}\sup_{s>0} s\left(\widetilde{H}_{1+s}(\A|\R)_\rho-r\right)
     =\frac{1}{2}\sup_{s>0} s\left(-H^{\uparrow}_{\frac{1}{1+s}}(\A|\B)_\rho-r\right).
\end{equation}

The converse and achievability bounds coincide whenever
\[
r\leq
-\left.
\frac{\mathrm d}{\mathrm ds}
\left[
 s\widetilde H_{1+s}(\A|\R)_\rho
\right]
\right|_{s=1}
\]
in the entanglement-cost setting, and whenever
\[
r\geq
\left.
\frac{\mathrm d}{\mathrm ds}
\left[
 s\widetilde H_{1+s}(\A|\R)_\rho
\right]
\right|_{s=1}
\]
in the entanglement-distillation setting.


\section{Entanglement distillation}
\label{sec:entangle}

Let $\rho_{\A\B} \in \mc{S}(\A\B)$ be a bipartite quantum state shared by Alice and Bob.
In the resource theory of entanglement, entanglement distillation aims at transforming $\rho_{\A\B}$
into a maximally entangled state by means of a prescribed class of free operations.

We consider three classes of free operations: one-way LOCC, two-way LOCC, and non-entangling channels, i.e., quantum channels that preserve the set of separable states. We denote the corresponding sets by
$\mc{F}_{\rm{\rightarrow}}$, $\mc{F}_{\leftrightarrow}$ and $\mc{F}_{\rm{NE}}$, respectively, and define $\mc{G}:=\{\rightarrow,\leftrightarrow,\rm{NE}\}$.

For $t \in \mc{G}$, an $t$-assisted
entanglement distillation protocol for $\rho_{\A\B}$ consists of $(d,\mscr{L}_{\A:\B \rightarrow \M:\M'})$, where $\mscr{L}_{\A:\B \rightarrow \M:\M'} \in \mc{F}_t$, with $|\M|=|\M'|=d$. Let $\Phi_{\M\M'}$ denote the maximally entangled state of Schmidt rank $d$.
The error of this protocol is defined as
\[
p_{\rm{err}}^t(\mscr{L}\mid\rho_{\A\B}):=P(\mscr{L}(\rho_{\A\B}),\Phi_{\M\M'}).
\]
For a fixed $d=2^\lambda$, the optimal error among all protocols $(d,\mscr{L}_{\A:\B \rightarrow \M:\M'})$ is defined as 
\[
p_{\rm{err}}^t(\rho_{\A\B},\lambda):=\min_{(d, \mscr{L})} p_{\rm{err}}^t(\mscr{L}\mid\rho_{\A\B}).
\]
Since $\mc{F}_{\rightarrow} \subset \mc{F}_{\leftrightarrow} \subset \mc{F}_{\rm{NE}}$, it follows that for all $\rho_{\A\B}\in \mc{S}(\A\B)$ and $\lambda \geq 0$,
\begin{equation}
\label{equ:relation4}
    p_{\rm{err}}^{\rm{NE}}(\rho_{\A\B},\lambda) \leq p_{\rm{err}}^{\leftrightarrow}(\rho_{\A\B},\lambda) \leq p_{\rm{err}}^{\rightarrow}(\rho_{\A\B},\lambda).
\end{equation}

For a rate $r \geq 0$, the error exponent for $t$-assisted entanglement distillation is defined as 
\begin{equation}
    E^{t}(\rho_{\A\B},r):=\liminf_{n \rightarrow \infty} \frac{-1}{n} \log p_{\rm{err}}^{t}(\rho_{\A\B}^{\ox n},nr).
\end{equation}
In this section, we first establish achievability bounds on these error exponents. 

\begin{theorem}
\label{thm:entan}
    Let $\rho_{\A\B} \in \mc{S}(\A\B)$ and $r \geq 0$, for any $t \in \mc{G}\setminus \{\rm{NE}\}$, it holds that
    \begin{equation}
         E^{t}(\rho_{\A\B},r) \geq \frac{1}{2}\sup_{0<s<1} s\left(\lim_{m \rightarrow \infty}\tfrac{1}{m}\sup_{\mscr{L}_{\A^m:\B^m \rightarrow \C:\D}\in \mc{F}_t}I_{\frac{1}{1+s}}(\C \rangle \D)_{\mscr{L}(\rho_{\A\B}^{\ox m})}-r\right).
\end{equation}
Moreover,
\begin{equation}
         E^{\rm{NE}}(\rho_{\A\B},r) \geq \frac{1}{2}\sup_{0<s<1} s\left(\liminf_{m \rightarrow \infty}\tfrac{1}{m}\sup_{\mscr{L}_{\A^m:\B^m \rightarrow \C:\D}\in \mc{F}_{\rm{NE}}}I_{\frac{1}{1+s}}(\C \rangle \D)_{\mscr{L}(\rho_{\A\B}^{\ox m})}-r\right).
    \end{equation}
\end{theorem}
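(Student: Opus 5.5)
The plan is to reduce entanglement distillation to the state-merging achievability of Theorem~\ref{thm:statemer}, run on the output of a preprocessing free operation, and then regularize. Fix $m$ and a free operation $\mscr{L}_{\A^m:\B^m\to\C:\D}\in\mc{F}_t$, and set $\sigma_{\C\D}:=\mscr{L}(\rho_{\A\B}^{\ox m})$. Purify it as $\sigma_{\C\D\R}$. One-way state merging from Alice ($\C$) to Bob ($\D$) with entanglement distillation, as in Case~1 of Theorem~\ref{thm:statemer}, uses a Haar-random unitary on $\C^k$, a projective measurement onto the blocks $V_i$, classical communication of $i$, and Bob's Uhlmann isometry $H^i$. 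This protocol is one-way LOCC. Applied to $\sigma^{\ox k}$ with rate $R=mr$ and $k$ copies, it gives a purified-distance error
\[
P\bigl(\text{output},\ \sigma_{\C\D\R}^{\ox k}\ox\Phi_{\tilde{\A}_k\tilde{\B}_k}\bigr)\le\sqrt{2G_s\,2^{kRs}\,2^{-sk\widetilde{H}_{1+s}(\C|\R)_\sigma}}.
\]
I then trace out everything except $\tilde{\A}_k\tilde{\B}_k$; by monotonicity of the purified distance, this bounds the distillation error at $d=2^{kmr}$. The protocol is $\mscr{L}^{\ox k}$ followed by one-way LOCC. This composition lies in $\mc{F}_\rightarrow$ when $t=\rightarrow$, in $\mc{F}_\leftrightarrow$ for two-way LOCC, and in $\mc{F}_{\rm NE}$ because LOCC is non-entangling and the class is closed under composition.

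Next I convert the entropy. By the duality in Proposition~\ref{prop:mainpro}(iii), with $\alpha=\frac{1}{1+s}$, $\beta=1/\alpha=1+s$ and $\tau=I$,
\[
\widetilde{H}_{1+s}(\C|\R)_\sigma=\inf_{\sigma_\D}D_{\frac{1}{1+s}}(\sigma_{\C\D}\|I_\C\ox\sigma_\D)=I_{\frac{1}{1+s}}(\C\rangle\D)_\sigma,
\]
which is exactly the computation used at the end of Case~1 of Theorem~\ref{thm:statemer}. Hence, for $n=km$,
\[
p^t_{\rm err}(\rho^{\ox km}_{\A\B},kmr)\le\sqrt{2G_s}\,2^{-\frac{ks}{2}\left(I_{\frac{1}{1+s}}(\C\rangle\D)_{\sigma}-mr\right)}.
\]
I take the best $\mscr{L}$; the supremum is attained or approached, and a near-optimal choice suffices. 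This gives the exponent $\frac{s}{2}\bigl(\frac1m\sup_{\mscr L}I_{\frac1{1+s}}-r\bigr)$ along the subsequence $n=km$.

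For general $n=km+j$ with $0\le j<m$, Alice and Bob discard the extra $j$ copies. This is free under all three classes. The target dimension changes from $2^{nr}$ to $2^{kmr}$, and I handle this either by requesting slightly more entanglement at rate $r+\delta$ or by noting that the loss is $O(m)$ in the exponent prefactor. In either case the exponent per $n$ loses only $O(1/k)$, so $\liminf_n$ attains the $m$-block bound.

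Optimizing over $s\in(0,1)$ and then over $m$ gives the result. For $t\in\{\rightarrow,\leftrightarrow\}$ the quantity $a_m:=\sup_{\mscr L\in\mc F_t}I_{\frac1{1+s}}(\C\rangle\D)_{\mscr L(\rho^{\ox m})}$ is superadditive, because tensor products of free operations are free and Proposition~\ref{prop:mainpro}(v) gives additivity. Fekete's lemma then shows $\lim_m a_m/m$ exists and equals $\sup_m a_m/m$. For NE, tensor products of non-entangling maps need not be non-entangling, so superadditivity may fail. I therefore only take $\sup_m\ge\liminf_m$, which explains the weaker $\liminf$ in the statement.

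The main obstacle is checking the superadditivity and existence of the limit, and in particular that the merging protocol's classical communication and block measurement keep the composite inside $\mc F_t$. A related concern is handling the subnormalized blocks via the $p_i$-weighted decomposition, but this is already done in Theorem~\ref{thm:statemer}.
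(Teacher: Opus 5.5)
You handle $t\in\{\rightarrow,\leftrightarrow\}$ correctly, but the non-entangling case has a genuine gap. In the blocking step you run merging on $\mscr{L}(\rho_{\A\B}^{\ox m})^{\ox k}$, so your preprocessing is $\mscr{L}^{\ox k}$, and you justify membership in $\mc{F}_{\rm NE}$ by closure under composition. Composition is not the problem: $\mscr{L}^{\ox k}$ itself need not be non-entangling, which is exactly the failure you point out later when discussing superadditivity.

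Here is a concrete example. Take $|\A|=|\B|=d\geq2$, let $F$ be the swap of $\A$ and $\B$, and set $\mscr{L}(X)=\frac12(X+FXF)$. This channel maps separable states to separable states. Yet $\mscr{L}^{\ox2}$ sends the product state $\Phi_{\A_1\A_2}\ox\Phi_{\B_1\B_2}$ to $\frac12\Phi_{\A_1\A_2}\ox\Phi_{\B_1\B_2}+\frac12\Phi_{\A_1\B_2}\ox\Phi_{\A_2\B_1}$. Across $\A_1\A_2:\B_1\B_2$, this output has overlap at least $\frac12$ with the Schmidt-rank-$d^2$ maximally entangled state $\Phi_{\A_1\B_2}\ox\Phi_{\A_2\B_1}$. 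Since $\frac12>\frac1{d^2}$, the output is entangled. So your per-$m$ bound $E^{\rm NE}(\rho_{\A\B},r)\geq\frac{s}{2}\bigl(\frac1m a_m-r\bigr)$ is not established. Neither is the resulting $\sup_m$ bound, which would be stronger than the stated $\liminf_m$ bound.

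The repair is to drop blocking in the NE case, i.e.\ take $k=1$ and $m=n$. For each $n$ and each $\mscr{L}\in\mc{F}_{\rm NE}$ acting on all $n$ copies, apply the one-shot merging bound from Case~1 of Theorem~\ref{thm:statemer} directly to $\mscr{L}(\rho_{\A\B}^{\ox n})$ with target Schmidt rank $2^{nr}$. The one-way LOCC merging map is non-entangling, so the composite lies in $\mc{F}_{\rm NE}$. With $a_n$ now defined over $\mc{F}_{\rm NE}$, this gives $p^{\rm NE}_{\rm err}(\rho_{\A\B}^{\ox n},nr)\leq\sqrt{2G_s}\,2^{-\frac{s}{2}(a_n-nr)}$. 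Taking $\liminf_n$ of $-\frac1n\log$ yields exactly the stated $\liminf$ bound, with no remainder handling needed.

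This is the paper's proof. It is used uniformly for all three classes, and superadditivity plus Fekete's lemma are invoked only to turn $\liminf$ into $\lim$ for the LOCC classes. For those classes your blocking route is a legitimate alternative. It uses additivity of $\widetilde{H}_{1+s}$ on $\sigma^{\ox k}$ plus a rate adjustment for $n=km+j$, and gives $\sup_m a_m/m$ directly. The paper's argument never needs tensor-power preprocessing in the achievability step itself.
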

\begin{proof}
We first prove this claim for $t=\rightarrow$. 
Fix $m \in \mathbb{N}$ and a channel $\mscr{L}_{\A^m:\B^m \rightarrow \C:\D}\in \mc{F}_\rightarrow$, and define
$$\phi^{\mc{L}}_{\C\D}:=\mscr{L}(\rho_{\A\B}^{\ox m}).$$ 
Let $\phi^{\mc{L}}_{\C\D\R}$ be a purification of $\phi^{\mc{L}}_{\C\D}$.

By the proof of Theorem~\ref{thm:statemer}, there exists a one-way LOCC channel 
$$\Lambda_{\C:\D \rightarrow \bar{\M}_m:\tilde{\M}_m}=\sum_i \mscr{E}^i_{\C \rightarrow \bar{\M}_m} \ox \mscr{D}^i_{\D \rightarrow \tilde{\M}_m},$$
such that for any $s\in (0,1)$,
\begin{equation}
\label{equ:entdisapp}
\begin{split}
&P(\Lambda_{\C:\D \rightarrow \bar{\M}_m:\tilde{\M}_m}\circ\mscr{L}(\rho_{\A\B}^{\ox m}),\Phi_{\bar{\M}_m\tilde{\M}_m})\\
=&P(\Lambda_{\C:\D \rightarrow \bar{\M}_m:\tilde{\M}_m}(\phi^{\mc{L}}_{\C\D}),\Phi_{\bar{\M}_m\tilde{\M}_m}) \\
\leq  &\sqrt{2G_s 2^{mrs}2^{s\widetilde{D}_{1+s}(\phi^{\mc{L}}_{\C\R})\|I_{\C} \ox \phi^{\mc{L}}_{\R}))}},
    \end{split}
\end{equation}
where $\Phi_{\bar{\M}_m\tilde{\M}_m}$ is the maximally entangled state of Schmidt rank $|\bar{\M}_m|=2^{mr}$. 

Eq.~(\ref{equ:entdisapp}) implies that,
\begin{equation}
\label{equ:locgen}
    p^{\rightarrow}_{\rm{err}}(\rho_{\A\B}^{\ox m},mr) \leq \sqrt{2G_s 2^{mrs}2^{s \inf_{\mc{L} \in \mc{F}_t}\widetilde{D}_{1+s}(\phi^{\mc{L}}_{\C\R})\|I_{\C} \ox \phi^{\mc{L}}_{\R}))}}.
\end{equation}
Taking logarithms, dividing $m$, and letting $m \rightarrow \infty$ gives
\begin{equation}
         E^{\rightarrow}(\rho_{\A\B},r) \geq \frac{1}{2} s\left(\liminf_{m\rightarrow \infty}\tfrac{1}{m}\sup_{\mc{L} \in \mc{F}_t}I_{\frac{1}{1+s}}(\C \rangle \D)_{\mscr{L}(\rho_{\A\B}^{\ox m})}-r\right).
    \end{equation}
Since the classes $\mc F_{\rightarrow}$ are closed under tensor products and the Petz--R\'enyi coherent information is additive on product states, the sequence inside the regularization is superadditive. Hence its normalized limit exists and equals its supremum by Fekete's lemma. Then, Optimizing over $0<s<1$ yields the claimed bound for $t=\rightarrow$.

For $t=\leftrightarrow$ and $t=\rm{NE}$, the same argument applies since $\mc{F}_\rightarrow \subseteq \mc{F}_t$. 
In particular, the initial operation $\mscr{L}$ can be chosen from the larger class, 
while the one-way LOCC post-processing $\Lambda$ constructed above  also belongs to $\mc{F}_t$. 
This establishes the bound for all $t\in\mc{G}$. However, for $t=\rm{NE}$, the non-entangling class are not closed under tensor products in general, the Fekete's lemma can not be applied. Therefore, the $\liminf$ must be retained in this setting.

\end{proof}

The error exponent under non-entangling operations was characterized in~\cite{lin2026exponentialanalysisentanglementdistillation}. Specifically, for every $\rho_{\A\B}\in \mc{S}(\A\B)$ and $r > 0$,
\begin{equation}
\label{equ::renlim}
    E^{\rm{NE}}(\rho_{\A\B},r)=\frac{1}{2}\lim_{n \rightarrow \infty} \sup_{s>0} s\left( \tfrac{1}{n}\inf_{\sigma_{\A^n\B^n} \in \rm{SEP}(\A^n:\B^n)} D_{\frac{1}{1+s}}(\rho_{\A\B}^{\ox n}\|\sigma_{\A^n\B^n})-r\right),
\end{equation}
where $\rm{SEP}(\A:\B)$ denotes the set of separable states on $AB$.

Combining the above characterization with the monotonicity relation Eq.~(\ref{equ:relation4}) immediately yields the following converse bound.

\begin{corollary}
\label{cor:entan}
     Let $\rho_{\A\B} \in \mc{S}(\A\B)$ and $r > 0$, for any $t \in \mc{G}\setminus\{\rm{NE}\}$, it holds that
    \begin{equation}
         E^{t}(\rho_{\A\B},r) \leq \frac{1}{2}\lim_{n \rightarrow \infty} \sup_{s>0} s\left( \frac{1}{n}\inf_{\sigma_{\A^n\B^n} \in \rm{SEP}(\A^n:\B^n)} D_{\frac{1}{1+s}}(\rho_{\A\B}^{\ox n}\|\sigma_{\A^n\B^n})-r\right).
    \end{equation}
\end{corollary}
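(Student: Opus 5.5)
The plan is to obtain Corollary~\ref{cor:entan} from the characterization~\eqref{equ::renlim} of $E^{\rm{NE}}$ in~\cite{lin2026exponentialanalysisentanglementdistillation}, combined with the ordering of optimal errors in Eq.~(\ref{equ:relation4}).

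First, fix $t\in\{\rightarrow,\leftrightarrow\}$. Since $\mc{F}_t\subseteq\mc{F}_{\rm NE}$, Eq.~(\ref{equ:relation4}) gives, for every $n$ and every rate $r$,
\[
p^{\rm NE}_{\rm err}(\rho_{\A\B}^{\ox n},nr)\leq p^{t}_{\rm err}(\rho_{\A\B}^{\ox n},nr).
\]
Taking $-\frac{1}{n}\log$ reverses this inequality, and taking $\liminf_{n\to\infty}$ on both sides preserves it. With the definition of $E^t$ as a $\liminf$, this yields
\[
E^{t}(\rho_{\A\B},r)\leq E^{\rm NE}(\rho_{\A\B},r).
\]

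Second, substitute the exact formula~\eqref{equ::renlim} for $E^{\rm NE}(\rho_{\A\B},r)$, valid for all $r>0$. This gives the right-hand side of the corollary. No conversion between Rényi quantities is needed, because the expression in the corollary is copied verbatim from~\eqref{equ::renlim}.

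The only point that needs care is bookkeeping. I need to check that the error criterion and the normalization used in~\cite{lin2026exponentialanalysisentanglementdistillation} agree with ours:
\begin{itemize}
\item the errors are purified distance to $\Phi_{\M\M'}$ with $d=2^{nr}$;
\item the exponents are defined with a $\liminf$, or, if that work uses a $\limsup$, the inequality only becomes stronger;
\item the restriction $r>0$ matches the hypothesis of~\eqref{equ::renlim}.
\end{itemize}
Once these match, the corollary follows immediately, and there is no genuine obstacle.
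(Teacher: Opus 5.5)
Your proposal is correct and follows the paper's own argument: the ordering $p_{\rm err}^{\rm NE}\le p_{\rm err}^{t}$ from Eq.~(\ref{equ:relation4}) gives $E^{t}(\rho_{\A\B},r)\le E^{\rm NE}(\rho_{\A\B},r)$ after taking $-\frac{1}{n}\log$ and the $\liminf$. Substituting the characterization~\eqref{equ::renlim} then yields the bound. Your bookkeeping remark is also right: if the cited exponent were defined by a $\limsup$, the $\liminf$-based $E^{\rm NE}$ would be bounded by it, so the upper bound still holds.
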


Next, we show that for maximally correlated states, the converse bound derived above matches the achievability bound when the rate $r$ is above a critical point.

\begin{theorem}
Let $\rho_{\A\B}$ be a maximally correlated state and $r> 0$. Then, for any $t \in \{\rightarrow,\leftrightarrow\}$,
\begin{equation}
\label{equ:entlowe}
    E^{t}(\rho_{\A\B},r) \leq \frac{1}{2} \sup_{s>0} s\left(I_{\frac{1}{1+s}}(\A\rangle \B)_\rho-r\right).
    \end{equation}
Moreover, if $r\geq \frac{\rm{d}}{\rm{ds}}sI_{\frac{1}{1+s}}(\A \rangle \B)_{\rho}|_{s=1}$, then
\begin{equation}
\label{equ:exactlocc}
     E^{t}(\rho_{\A\B},r)= \frac{1}{2}\sup_{0<s<1} s(I_{\frac{1}{1+s}}(\A\rangle \B)_\rho-r).
\end{equation}
\end{theorem}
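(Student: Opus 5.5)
The plan has two halves: a converse obtained from Corollary~\ref{cor:entan}, and an achievability bound obtained from Theorem~\ref{thm:entan} with the trivial channel, matched by concavity.

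\textbf{Converse (Eq.~\eqref{equ:entlowe}).} Start from Corollary~\ref{cor:entan}. The key step is to show that for a maximally correlated state $\rho_{\A\B}=\sum_{x,y}c_{xy}\ket{a_x b_x}\bra{a_y b_y}$ and every $\alpha=\frac{1}{1+s}\in(0,1)$,
\begin{equation*}
\inf_{\sigma\in\mathrm{SEP}}D_\alpha(\rho_{\A\B}^{\ox n}\|\sigma)= n\,I_\alpha(\A\rangle\B)_\rho .
\end{equation*}
Since $\rho^{\ox n}$ is again maximally correlated, a one-shot identity suffices.

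For the lower bound on the infimum, take any separable $\sigma$ and apply the completely positive projection $\Pi=\sum_x\ket{a_xb_x}\bra{a_xb_x}$ onto the correlated subspace. The Petz quantity $Q_\alpha$ with $\alpha<1$ only involves $\sigma$ on $\supp\rho\subseteq\operatorname{ran}\Pi$, so I plan to compare $D_\alpha(\rho\|\sigma)\ge D_\alpha(\rho\|\Pi\sigma\Pi)$ via data processing under the pinching $\{\Pi,I-\Pi\}$. Then use the standard fact (Rains) that for separable $\sigma$, the compressed operator $\Pi\sigma\Pi$ is dominated on that subspace by a diagonal state $\sum_x q_x\proj{a_xb_x}$ whose weights are normalized. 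Concretely, $\bra{a_xb_x}\sigma\ket{a_yb_y}$ is dominated, in the operator order $\le$, by $\mathrm{diag}(q)\ox$-type structures, which must be checked via PPT: the partial transpose of a separable $\sigma$ is positive, which forces $|\sigma_{xy}|\le\sqrt{\sigma_{xy,yx}\cdots}$. I expect this step to be the main obstacle, since it needs the exact form of the Rains-type bound for Petz $\alpha<1$ (monotonicity in the second argument, Proposition~\ref{prop:mainpro}(i), holds for $\alpha\in[0,2]$). Once it is in place, $D_\alpha(\rho\|\sigma)\ge\inf_q D_\alpha(\rho\|I_\A\ox\sum_xq_x\proj{b_x})\ge I_\alpha(\A\rangle\B)_\rho$.

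The reverse inequality is easier: $I_\alpha$ is attained at some $\sigma_\B$. By the symmetry of $\rho$ under the dephasing $\sum_x\proj{b_x}\cdot\proj{b_x}$ (applied jointly), the optimal $\sigma_\B$ can be taken diagonal, and then $\sum_x q_x\proj{a_xb_x}$ is separable and gives the same value on the support. Additivity (Proposition~\ref{prop:mainpro}(v)) then removes the regularization, yielding Eq.~\eqref{equ:entlowe}.

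\textbf{Achievability and matching (Eq.~\eqref{equ:exactlocc}).} In Theorem~\ref{thm:entan}, choose $m=1$ and $\mscr{L}=\id$. Superadditivity makes the regularized supremum at least $I_{\frac1{1+s}}(\A\rangle\B)_\rho$, so
\begin{equation*}
E^{t}\ge\tfrac12\sup_{0<s<1}s\bigl(I_{\frac{1}{1+s}}(\A\rangle\B)_\rho-r\bigr).
\end{equation*}
Finally, set $g(s)=s\,I_{\frac1{1+s}}(\A\rangle\B)_\rho-sr$. Concavity of $s\mapsto sI_{\frac{1}{1+s}}$ follows by the duality in Proposition~\ref{prop:mainpro}(iii), which writes it as $s\widetilde H_{1+s}$ of the complement, combined with the convexity used in the proof of Theorem~\ref{thm:main}. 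Hence $g$ is concave, and $g'(1)\le0$ exactly when $r\ge\frac{\mathrm d}{\mathrm ds}sI_{\frac{1}{1+s}}|_{s=1}$. In that regime $\sup_{s>0}g=\sup_{0<s<1}g$ (using continuity at $s=1$), so the converse and achievability bounds coincide.
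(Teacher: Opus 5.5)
Your overall plan matches the paper's: the converse comes from Corollary~\ref{cor:entan} after single-letterizing the separable-state Petz divergence, and achievability from Theorem~\ref{thm:entan} with $m=1$, $\mscr{L}=\id$. The matching step uses concavity of $s\mapsto sI_{\frac{1}{1+s}}(\A\rangle\B)_\rho$, obtained from Proposition~\ref{prop:mainpro}(iii) and the convexity used in Theorem~\ref{thm:main}. The achievability and matching parts are fine. The weak point is the single-letterization.

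For the converse you only need
\[
\inf_{\sigma\in\mathrm{SEP}(\A^n:\B^n)}D_\alpha(\rho_{\A\B}^{\ox n}\|\sigma)\le nI_\alpha(\A\rangle\B)_\rho,
\]
which is the direction you call easy. As written, your argument for it rests on an invariance that does not hold. A maximally correlated state is not invariant under dephasing $\B$ in $\{\ket{b_x}\}$, since that kills the coherences $c_{xy}$ with $x\neq y$. So the argument does not show that the optimal $\sigma_\B$ is diagonal.

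The clean step, which is the paper's, is data processing under $\mscr{E}(\cdot)=\Pi(\cdot)\Pi+(I-\Pi)(\cdot)(I-\Pi)$. This channel fixes $\rho_{\A\B}$, and since $\rho_{\A\B}$ lives in the $\Pi$-block, for every $\sigma_\B$
\[
D_\alpha(\rho_{\A\B}\|I_\A\ox\sigma_\B)\ge D_\alpha\Bigl(\rho_{\A\B}\Big\|\sum\nolimits_x\bra{b_x}\sigma_\B\ket{b_x}\,\proj{a_x}\ox\proj{b_x}\Bigr),
\]
where the second argument on the right is a separable state. If you prefer a symmetry argument, the correct group is $\sum_x e^{i\theta_x}\proj{a_x}\ox\sum_y e^{-i\theta_y}\proj{b_y}$. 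It fixes $\rho_{\A\B}$ and twirls $I_\A\ox\sigma_\B$ into $I_\A\ox\sum_x\bra{b_x}\sigma_\B\ket{b_x}\proj{b_x}$. You then also need convexity of $D_\alpha$ in its second argument for $\alpha<1$. Applying either version to the maximally correlated state $\rho_{\A\B}^{\ox n}$ and using additivity of $I_\alpha$ gives the converse.

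The direction you flag as the main obstacle, $\inf_{\sigma\in\mathrm{SEP}}D_\alpha(\rho_{\A\B}\|\sigma)\ge I_\alpha(\A\rangle\B)_\rho$, is never used in proving the theorem. If you want the equality anyway, it takes one line. Every separable $\sigma_{\A\B}$ satisfies $\sigma_{\A\B}\le I_\A\ox\sigma_\B$, so Proposition~\ref{prop:mainpro}(i) gives $D_\alpha(\rho_{\A\B}\|\sigma_{\A\B})\ge D_\alpha(\rho_{\A\B}\|I_\A\ox\sigma_\B)\ge I_\alpha(\A\rangle\B)_\rho$. No pinching or Rains-type bound is needed.

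Your PPT route also closes, and it even covers PPT $\sigma$. Take the partial transpose in the basis $\{\ket{b_x}\}$. Positivity of the $2\times2$ principal submatrix of $\sigma^{\top_\B}$ on $\{\ket{a_x}\ox\ket{b_y},\ket{a_y}\ox\ket{b_x}\}$ gives $|\bra{a_xb_x}\sigma\ket{a_yb_y}|^2\le\bra{a_xb_y}\sigma\ket{a_xb_y}\bra{a_yb_x}\sigma\ket{a_yb_x}$. AM--GM then yields $\Pi\sigma\Pi\le\sum_x\bra{a_x}\sigma_\A\ket{a_x}\proj{a_x}\ox\proj{b_x}$.
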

\begin{proof}
    We first establish that for every $\alpha \in [0,2]$,
    \begin{equation}
    \label{equ:redata2}
\min_{\sigma_{\A\B} \in \rm{SEP}(\A:\B)}D_\alpha(\rho_{\A\B}\|\sigma_{\A\B}) = I_{\alpha}(\A \rangle \B)_{\rho}.
    \end{equation}
For any $\sigma_{\A\B} \in  \rm{SEP}(\A:\B)$,
\[
\sigma_{\A\B} \leq I_{A} \ox \sigma_{\B}.
\]
Hence,
\begin{equation}
\label{equ:relnew1}
    \min_{\sigma_{\A\B} \in \rm{SEP}(\A:\B)}D_\alpha(\rho_{\A\B}\|\sigma_{\A\B}) \geq \min_{\sigma_{\A\B} \in \rm{SEP}(\A:\B)}D_\alpha(\rho_{\A\B}\|I_A\ox \sigma_{\B}) \geq I_{\alpha}(\A \rangle \B)_{\rho}.
\end{equation}
Since $\rho_{\A\B}$ is a maximally correlated state, there exist orthonormal bases $\{\ket{a_x}\}_x$ and $\{\ket{b_x}\}_x$ of $\mc{H}_{\A}$ and $\mc{H}_{\B}$ such that $\supp(\rho_{\A\B}) \subset \text{span} \{\ket{a_x}\ox \ket{b_x} \}_x$.

Let $\Pi=\sum_x \proj{a_x}\ox \proj{b_x}$ be the corresponding projector. Consider the quantum channel
 $$\mscr{E}(\cdot)=\Pi(\cdot)\Pi+(I-\Pi)(\cdot)(I-\Pi).$$
 By data processing inequality for the Petz R\'enyi divergence, for any $\sigma_{\B} \in \mc{S}(\B)$,
\begin{equation}
\label{equ:reldata2}
\begin{split}
    &D_\alpha(\rho_{\A\B} \|I_{A} \ox \sigma_{\B}) \\
    \geq &D_\alpha(\mscr{E}(\rho_{\A\B}) \|\mscr{E}(I_{A} \ox \sigma_{\B})) \\
    =&D_\alpha(\rho_{\A\B} \|\Pi(I_{A} \ox \sigma_{\B})\Pi) \\
    =&D_\alpha(\rho_{\A\B} \|\sum_x \bra{b_x}\sigma_{\B}\ket{b_x} \proj{a_x}_{\A} \ox \proj{b_x}_{\B})\\
    \geq & \min_{\sigma_{\A\B} \in \rm{SEP}(\A:\B)}D_\alpha(\rho_{\A\B}\|\sigma_{\A\B}),
    \end{split}
\end{equation}
Eq.~(\ref{equ:relnew1}) and Eq.~(\ref{equ:reldata2}) give Eq.~(\ref{equ:redata2}). 

For maximally correlated states, the Petz R\'enyi relative entropy with respect to separable states is additive, i.e.,
\begin{equation}
   \min_{\sigma_{\A^n\B^n} \in \rm{SEP}(\A^n:\B^n)}D_\alpha(\rho_{\A\B}^{\ox n}\|\sigma_{\A^n\B^n}) =n\min_{\sigma_{\A\B} \in \rm{SEP}(\A:\B)}D_\alpha(\rho_{\A\B}\|\sigma_{\A\B}).
\end{equation}
Hence the regularized quantity in Corollary~\ref{cor:entan} reduces to the single-letter expression
$I_{\alpha}(\A \rangle \B)_{\rho}$, yielding Eq.~(\ref{equ:entlowe}).

Noticing that from the achievability bound (Theorem~\ref{thm:entan}), we have
\begin{equation}
\label{equ:loccachicov12}
      E^{t}(\rho_{\A\B},r)\geq \frac{1}{2}\sup_{0<s<1} s\left( I_{\frac{1}{1+s}}(\A \rangle \B)_{\rho}-r\right),
\end{equation}
 The exact characterization in the high-rate regime $r\geq \frac{\rm{d}}{\rm{ds}}sI_{\frac{1}{1+s}}(\A \rangle \B)_{\rho}|_{s=1}$ follows immediately by combining Eq.~(\ref{equ:loccachicov12}), Eq.~(\ref{equ:entlowe}) and a similar argument as in the proof of Theorem~\ref{thm:main}.
\end{proof}


\section{Quantum communication}
\label{sec:quantumcom}

In this section, we investigate several fundamental quantum communication tasks over a quantum channel
$\mscr{N}_{\A \rightarrow \B}$. Throughout, performance is quantified using the purified distance.

\begin{itemize}
    \item \emph{Subspace transmission:} An subspace transmission protocol $(d, \mscr{E}_{\M \rightarrow \A}, \mscr{D}_{\B \rightarrow \B'})$ consists of an encoding channel $\mscr{E}_{\M \rightarrow \A}$ and a decoding channel $\mscr{D}_{\B \rightarrow \bar{\M}}$, with $|\M|=|\bar{\M}|=d$.
    
The overall protocol induces the effective channel
\[
\mscr{M}_{\M \rightarrow \bar{\M}}:=\mscr{D}_{\B \rightarrow \bar{\M}} \circ \mscr{N}_{\A \rightarrow \B} \circ \mscr{E}_{\M\rightarrow \A}.
\]
The goal is to approximate the identity channel $id_{\M \rightarrow \bar{\M}}$. The error is defined as
\[
p_{\rm{err}}^{\rm{QC}}(\mscr{E},\mscr{D};\mscr{N}):=\max_{\phi_{\R\M}}P( \mscr{M}_{\M \rightarrow \bar{\M}}(\phi_{\R\M}) , \phi_{\R\bar{\M}} ),
\]
where the maximization is over all pure states $\phi_{\R\M}$. 

\item \emph{One-way LOCC-assisted subspace transmission:} A one-way LOCC-assisted subspace transmission protocol $(d,\mscr{E}_{\M \rightarrow \A\bar{\A}},\mscr{D}_{\bar{\A}:\B \rightarrow \varnothing : \bar{\M}})$ consists of an encoding channel $\mscr{E}_{\M \rightarrow \A\bar{\A}}$ and a one-way LOCC decoding channel $\mscr{D}_{\bar{\A}:\B \rightarrow \varnothing : \bar{\M}}$ from Alice to Bob, with $|\M|=|\bar{\M}|=d$.

The protocol induces an effective channel
\[
\mscr{M}_{\M \rightarrow \bar{\M}}:=\mscr{D}_{\bar{\A}:\B \rightarrow \varnothing : \bar{\M}} \circ \mscr{N}_{\A \rightarrow \B} \circ\mscr{E}_{\M\rightarrow \A\bar{\A} }.
\]
and the error $p_{\rm{err}}^{(\rm{QC},\rightarrow)}(\mscr{E},\mscr{D};\mscr{N})$ is defined analogously to the subspace transmission setting, namely as the maximum purified distance between the output state and the target state over all pure inputs.

\item \emph{Entanglement transmission:} An entanglement transmission protocol $(d,\mscr{E}_{\M \rightarrow \A},\mscr{D}_{\B \rightarrow \bar{\M}})$ is defined by three elements, in which $\mscr{E}_{\M \rightarrow \A}$ is an encoding channel and $\mscr{D}_{\B \rightarrow \bar{\M}}$ is a decoding channel, with $|\M|=|\bar{\M}|=d$.

The goal is to transmit one half of a maximally entangled state $\Phi_{\R\M}$ from Alice to Bob. The output state
\[
\omega_{\R\bar{\M}}:=(\mscr{D}_{\B \rightarrow \bar{\M}} \circ \mscr{N}_{\A \rightarrow \B} \circ \mscr{E}_{\M\rightarrow \A }) (\Phi_{\R\M})
\]
is required to approximate $\Phi_{\R\bar{\M}}$. The entanglement transmission
error of the protocol is
\[
p_{\rm{err}}^{\rm{ET}}(\mscr{E},\mscr{D};\mscr{N}):=P(\omega_{\R\bar{\M}},\Phi_{\R\bar{\M}}).
\]
\item \emph{One-way LOCC-assisted entanglement transmission:}  An one-way LOCC-assisted entanglement transmission protocol $(d,\mscr{E}_{\M \rightarrow \A\bar{\A}},\mscr{D}_{\bar{\A}:\B \rightarrow \varnothing: \bar{\M}})$ consists of an encoding channel $\mscr{E}_{\M \rightarrow \A}$  and a one-way LOCC decoding channel $\mscr{D}_{\bar{\A}:\B \rightarrow \varnothing: \bar{\M}}$ from Alice to Bob, with $|\M|=|\bar{\M}|=d$. The goal of the protocol is to transmit the $\M$ system of a maximally entangled state $\Phi_{\R\M}$ such that the final state
\[
\omega_{\R\bar{\M}}:=(\mscr{D}_{\bar{\A}:B \rightarrow \varnothing:\bar{\M}} \circ \mscr{N}_{\A \rightarrow \B} \circ \mscr{E}_{\M\rightarrow \A\bar{\A}}) (\Phi_{\R\M})
\]
is close to the initial maximally entangled state. The 
error of the protocol is defined as 
\[
p_{\rm{err}}^{(\rm{ET}, \rightarrow)}(\mscr{E},\mscr{D};\mscr{N}):=P(\omega_{\R\bar{\M}},\Phi_{\R\bar{\M}}).
\]
\item \emph{Entanglement generation:}  An entanglement generation protocol is defined by three elements $(d, \Psi_{\M'\A}, \mscr{D}_{\B \rightarrow \bar{\M}}) $, where $\Psi_{\M'\A}$ is a pure state and $\mscr{D}_{\B \rightarrow \bar{\M}}$ is a decoding channel, with $|\bar{\M}|=|\M'|=d$. 

The output state
\[
\sigma_{\M'\bar{\M}}:=(\mscr{D}_{\B \rightarrow \bar{\M}}\circ \mscr{N}_{\A \rightarrow \B})(\Psi_{\M'\A})
\]
is required to approximate a maximally entangled state $\Phi_{\M'\bar{\M}}$.  The
entanglement generation error of the protocol is given by
\[
p_{\rm{err}}^{\rm{EG}}(\Psi_{\M'\A}, \mscr{D};\mscr{N}):=P(\sigma_{\M'\bar{\M}},\Phi_{\M'\bar{\M}}).
\]

\item \emph{One-way LOCC-assisted entanglement generation:}  An one-way LOCC-assisted entanglement generation protocol consists of a triplet $(d,\Psi_{\A'\A},\{\mscr{E}^x_{\A'\A \rightarrow \M'\A}\}_x,\{\mscr{D}^x_{\B \rightarrow \bar{\M}}\}_x)$, where $\Psi_{\A'\A}$ is a pure state with $|M'|=d$, $\{\mscr{E}^x_{\A'\A \rightarrow \M'\A}\}_x$ is a set of completely positive maps indexed by a finite alphabet $\mc{X}$ such that $\sum_{x} \mscr{E}^x_{\A'\A \rightarrow \M'\A}$ is trace preserving, $\{\mscr{D}^x_{\B \rightarrow \bar{\M}}\}_x$ is a
set of quantum channels indexed by $\mc{X}$, with $|\bar{\M}|=d$. The goal of the protocol
is to transmit the system $A$ such that the final state
\[
\sigma_{\M'\bar{\M}}^\rightarrow:=\sum_x (\mscr{D}^x_{\B \rightarrow \bar{\M}} \circ \mscr{N}_{\A \rightarrow \B} \circ \mscr{E}^x_{\A'\A \rightarrow \M'\A})(\Psi_{\A'\A})
\]
is close to a maximally entangled state of Schmidt rank $d$. The error
of the protocol is defined as 
\[
p_{\rm{err}}^{\rm{EG, \rightarrow}}(\Psi_{\A'\A},\{\mscr{E}_x \}_x, \{\mscr{D}_X\}_x;\mscr{N}):=P(\sigma_{\M'\bar{\M}}^\rightarrow,\Phi_{\M'\bar{\M}}).
\]
\end{itemize}
Let 
\[
\mc{T}:=\{\rm{ST},\rm{ET},\rm{EG},(\rm{ST}, \rightarrow), (\rm{ET},\rightarrow), (\rm{EG},\rightarrow)\}
\]
denote the set of this collection of tasks.

For $\epsilon \in [0,1]$, a protocol is called $(d,\epsilon)$-achievable for task $t \in \mc{T}$ if its error does not exceed $\epsilon$. 
For $t \in \mc{T}$, the minimal achievable error among all protocols with fixed $d=2^\lambda$ is defined as
\[
p_{\rm{err}}^{t}(\mscr{N}_{\A \rightarrow \B},\lambda):=\inf\{\epsilon~|~\exists~(2^\lambda,\epsilon)\text{-achievable protocol for task}~t\}.
\]
From the above definitions, we can directly obtain that for any quantum channel $\mscr{N}_{\A \rightarrow \B}$ and $\lambda \geq 0$,
\begin{align}
\label{equ:relation1}
    &p_{\rm{err}}^{\rm{ET},\rightarrow}(\mscr{N}_{\A \rightarrow \B},\lambda) \leq p_{\rm{err}}^{\rm{ST},\rightarrow}(\mscr{N}_{\A \rightarrow \B},\lambda) \leq p_{\rm{err}}^{\rm{ST}} (\mscr{N}_{\A \rightarrow \B},\lambda),\\
\label{equ:relation2}
    &p_{\rm{err}}^{\rm{ET},\rightarrow}(\mscr{N}_{\A \rightarrow \B},\lambda) \leq p_{\rm{err}}^{\rm{ET}}(\mscr{N}_{\A \rightarrow \B},\lambda) \leq p_{\rm{err}}^{\rm{ST}}(\mscr{N}_{\A \rightarrow \B},\lambda).
\end{align}
Moreover, by Lemma~\ref{lem:wilde1}, Lemma~\ref{lem:wilde2}, 
\begin{align}
\label{equ:relation3}
    p_{\rm{err}}^{\rm{EG},\rightarrow} (\mscr{N}_{\A \rightarrow \B},\lambda)= p_{\rm{err}}^{\rm{EG}}(\mscr{N}_{\A \rightarrow \B},\lambda) \leq p_{\rm{err}}^{\rm{ET}}(\mscr{N}_{\A \rightarrow \B},\lambda) \leq 2 p_{\rm{err}}^{\rm{EG}}(\mscr{N}_{\A \rightarrow \B},\lambda).
\end{align}
and by Lemma~\ref{lem:wilde3},
\begin{align}
\label{equ:relationnew}
    p_{\rm{err}}^{\rm{ST}}(\mscr{N}_{\A \rightarrow \B},\lambda) \leq \sqrt{2}p_{\rm{err}}^{\rm{ET}}(\mscr{N}_{\A \rightarrow \B},\lambda+1).
\end{align}
For a fixed rate $r$, the error exponent of task $t \in \mc{T}$ is defined as
\[
E^{t}(\mscr{N}_{\A \rightarrow \B},r):=\liminf_{n \rightarrow \infty} \frac{-1}{n} \log p_{\rm{err}}^{t}(\mscr{N}_{\A \rightarrow \B}^{\ox n}, nr).
\]
In the following, we study these error exponents and first establish an achievability bound.

\begin{theorem}
\label{thm:loccachi}
    Let $\mscr{N}_{\A \rightarrow \B}$ be a quantum channel and  $r \geq 0$. For any task $t \in \mc{T}$, we have
    \begin{equation}
    \label{equ:locc}
E^{t}(\mscr{N}_{\A \rightarrow \B},r) \geq \frac{1}{2}\sup_{0<s<1} s\left(\lim_{m \rightarrow \infty}\tfrac{1}{m}I_{\frac{1}{1+s}}^{\rm{c}}(\mscr{N}_{\A \rightarrow \B}^{\ox m})-r\right),
\end{equation}
where $I_\alpha^{\rm{c}}(\mscr{N}_{\A \rightarrow \B}):=\max_{\phi_{\bar{\A}\A}}I_{\alpha}(\bar{\A} \rangle \B)_{\mscr{N}(\phi_{\bar{\A}\A})}$ is the Petz R\'enyi coherent information of $\mscr{N}_{\A \rightarrow \B}$.
\end{theorem}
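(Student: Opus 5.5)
By the ordering relations \eqref{equ:relation1}--\eqref{equ:relationnew}, it suffices to prove the bound for entanglement generation (EG). First, $p^{\rm EG,\rightarrow}=p^{\rm EG}$, so the exponent is the same for both. Second, $p^{\rm ET}\le 2p^{\rm EG}$ and the one-way ET error is at most the unassisted ET error. Third, $p^{\rm ST}\le\sqrt2\,p^{\rm ET}(\cdot,\lambda+1)$, and a constant shift $\lambda\mapsto\lambda+1$ does not change the exponent at rate $r$. Finally, $p^{\rm ST,\rightarrow}\le p^{\rm ST}$. Hence a lower bound on $E^{\rm EG}$ transfers, with identical exponent, to every task $t\in\mc T$.

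For EG, fix $m$ and a pure input $\phi_{\bar\A\A^m}$ attaining $I^{\rm c}_{\frac1{1+s}}(\mscr N^{\ox m})$. Let $U$ be a Stinespring isometry of $\mscr N^{\ox m}$ with environment $\E$, and set $\psi_{\bar\A\B\E}=U\phi U^*$. This is a pure tripartite state. Treat $\bar\A$ as Alice's system, $\B^m$ as Bob's side information and $\E$ as the reference. Now apply the decoupling step in Case 1 of the proof of Theorem~\ref{thm:statemer} to $\psi^{\ox k}$ with rate $mr$ per copy; for $r$ below the coherent information the distillation branch is the relevant one. Concretely, Lemma~\ref{lem:main} with a Haar unitary on $\bar\A^k$ followed by a partial isometry onto a $2^{kmr}$-dimensional subspace gives, via $D\ge P^2$ and Uhlmann's theorem, a decoder on Bob's side with purified distance at most $\sqrt{2G_s2^{kmrs}2^{-sk\widetilde H_{1+s}(\bar\A|\E)_\psi}}$.

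For EG we may not use the outcome index $i$ as classical communication. Instead I will use a single block, i.e.\ a fixed projection onto one subspace $\mc H_{\A_1}$ and one $V_1$, and build this into the input: Alice prepares $\Psi_{\M'\A}$ as the normalized state obtained by applying $V_1U$ to the $\bar\A$-side of $\phi^{\ox k}$. This is a pure state on $\M'\A^{mk}$. The success weight $p_1$ and the conditional error must then be controlled together, and this is the step I expect to be the main obstacle. I would first try to avoid the issue by noting that EG equals one-way EG: the full multi-block scheme is a valid one-way LOCC EG protocol, since Alice measures $i$ and Bob applies $H^i$, and the equality in \eqref{equ:relation3} then transfers the bound to unassisted EG. The concavity step in \eqref{equ:lower2} handles the mixture over $i$ exactly as in Theorem~\ref{thm:statemer}.

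Next, convert the entropy using duality, Proposition~\ref{prop:mainpro}(iii): $\widetilde H_{1+s}(\bar\A|\E)_\psi=-\inf_{\sigma}D_{\frac1{1+s}}(\psi_{\bar\A\B}\|I\ox\sigma_\B)=I_{\frac1{1+s}}(\bar\A\rangle\B^m)_{\mscr N^{\ox m}(\phi)}$. Combining this with additivity, Proposition~\ref{prop:mainpro}(v), in $k$ gives
\[
p^{\rm EG}(\mscr N^{\ox mk},kmr)\le\sqrt{2G_s\,2^{-ks(I^{\rm c}_{\frac1{1+s}}(\mscr N^{\ox m})-mr)}}.
\]
For block lengths $n=km+j$ with $0\le j<m$, I will discard the remaining $j$ channel uses and round the rate, which does not change the exponent. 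Taking $-\frac1n\log$ and $k\to\infty$ then yields $\frac12 s(\frac1m I^{\rm c}_{\frac1{1+s}}(\mscr N^{\ox m})-r)$.

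Finally, $m\mapsto I^{\rm c}_\alpha(\mscr N^{\ox m})$ is superadditive, since product inputs and additivity give it. By Fekete's lemma the limit exists and equals the supremum over $m$, so letting $m\to\infty$ and then optimizing over $s\in(0,1)$ yields \eqref{equ:locc}.
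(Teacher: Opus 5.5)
Your proposal is correct and follows the paper's proof. Like the paper, you purify the $m$-letter channel output, run the Case~1 protocol of Theorem~\ref{thm:statemer} on $k$ copies as a one-way LOCC entanglement-generation scheme (Alice measures $i$, Bob applies $H^i$), and pass to all tasks via $p^{\rm EG,\rightarrow}=p^{\rm EG}$ and the orderings \eqref{equ:relation1}--\eqref{equ:relationnew}. You also identify $\widetilde{H}_{1+s}(\bar{\A}|\E)=I_{\frac{1}{1+s}}(\bar{\A}\rangle\B)$ by duality; the middle term of your chain has a sign slip and should read $+\inf_{\sigma}D_{\frac{1}{1+s}}$, not $-\inf_{\sigma}$. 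You then finish with blocklength padding and Fekete's lemma, as the paper does. The one difference is that you get the one-way LOCC-assisted subspace and entanglement transmission tasks directly from $p^{\rm ST,\rightarrow}\leq p^{\rm ST}$ and $p^{\rm ET,\rightarrow}\leq p^{\rm ET}$, whereas the paper adds a separate teleportation argument that, at the level of exponents, these orderings already make unnecessary.
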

\begin{proof}
Fix a integer $m$ and a pure state $\phi_{\bar{\A}^m\A^m}$. Let $$\phi_{\bar{\A}^m\B^m\E^m}=V^{\ox m}_{\A \rightarrow \B\E}\phi_{\bar{\A}^m\A^m}V^{* \ox m}_{\A \rightarrow \B\E}$$ be the purification of the channel output, where $V_{\A \rightarrow \B\E}$ is the Stinespring isometry of $\mscr{N}_{\A \rightarrow \B}$. 

By the proof of Theorem~\ref{thm:statemer}, for any $k \in \mathbb{N}$, there exists an one-way LOCC channel 
$$\Lambda_{\bar{\A}^{mk}:\B^{mk} \rightarrow \bar{\M}_k:\tilde{\M}_k}=\sum_i \mscr{E}^i_{\bar{\A}^{mk} \rightarrow \bar{\M}_k} \ox \mscr{D}^i_{\B^{mk} \rightarrow \tilde{\M}_k},$$
such that 
\begin{equation}
\label{equ:entdisapp1}
    P(\Lambda_{\bar{\A}^{mk}:\B^{mk} \rightarrow \bar{\M}_k:\tilde{\M}_k}(\phi_{\bar{\A}^m\B^m}^{\ox k}),\Phi_{\bar{\M}_k\tilde{\M}_k}) \leq  \sqrt{2G_s 2^{mkrs}2^{ks\widetilde{D}_{1+s}(\phi_{\bar{\A}^m\E^m}\|I_{\bar{\A}}^{\ox m} \ox \phi_{\E^m})}},
\end{equation}
where $\Phi_{\bar{\M}_k\tilde{\M}_k}$ is the maximally entangled state of Schmidt rank $|\bar{\M}_k|=2^{mkr}$. 

Eq.~(\ref{equ:entdisapp1}) implies that for $n=mk$,
\begin{equation}
\label{equ:locgen1}
    p_{\rm{err}}^{\rm{EG},\rightarrow}(\mscr{N}^{\ox n}_{\A \rightarrow \B},nr) \leq \sqrt{2G_s2^{mkrs}2^{ks\widetilde{D}_{1+s}(\phi_{\bar{\A}^m\E^m}\|I_{\bar{\A}}^{\ox m} \ox \phi_{\E^m})}}.
\end{equation}
Taking logarithms and letting $n \rightarrow \infty$, optimizing over $m$, pure state $\phi_{\bar{\A}^m\A^m}$ and $0<s<1$ yields the bound for
 for $t=(\rm{EG},\rightarrow)$. By Eq.~(\ref{equ:relation3}) and Eq.~(\ref{equ:relationnew}), this extends to $t\in \{\rm{EG}, \rm{ET},\rm{ST}\}$.

Next, we prove Eq.~(\ref{equ:locc}) for $t \in \{(\rm{ST},\rightarrow), (\rm{ET},\rightarrow)  \}$. Using the standard teleportation protocol, the maximally entangled state $\Phi_{\bar{\M}_k\tilde{\M}_k}$ in Eq.~(\ref{equ:entdisapp1}) can be converted via LOCC into a noiseless quantum channel. Specifically, there exists an one-way LOCC channel from Alice to Bob
$$\Pi_{\bar{\M}_k\M_k:\tilde{\M}_k \rightarrow \varnothing: \M'_k}$$
such that 
\[
\Pi_{\bar{\M}_k\M_k:\tilde{\M}_k \rightarrow \varnothing :\M'_k}((\cdot)\ox \Phi_{\bar{\M}_k\tilde{\M}_k})=\id_{\M_k \rightarrow \M'_k}(\cdot).
\]
Now for $n=mk$, we construct an one-way LOCC-assisted quantum communication protocol for $\mscr{N}^{\ox n}_{\A \rightarrow \B}$ as follows. The encoder prepares the fixed state $\phi^{\ox k}_{\bar{\A}^m\A^m}$, while the decoder consists of the concatenation of $\Lambda_{\bar{\A}^{mk}:\B^{mk} \rightarrow \bar{\M}_k:\tilde{\M}_k}$ and the teleportation channel $\Pi$.

The resulting error satisfies 
\begin{equation}
\label{equ:evaqu}
    \begin{split}
       &p_{\rm{err}}^{\rm{ST}, \rightarrow}(\mscr{N}^{\ox n}_{\A \rightarrow \B},nr)\\
        \leq& \max_{\phi_{\M_k\R_k}}P(\Pi_{\bar{\M}_k\M_k:\tilde{\M}_k \rightarrow \varnothing: \M'_k}(\Lambda_{\bar{\A}^{mk}:\B^{mk} \rightarrow \bar{\M}_k:\tilde{\M}_k}(\phi_{\bar{\A}^m\B^m}^{\ox k})\ox \phi_{\M_k\R_k}),id_{\M_k \rightarrow \M'_k}(\phi_{\M_k\R_k})) \\
        =&  \max_{\phi_{\M_k\R_k}}P(\Pi_{\bar{\M}_k\M_k:\tilde{\M}_k \rightarrow \varnothing :\M'_k}(\Lambda_{\bar{\A}^{mk}:\B^{mk} \rightarrow \bar{\M}_k:\tilde{\M}_k}(\phi_{\bar{\A}^m\B^m}^{\ox k})\ox \phi_{\M_k\R_k}),\Pi_{\bar{\M}_k\M_k:\tilde{\M}_k \rightarrow \varnothing :\M'_k}(\phi_{\M_k\R_k}\ox \phi_{\bar{\M}_k\tilde{\M}_k})) \\
        \leq & \max_{\phi_{\M_k\R_k}} P(\Lambda_{\bar{\A}^{mk}:\B^{mk} \rightarrow \bar{\M}_k:\tilde{\M}_k}(\phi_{\bar{\A}^m\B^m}^{\ox k})\ox \phi_{\M_k\R_k},\Phi_{\bar{\M}_k\tilde{\M}_k}\ox\phi_{\M_k\R_k} ) \\
        =&P\left(\Lambda_{\bar{\A}^{mk}:\B^{mk} \rightarrow \bar{\M}_k:\tilde{\M}_k}(\phi_{\bar{\A}^m\B^m}^{\ox k}), \Phi_{\bar{\M}_k\tilde{\M}_k}\right)\\
        \leq &  \sqrt{2G_s 2^{mkrs}2^{ks\widetilde{D}_{1+s}(\phi_{\bar{\A}^m\E^m}\|I_{\bar{\A}}^{\ox m} \ox \phi_{\E^m})}}.
    \end{split}
\end{equation}
so the same bound as in Eq.~(\ref{equ:locgen1}) applies. Optimizing as above yields the bound for $t=(\rm{ST},\rightarrow)$, and Eq.~(\ref{equ:relation1}) gives the result for $t=(\rm{ET},\rightarrow)$.

The construction above was stated for blocklengths $n=mk$. For an
arbitrary sufficiently large $n=mk+\ell$, $0\leq\ell<m$, we run the
$mk$-copy protocol at rate $r+\delta$ and restrict the resulting
maximally entangled state or code space to dimension $2^{nr}$.
For all sufficiently large $k$, $mk(r+\delta)\geq nr$, and this
restriction cannot increase the purified-distance error. Letting
$\delta\downarrow0$ gives the claimed liminf bound for all
blocklengths. Moreover,
$I_\alpha^{\rm c}(\mscr N^{\ox m})$ is superadditive in $m$, so the
regularized limit exists by Fekete's lemma.
\end{proof}

Since the Petz R\'enyi coherent information is not additive in general, the regularization appearing in Theorem~\ref{thm:loccachi} is in general unavoidable. However, for covariant quantum channels with respect to a group whose representation on
$\A$ forms a unitary one-design, we obtain a single-letter converse bound on the error exponent.

We begin with a one-shot upper bound on the quantity $R^{\rm{ET},\rightarrow}(\mscr{N}_{\A \rightarrow \B},\epsilon)$, which denotes the maximal number of qubits that can be transmitted with error at most $\epsilon \in (0,1)$ under one-way LOCC assistance. For any covariant quantum channel $\mscr{N}_{A\to \B}$ with respect to a group whose representation on
$\A$ forms a unitary one-design and any $n\in\mathbb{N}$, it was shown in~\cite{Tomamichel_2016} that 
\begin{equation}
\label{equ:locchy}
    R^{\rm{ET},\rightarrow}(\mscr{N}_{\A \rightarrow \B}^{\ox n},\epsilon)\leq \min_{\sigma_{\A\B} \in \rm{PPT}'(\A:\B)} D_{\rm{H}}^{\epsilon^2}(\mscr{N}_{\A' \rightarrow \B}(\Phi_{\A\A'})^{\ox n}\|\sigma_{\A\B}^{\ox n}),
\end{equation}
where 
\[
{\rm PPT}'(\A:\B)
:=
\left\{
 \tau_{\A\B}\in\mc P(\A\B)
 \,\middle|\,
 \left\|\tau_{\A\B}^{\top_{\B}}\right\|_1\leq1
\right\}.
\]
denotes the Rains set and $$D_{\rm{H}}^\epsilon(\rho\|\sigma):=-\log \min \{\tr \sigma T ~|~\tr \rho T\geq 1-\epsilon~\land~0\leq T \leq I \}$$ is the hypothesis testing relative entropy. Since
\[
\rm{PPT}(\A:\B):=\left\{
 \sigma_{\A\B}\in\mc S(\A\B)
 \,\middle|\,
 \sigma_{\A\B}^{\top_{\B}} \geq 0,
\right\}   \subseteq \rm{PPT}'(\A:\B) 
\]
we have
\begin{equation}
    \label{equ:locchy1}
    R^{\rm{ET},\rightarrow}(\mscr{N}_{\A \rightarrow \B}^{\ox n},\epsilon)\leq \min_{\sigma_{\A\B} \in \rm{PPT}(\A:\B)} D_{\rm{H}}^{\epsilon^2}(\mscr{N}_{\A' \rightarrow \B}(\Phi_{\A\A'})^{\ox n}\|\sigma_{\A\B}^{\ox n}),
\end{equation}
Eq.~(\ref{equ:locchy1}) implies that
\begin{equation}
\label{equ:horbas}
    \begin{split}
&p_{\rm{err}}^{\rm{ET},\rightarrow}(\mscr{N}^{\ox n}_{\A \rightarrow \B}, nr) \\
=&\inf \left\{\epsilon~\middle|~ R^{\rm{ET},\rightarrow}(\mscr{N}_{\A \rightarrow \B}^{\ox n},\epsilon) \geq nr \right\} \\
\geq&\inf \left\{\epsilon~\Big|~ \min_{\sigma_{\A\B} \in \rm{PPT}(\A:\B)}D_{\rm{H}}^{\epsilon^2}(\mscr{N}_{\A' \rightarrow \B}(\Phi_{\A\A'})^{\ox n}\|\sigma_{\A\B}^{\ox n}) \geq nr \right\} \\
=&\max_{\substack{
 \sigma_{\A\B}\in{\rm PPT}(\A:\B)
}}
2^{-\frac12
D_{\rm H}^{\,2^{-nr}}
\left(
 \left(
  \sigma_{\A\B}
 \right)^{\ox n}
 \middle\|
 \mscr N_{\A'\to\B}(\Phi_{\A\A'})^{\ox n}
\right)}.
    \end{split}
\end{equation}
For every fixed
$\sigma_{\A\B}\in{\rm PPT}(\A:\B)$, the quantum Hoeffding bound gives
\[
\begin{aligned}
&\lim_{n\to\infty}
\frac1n
D_{\rm H}^{\,2^{-nr}}
\left(
 \left(
  \sigma_{\A\B}
 \right)^{\ox n}
 \middle\|
 \mscr N_{\A'\to\B}(\Phi_{\A\A'})^{\ox n}
\right)\\
&\qquad=
\sup_{s>0}
s\left(
 D_{\frac1{1+s}}
 \left(
  \mscr N_{\A'\to\B}(\Phi_{\A\A'})
  \middle\|
  \sigma_{\A\B}
 \right)
 -r
\right).
\end{aligned}
\]
Furthermore, for every finite $S>0$, the function
\[
(\sigma_{\A\B},s)
\longmapsto
s\left(
 D_{\frac1{1+s}}
 \left(
  \mscr N_{\A'\to\B}(\Phi_{\A\A'})
  \middle\|
  \sigma_{\A\B}
 \right)
-r
\right)
\]
is convex in $\sigma_{\A\B}$ and concave in $s\in[0,S]$.
A standard regularization by a full-rank element of
${\rm PPT}(\A:\B)$ and Sion's minimax theorem therefore give
\begin{equation}
\label{equ:sion}
\inf_{\sigma_{\A\B}\in{\rm PPT}(\A:\B)}
\sup_{s>0}s\left( D_{\frac{1}{1+s}}(\mscr{N}(\Phi_{\A\A'})\|\sigma_{\A\B})-r\right)
=
\sup_{s>0}
\inf_{\sigma_{\A\B}\in{\rm PPT}(\A:\B)}
s\left( D_{\frac{1}{1+s}}(\mscr{N}(\Phi_{\A\A'})\|\sigma_{\A\B})-r\right).
\end{equation}

Applying the quantum Hoeffding exponent of quantum hypothesis testing to Eq.~(\ref{equ:horbas}) and combining Eq.~(\ref{equ:relation1}), Eq.~(\ref{equ:relation2}), Eq.~(\ref{equ:relation3}) and Eq.~(\ref{equ:sion}), we obtain the following conclusion.

\begin{corollary}
\label{cor:conlocc}
    For any covariant quantum channel $\mscr{N}_{\A \rightarrow \B}$  with respect to a group whose representation on
$\A$ forms a unitary one-design, $r >0$ and $t \in \mc{T}$, 
    \begin{equation}
    \begin{split}
        E^{t}(\mscr{N}_{\A \rightarrow \B},r) &\leq \sup_{0<\alpha<1} \frac{1-\alpha}{2\alpha} \left( \min_{\sigma_{\A\B} \in {\rm{PPT}}(\A:\B)}D_\alpha(\mscr{N}_{\A' \rightarrow \B}(\Phi_{\A\A'}))\|\sigma_{\A\B})-r\right) \\
        &=\frac{1}{2}\sup_{0<s<\infty} s\left( \min_{\sigma_{\A\B} \in {\rm{PPT}}(\A:\B)}D_{\frac{1}{1+s}}(\mscr{N}_{\A' \rightarrow \B}(\Phi_{\A\A'})\|\sigma_{\A\B})-r\right).
        \end{split}
    \end{equation}
\end{corollary}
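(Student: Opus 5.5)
The plan is to chain the one-shot Rains/PPT converse Eq.~(\ref{equ:horbas}) for $t=(\mathrm{ET},\rightarrow)$ with the quantum Hoeffding exponent, and then transfer the resulting bound to every other task $t\in\mc{T}$ through the ordering relations Eq.~(\ref{equ:relation1})--Eq.~(\ref{equ:relationnew}).

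\textbf{Step 1: reduction to $(\mathrm{ET},\rightarrow)$.} By Eq.~(\ref{equ:relation1}) and Eq.~(\ref{equ:relation2}), $p^{\mathrm{ET},\rightarrow}_{\rm err}$ is the smallest error among the ST/ET variants. For EG and $(\mathrm{EG},\rightarrow)$, Eq.~(\ref{equ:relation3}) gives
\[
p^{\mathrm{ET},\rightarrow}_{\rm err}\le p^{\mathrm{ET}}_{\rm err}\le 2p^{\mathrm{EG}}_{\rm err}=2p^{\mathrm{EG},\rightarrow}_{\rm err}.
\]
The factor $2$ does not change the exponent. Hence for every $t\in\mc{T}$ we have $E^t\le E^{(\mathrm{ET},\rightarrow)}$, and it suffices to bound $E^{(\mathrm{ET},\rightarrow)}(\mscr{N},r)$.

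\textbf{Step 2: exponent of the one-shot converse.} Fix any $\sigma_{\A\B}\in\mathrm{PPT}(\A:\B)$. Eq.~(\ref{equ:horbas}) yields
\[
p^{\mathrm{ET},\rightarrow}_{\rm err}(\mscr{N}^{\ox n},nr)\ge 2^{-\frac12 D_{\rm H}^{2^{-nr}}(\sigma^{\ox n}\|\mscr{N}(\Phi)^{\ox n})}.
\]
Taking $-\frac1n\log$ and the $\liminf$, and applying the Hoeffding limit stated just before the corollary, gives
\[
E^{(\mathrm{ET},\rightarrow)}\le \tfrac12\sup_{s>0}s\bigl(D_{\frac1{1+s}}(\mscr{N}(\Phi)\|\sigma)-r\bigr).
\]
Here I should check that the direction of the limit is fine. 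The Hoeffding statement is an actual limit, so it serves as an upper bound on the $\liminf$ of $-\frac1n\log p$. Minimizing over $\sigma$ gives an $\inf_\sigma\sup_s$ expression.

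\textbf{Step 3: exchange the optimizations.} Eq.~(\ref{equ:sion}) turns $\inf_\sigma\sup_s$ into $\sup_s\inf_\sigma$. This gives the second line of the claim, since the infimum over the compact PPT set is attained, where $D_{1/(1+s)}$ is lower semicontinuous. The first line then follows from the reparametrization $\alpha=\frac1{1+s}$. As $s$ ranges over $(0,\infty)$, $\alpha$ ranges over $(0,1)$, and $\frac s2=\frac{1-\alpha}{2\alpha}$.

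\textbf{Main obstacle.} The delicate point is the minimax exchange, in particular the supremum over the unbounded range $s\in(0,\infty)$. I would first restrict to $s\in[0,S]$ and apply Sion, using convexity of $\sigma\mapsto sD_{1/(1+s)}$ (joint convexity of $-Q_\alpha$ for $\alpha<1$) and concavity in $s$. I would then let $S\to\infty$, using monotonicity of the inner quantity in $S$. The regularization $\sigma\mapsto(1-\delta)\sigma+\delta\tau$ with a full-rank PPT state $\tau$ handles continuity at the boundary. The paper already asserts this in Eq.~(\ref{equ:sion}), so I would invoke it directly.
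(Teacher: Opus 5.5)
Your proposal is correct and follows the paper's own argument. Like the paper, you lower-bound every task's error by $p^{\mathrm{ET},\rightarrow}_{\rm err}$ (up to a factor $2$) via Eqs.~(\ref{equ:relation1})--(\ref{equ:relation3}), apply the quantum Hoeffding limit to the one-shot PPT bound Eq.~(\ref{equ:horbas}) for each fixed $\sigma_{\A\B}$, exchange $\inf_\sigma$ and $\sup_s$ via Eq.~(\ref{equ:sion}), and reparametrize with $\alpha=\frac{1}{1+s}$. Your extra remarks on the direction of the $\liminf$ bound and on passing from $s\in[0,S]$ to $s\in(0,\infty)$ make explicit what the paper leaves implicit; the latter step also uses compactness of the PPT set and lower semicontinuity in $\sigma$.
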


Compared to the achievability bound, the converse bound involves only a single-letter quantity, making its expression more concise. The trade-off is that the converse bound is valid only for covariant quantum channels and is generally not tight. Nevertheless, for covariant generalized dephasing channels, we can show that the converse bound coincides with the achievability bound for coding rates above a critical value.

A covariant generalized dephasing channel is a generalized dephasing
channel of the form
\begin{equation}
\label{equ:cgdcform}
\mscr N_{\A\to\B}(\rho)
=
\sum_{x,y=0}^{|\A|-1}
\bra{x}\rho\ket{y}
\langle\psi_y|\psi_x\rangle
\ket{x}\!\bra{y}_{\B}
\end{equation}
that is covariant with respect to a group whose representation on
$\A$ forms a unitary one-design.

For such channels, we obtain the following result.

\begin{theorem}
Let $\mscr{N}_{\A \rightarrow \B}$ be a covariant generalized dephasing channel and $r>0$. Then, for any $t \in \mc{T}$,
    \begin{equation}
    \label{equ:loccconver}
    \begin{split}
        E^{t}(\mscr{N}_{\A \rightarrow \B},r)
        &\leq \frac{1}{2}\sup_{0<s<\infty} s\left( I_{\frac{1}{1+s}}(\A \rangle \B)_{\mscr{N}(\Phi_{\A\A'})}-r\right).
        \end{split}
    \end{equation}
Moreover, when $r\geq \frac{\rm{d}}{\rm{ds}}sI_{\frac{1}{1+s}}(\A \rangle \B)_{\mscr{N}(\Phi_{\A\A'})}|_{s=1}$, for any $t \in \mc{T} $, we have
\begin{equation}
\label{equ:exactlocc1}
     E^{t}(\mscr{N}_{\A \rightarrow \B},r)= \frac{1}{2}\sup_{0<s<1} s( I_{\frac{1}{1+s}}(\A \rangle \B)_{\mscr{N}(\Phi_{\A\A'})}-r).
\end{equation}
\end{theorem}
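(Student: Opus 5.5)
The plan has two parts. First, I will reduce the converse in Corollary~\ref{cor:conlocc} to a single-letter Petz--R\'enyi coherent information by evaluating the Rains-type minimization for generalized dephasing channels. Then I will compare the result with the achievability bound of Theorem~\ref{thm:loccachi} and use the concavity argument from the proof of Theorem~\ref{thm:main}.

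\textbf{Step 1: single-letter converse.} Write $\rho_{\A\B}:=\mscr N_{\A'\to\B}(\Phi_{\A\A'})$. By the form \eqref{equ:cgdcform},
\[
\rho_{\A\B}=\frac{1}{|\A|}\sum_{x,y}\langle\psi_y|\psi_x\rangle\ket{x}\!\bra{y}_{\A}\ox\ket{x}\!\bra{y}_{\B},
\]
which is a maximally correlated state in the bases $\{\ket{x}\}$, $\{\ket{x}\}$.

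I will show that for $\alpha\in(0,1)$,
\[
\min_{\sigma_{\A\B}\in{\rm PPT}(\A:\B)}D_\alpha(\rho_{\A\B}\|\sigma_{\A\B})=I_\alpha(\A\rangle\B)_\rho,
\]
mimicking Eq.~\eqref{equ:redata2}.

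\emph{Upper bound.} Take the optimizer $\sigma_{\B}$ of $I_\alpha$. Apply the pinching $\mscr E$ onto $\Pi=\sum_x\proj{x}\ox\proj{x}$ and its complement. This gives $D_\alpha(\rho\|I_{\A}\ox\sigma_{\B})\geq D_\alpha(\rho\|\sum_x\bra{x}\sigma_{\B}\ket{x}\proj{x}\ox\proj{x})$. The right-hand reference state is separable, hence PPT.

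\emph{Lower bound.} This is the step I expect to be the main obstacle, because ${\rm PPT}$ is strictly larger than ${\rm SEP}$ and $\sigma\leq I_{\A}\ox\sigma_{\B}$ can fail for PPT states. I will use the partial-transpose bound: for $\sigma$ PPT, $\sigma^{\top_{\B}}\geq0$ and $\|\sigma^{\top_{\B}}\|_1=1$.

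I plan to first apply the data-processing inequality under the pinching $\mscr E$, which leaves $\rho$ invariant. Then I restrict attention to $\Pi\sigma\Pi$, an operator on $\mathrm{span}\{\ket{xx}\}$. For such operators the standard argument (as in Rains' bound for maximally correlated states) gives that the diagonal part is dominated by an operator of the form $\sum_x q_x\proj{xx}$ on the relevant support, with $q$ a probability distribution. Concretely, the matrix $M_{xy}=\bra{xx}\sigma\ket{yy}$ satisfies $|M_{xy}|$ bounded via entries of $\sigma^{\top_B}$, and positivity of $\sigma^{\top_B}$ controls the off-diagonals. I expect to show $\Pi\sigma\Pi\leq \Pi(I_{\A}\ox\sigma_{\B})\Pi$ up to operators irrelevant for $D_\alpha$ with $\alpha<1$. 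Monotonicity, Proposition~\ref{prop:mainpro}(i), then yields the bound $\geq I_\alpha$.

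If this direct route fails, the fallback is covariance and twirling. Twirl $\sigma$ over the dephasing-invariant group $\{Z^a\ox Z^{-a}\}$, which commutes with $\rho$, and exploit joint convexity. This reduces to PPT states of the invariant form, where the PPT condition forces $|M_{xy}|\leq\sqrt{\bra{xy}\sigma\ket{xy}\bra{yx}\sigma\ket{yx}}$ and gives the needed domination.

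\textbf{Step 2: additivity and tensor powers.} Since $\rho^{\ox n}$ is again maximally correlated and the PPT set is closed under tensor products, the same argument applies at blocklength $n$. Additivity of $I_\alpha$ (Proposition~\ref{prop:mainpro}(v)) then gives additivity of the PPT quantity. Inserting the Step~1 identity into Corollary~\ref{cor:conlocc} with $\alpha=\frac1{1+s}$ yields \eqref{equ:loccconver}.

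\textbf{Step 3: achievability.} In Theorem~\ref{thm:loccachi}, choose the input $\phi^{\ox m}=\Phi_{\bar\A\A}^{\ox m}$. This gives
\[
\frac1m I^{\rm c}_{\frac1{1+s}}(\mscr N^{\ox m})\geq I_{\frac1{1+s}}(\A\rangle\B)_{\mscr N(\Phi)},
\]
so the achievable exponent is at least $\frac12\sup_{0<s<1}s\bigl(I_{\frac1{1+s}}(\A\rangle\B)_{\mscr N(\Phi)}-r\bigr)$.

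Let $g(s)=s\bigl(I_{\frac1{1+s}}-r\bigr)$. It is concave in $s$, by convexity of $s\mapsto -sD_{1/(1+s)}$ in the Hoeffding form. As in the proof of Theorem~\ref{thm:main}, the supremum over $s>0$ equals the supremum over $(0,1)$ exactly when $g'(1)\leq0$. This condition is equivalent to $r\geq\frac{\rm d}{{\rm d}s}sI_{\frac1{1+s}}|_{s=1}$, which gives \eqref{equ:exactlocc1}.
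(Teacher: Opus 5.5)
Your argument is essentially the paper's proof. Pinching onto $\Pi$ sends $I_{\A}\ox\sigma_{\B}$ to a separable, hence PPT, state, which gives $\min_{\sigma\in{\rm PPT}}D_\alpha(\mscr N(\Phi_{\A\A'})\|\sigma)\leq I_\alpha(\A\rangle\B)_{\mscr N(\Phi_{\A\A'})}$; this is inserted into Corollary~\ref{cor:conlocc}, and achievability with the maximally entangled input plus the concavity argument of Theorem~\ref{thm:main} gives the equality. Only this ``upper bound'' direction is needed, because Corollary~\ref{cor:conlocc} is already single-letter and the converse exponent is nondecreasing in the divergence term, so the PPT lower bound you flag as the main obstacle and the additivity of Step~2 can be dropped (the paper does not prove them).
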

\begin{proof}
    We first establish that, for any $\alpha \in [0,2]$,
    \begin{equation}
    \label{equ:redata}
\min_{\sigma_{\A\B} \in {\rm{PPT}}(\A:\B)}D_\alpha(\mscr{N}(\Phi_{\A\A'})\|\sigma_{\A\B}) \leq I_{\alpha}(\A \rangle \B)_{\mscr{N}(\Phi_{\A\A'})}.
    \end{equation}
Since $\mscr{N}_{\A \rightarrow \B}$ has the form in Eq.~(\ref{equ:cgdcform}), the support of $\mscr{N}(\Phi_{\A\A'})$ lies in the projector 
$$\Pi=\sum_{x=0}^{|\A|-1} \proj{x}_{\A} \ox \proj{x}_{\B}.$$
Applying the data-processing inequality for the quantum channel $\mscr{E}(\cdot)=\Pi(\cdot)\Pi+(I-\Pi)(\cdot)(I-\Pi)$, we obtain, for any $\sigma_{\B} \in \mc{S}(\B)$,
\begin{equation}
\label{equ:reldata}
\begin{split}
    &D_\alpha(\mscr{N}(\Phi_{\A\A'}) \|I_{\A} \ox \sigma_{\B}) \\
    \geq &D_\alpha(\mscr{E}(\mscr{N}(\Phi_{\A\A'})) \|\mscr{E}(I_{\A} \ox \sigma_{\B})) \\
    =&D_\alpha(\mscr{N}(\Phi_{\A\A'}) \|\Pi(I_{\A} \ox \sigma_{\B})\Pi) \\
    =&D_\alpha(\mscr{N}(\Phi_{\A\A'}) \big\|\sum\nolimits_{x=0}^{|\A|-1} \bra{x}\sigma_{\B}\ket{x} \proj{x}_A \ox \proj{x}_B)\\
    \geq &\min_{\sigma_{\A\B} \in {\rm{PPT}}(\A:\B)}D_\alpha(\mscr{N}(\Phi_{\A\A'})\|\sigma_{\A\B}),
    \end{split}
\end{equation}
which proves Eq.~(\ref{equ:redata}) and Eq.~(\ref{equ:loccconver}) follows from Eq.~(\ref{equ:redata}) and Corollary~\ref{cor:conlocc}.

Noticing that the achievability bound (Theorem~\ref{thm:loccachi}) gives
\begin{equation}
\label{equ:loccachicov}
      E^{t}(\mscr{N}_{\A \rightarrow \B},r)\geq \frac{1}{2}\sup_{0<s<1} s\left( I_{\frac{1}{1+s}}(\A \rangle \B)_{\mscr{N}(\Phi_{\A\A'})}-r\right),
\end{equation}
 where $t \in \mc{T} $. The exact characterization in the high-rate regime $r\geq \frac{\rm{d}}{\rm{ds}}sI_{\frac{1}{1+s}}(\A \rangle \B)_{\mscr{N}(\Phi_{\A\A'})}|_{s=1}$ follows immediately by combining Eq.~(\ref{equ:loccachicov}), Eq.~(\ref{equ:loccconver}) and a similar argument as in the proof of Theorem~\ref{thm:main}.
\end{proof}


\section{Conclusion and Discussion}
\label{sec:conclu}

We derived fundamental one-shot upper bounds on the error of quantum information decoupling, together with a complementary one-shot lower bound in the standard decoupling setting. These bounds induce corresponding achievability and ensemble-tight converse bounds on the decoupling error exponent. By showing that the two bounds coincide whenever the decoupling rate $r$ satisfies $r \leq R_{\text{critical}}$, we obtained an exact characterization of the error exponent of standard quantum information decoupling in the low-rate regime.

As applications, we characterized the exact error exponents of quantum state merging in the low-entanglement-cost and high-entanglement-distillation-rate regimes and established achievability bounds on the error exponents for entanglement distillation and a broad class of quantum communication tasks, including subspace transmission, entanglement transmission, entanglement generation, as well as their one-way LOCC-assisted counterparts. We expect that the techniques and results developed here will find further applications in other quantum information processing tasks.

Our analysis also reveals the existence of a critical rate for the error exponents of standard decoupling and quantum state merging, beyond which the exact error exponents remain unknown. The appearance of such a critical rate, separating regimes of complete and partial characterization, is a common phenomenon in the study of error exponents~\cite{gallager1968information,Dalai_2013,Li_2023}. In classical channel coding, combinatorial effects are known to play a decisive role in this regime~\cite{gallager1968information,Dalai_2013}, and it remains an open question whether analogous mechanisms govern quantum information decoupling and quantum state merging.

Throughout this work, we quantified the decoupling error using quantum relative entropy. A natural direction for future research is to determine the corresponding error exponents under alternative distance measures, such as the trace distance. While prior work has established achievability bounds in trace distance, and related techniques\,---\,such as complex interpolation\,---\,have successfully been applied to other quantum covering problems, including quantum soft covering~\cite{cheng2023error}, convex splitting~\cite{cheng2023tightoneshotanalysisconvex}, and privacy amplification~\cite{dupuis2023privacy}, the development of matching converse bounds in fully quantum settings remains open.

Finally, our achievability bounds on the error exponents for entanglement distillation, subspace transmission, entanglement transmission, entanglement generation, and their one-way LOCC-assisted counterparts are expressed in terms of regularized Petz--R\'enyi coherent informations. Establishing concise single-letter characterizations or matching converse bounds for these tasks constitutes an important open problem in quantum information theory.


\acknowledgements

MB and YY acknowledge support by the European Research Council (ERC Grant Agreement No.~948139) and from the Excellence Cluster Matter and Light for Quantum Computing (ML4Q-2). HC is supported under grants No.~NSTC 114-2628-E-002 -006, NSTC 114-2119-M-001-002, and NSTC 115-2124-M-002-014, NTU-114V2016-1, NTU-CC115L893705, and NTU-115L900702.
We thank Bartosz Regula for providing insightful comments and suggestions to our early version
We thank Hayata Yamasaki for pointing out a problem around an non-full support operator in an earlier version of the proof of Theorem~\ref{thm:maindec} and for providing useful suggestions.


\appendix
\section{Auxiliary Lemmas} \label{sec:lemmas}

The following lemmas are used in our analysis, with notation as in the main text.

\begin{lemma}[\hspace{-0.1pt}{\cite{sharp25, gour2026optimaltraceinequalitiessingleshot}}] \label{lemm:sharp_one-shot}
For $\rho,\sigma\in\mc{P}(\mc{H})$ such that $\supp{(\rho)} \subseteq \supp{(\sigma)}$ and for any $s\in(0,1]$,
\begin{align}
	\tr\left( \rho \left( \log(\rho+\sigma) - \log \sigma \right) \right)
    &\leq G_s\tr\left( \left( \sigma^{-\frac{s}{2(1+s)}} \rho \sigma^{-\frac{s}{2(1+s)}} \right)^{1+s} \right).
    \label{eq:sharp_one-shot}
\end{align}
where $G_s:=\sup_{r>0} \frac{\log(1+r)}{r^s}  \leq \frac{1}{\ln 2} s^{s-1}(1-s)^{1-s}$ for $s\in(0,1]$.
\end{lemma}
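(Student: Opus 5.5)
The plan is to reduce the inequality to the scalar bound $\log(1+r)\le G_s r^s$, which holds for all $r>0$ by the definition of $G_s$, and to treat the non-commutativity of $\rho$ and $\sigma$ as a separate step. I first check the commuting case, because it fixes the constant. If $\rho=\sum_x p_x\proj{x}$ and $\sigma=\sum_x q_x\proj{x}$ with $q_x>0$ wherever $p_x>0$, the left-hand side equals $\sum_x p_x\log(1+p_x/q_x)$. Applying the scalar bound termwise gives at most $G_s\sum_x p_x^{1+s}q_x^{-s}$, and this is exactly the right-hand side in the commuting case. So $G_s$ is the right constant, and the whole difficulty is non-commutativity.

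For general $\rho,\sigma$ I would restrict to $\supp(\sigma)$, which is allowed since $\supp(\rho)\subseteq\supp(\sigma)$, so that $\sigma$ is invertible. I then use the resolvent representation
\begin{equation*}
\log(\rho+\sigma)-\log\sigma=\frac{1}{\ln 2}\int_0^\infty\Big[(\sigma+t)^{-1}-(\rho+\sigma+t)^{-1}\Big]\,dt .
\end{equation*}
Put $A_t:=(\sigma+t)^{-1/2}\rho(\sigma+t)^{-1/2}$. The integrand equals $(\sigma+t)^{-1/2}A_t(I+A_t)^{-1}(\sigma+t)^{-1/2}$. Substituting $\rho=(\sigma+t)^{1/2}A_t(\sigma+t)^{1/2}$ and using cyclicity of the trace gives the exact identity
\begin{equation*}
\tr\rho\big(\log(\rho+\sigma)-\log\sigma\big)=\frac{1}{\ln 2}\int_0^\infty \tr\, A_t^2(I+A_t)^{-1}\,dt .
\end{equation*}
This writes the left-hand side as an integral of traces of a single scalar function $g(a)=a^2/(1+a)$ of positive operators. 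In the commuting case the $t$-integral evaluates to $p\log(1+p/q)$.

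The remaining step, and the main obstacle, is to compare $\int_0^\infty\tr\,g(A_t)\,dt$ with $\tr(\sigma^{-\gamma}\rho\sigma^{-\gamma})^{1+s}$, where $\gamma=\frac{s}{2(1+s)}$, without losing the constant. A crude approach would bound $g(a)\le c\,a^{1+s}$ pointwise and then integrate $\tr A_t^{1+s}$ over $t$. However, this loses sharpness: in the commuting case it only reproduces a constant of order $c/s$, not $G_s$.

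Instead I would keep the scalar function intact, with the aim of proving
\begin{equation*}
\int_0^\infty\tr\, g(A_t)\,dt\;\le\;\tr\,h\big(\sigma^{-\gamma}\rho\sigma^{-\gamma}\big),
\end{equation*}
where $h$ is chosen so that $h$ equals $\ln 2$ times the commuting-case value of the left-hand side; the scalar bound then finishes the proof. To obtain this comparison I would try two tools in order. First, operator-monotonicity arguments in $t$, together with an Araki--Lieb--Thirring inequality, to move the $t$-dependent sandwich $(\sigma+t)^{-1/2}$ to the fixed sandwich $\sigma^{-\gamma}$. If that fails, a complex-interpolation (Stein--Hirschman) argument on $z\mapsto\tr\big|\rho^{1/2}(\sigma+t)^{-z/2}\big|^{p_z}$, interpolating between the trivial endpoints $s\to0$ and $s=1$, since the target power $1+s$ and the exponent $\gamma$ are exactly the interpolation exponents. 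This is the step I expect to require the genuine matrix-analysis input, for which the paper relies on \cite{sharp25,gour2026optimaltraceinequalitiessingleshot}.
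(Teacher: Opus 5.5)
Your preliminary steps are correct. The commuting case shows that $G_s$ is the optimal constant, and the identity
\[
\tr\rho\bigl(\log(\rho+\sigma)-\log\sigma\bigr)=\frac{1}{\ln 2}\int_0^\infty \tr A_t^2(I+A_t)^{-1}\,dt
\]
is exact. But the proposal stops where the content of the lemma begins, and the intermediate inequality you aim for is not a reduction. In the commuting case $Z:=\sigma^{-\gamma}\rho\sigma^{-\gamma}$ has eigenvalue $z=p\,q^{-s/(1+s)}$. With $r=p/q$ one has $p\ln(1+p/q)=z^{1+s}\ln(1+r)/r^{s}$, and $r$ is still free once $z$ is fixed. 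So no function $h$ reproduces the commuting-case value. The smallest $h$ for which $\int_0^\infty\tr g(A_t)\,dt\le\tr h(Z)$ holds even for $1\times1$ matrices is $h(z)=\ln 2\,G_s\,z^{1+s}$. Your target is therefore the lemma itself in integrated form. The scalar bound cannot be postponed to the end; it has to be absorbed into the non-commutative comparison, and for that step you only list tools to try. The interpolation sketch works with Schatten norms, i.e.\ traces of powers, so at best it serves the lossy power-function route discussed next.

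What your first tool does deliver is a complete proof along the crude route you discarded, but with a weaker constant. On $\supp(\sigma)$ write $(\sigma+t)^{-1}=\sigma^{-\gamma}K_t\sigma^{-\gamma}$ with $K_t:=\sigma^{2\gamma}(\sigma+t)^{-1}$. Then $A_t$ has the same nonzero spectrum as $K_t^{1/2}ZK_t^{1/2}$, so Araki--Lieb--Thirring gives $\tr A_t^{1+s}\le\tr K_t^{1+s}Z^{1+s}$. Because $2\gamma(1+s)=s$,
\[
\int_0^\infty K_t^{1+s}\,dt=\int_0^\infty \sigma^{s}(\sigma+t)^{-1-s}\,dt=\frac{1}{s}\,\Pi_\sigma ,
\]
with $\Pi_\sigma$ the projector onto $\supp(\sigma)$. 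Together with $a^2/(1+a)\le s^s(1-s)^{1-s}a^{1+s}$, this proves the inequality with $G_s$ replaced by $\frac{1}{\ln 2}s^{s-1}(1-s)^{1-s}$, the majorant quoted in the statement. That weaker form suffices for every use in the paper. In Theorem~\ref{thm:maindec} the only other appeal to $G_s$ is $\log(1+r)\le G_s r^s$, which survives any larger constant, so only prefactors change, not exponents. It is still not the lemma as stated. Reaching the optimal $G_s$ requires treating $g$ itself inside the non-commutative comparison, and your proposal contains no argument for that. Note that the paper does not prove this step either; it imports it from \cite{sharp25,gour2026optimaltraceinequalitiessingleshot}.
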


\begin{lemma} [\hspace{-0.1pt}{\cite[Lemma 1]{Li_2023}}]
\label{lem:con}
    Let $A_x \in \mc{P}(\mc{H})$ for $x \in \mc{X}$, and let $\lambda$ be a positive number. Then we have
    \begin{equation}
        \tr\left(\sum_{x \in \mc{X}} A_x-\lambda I_{\mc{H}}\right)_+ \geq \sum_{x \in \mc{X}} \tr \left(A_x-\lambda I_{\mc{H}}\right)_+.
    \end{equation}
\end{lemma}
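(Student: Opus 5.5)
The plan is to reduce the claim to a trace superadditivity property of the convex function $f(t):=(t-\lambda)_+$ on $[0,\infty)$, which satisfies $f(0)=0$. Since $\tr(X-\lambda I_{\mc H})_+=\tr f(X)$ for self-adjoint $X$, the statement reads $\tr f(\sum_x A_x)\geq\sum_x\tr f(A_x)$. By induction on $|\mc X|$ it suffices to prove, for two operators $A,B\in\mc P(\mc H)$,
\begin{equation*}
\tr f(A+B)\geq \tr f(A)+\tr f(B).
\end{equation*}
The induction step just applies this with $A=A_{x_1}$ and $B=\sum_{x\neq x_1}A_x$, which is again positive semidefinite.

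For the two-operator inequality, set $C:=A+B$ and work on $\supp(C)$. Outside $\supp(C)$ all three operators vanish, and since $f(0)=0$ they contribute nothing to any of the traces. On $\supp(C)$ define the contraction
\begin{equation*}
K:=C^{-1/2}AC^{-1/2}.
\end{equation*}
It satisfies $0\leq K\leq I$ because $0\leq A\leq C$, and it gives the factorizations $A=C^{1/2}KC^{1/2}$ and $B=C^{1/2}(I-K)C^{1/2}$. The operator $C^{1/2}KC^{1/2}=(C^{1/2}K^{1/2})(K^{1/2}C^{1/2})$ has the same nonzero spectrum as $K^{1/2}CK^{1/2}$. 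Because $f(0)=0$, this yields $\tr f(A)=\tr f(K^{1/2}CK^{1/2})$, and likewise $\tr f(B)=\tr f((I-K)^{1/2}C(I-K)^{1/2})$.

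The key step is the trace Jensen inequality (Hansen--Pedersen, Brown--Kosaki): for convex $f$ with $f(0)\leq 0$ and any contraction $Z$,
\begin{equation*}
\tr f(Z^*CZ)\leq \tr Z^*f(C)Z.
\end{equation*}
Applying it with $Z=K^{1/2}$ and with $Z=(I-K)^{1/2}$ and adding the two bounds gives
\begin{equation*}
\tr f(A)+\tr f(B)\leq \tr Kf(C)+\tr (I-K)f(C)=\tr f(C),
\end{equation*}
which is the desired inequality.

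I expect the main obstacle to be this Jensen step, in particular making sure that the hypothesis $f(0)\leq 0$ is exactly what allows a contraction $Z$ rather than an isometry. If the citation is not acceptable, I would prove it directly by dilation. Embed $Z$ into the unitary
\begin{equation*}
U=\begin{pmatrix} Z & (I-ZZ^*)^{1/2}\\ (I-Z^*Z)^{1/2} & -Z^*\end{pmatrix},
\end{equation*}
and compare $C\oplus 0$ with its conjugation by $U$. The compression of $f$ to the upper-left block then uses convexity of $X\mapsto\tr f(X)$ restricted to diagonal blocks, and $f(0)\leq0$ handles the zero block. A more elementary fallback avoids Jensen altogether and uses the variational form $\tr f(X)=\int_\lambda^\infty \tr\{X>t\}\,dt$. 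With it, one would establish that the number of eigenvalues of $A+B$ exceeding $t$ dominates the corresponding counts for $A$ and $B$ appropriately via Weyl-type interlacing. This route looks more delicate, so I would try the contraction argument first.
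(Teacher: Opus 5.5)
Your proof is correct. The paper gives no proof of this lemma; it is quoted from Li, Yao and Hayashi. So there is no in-paper argument to compare against, and your argument has to stand on its own, which it does.

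The individual steps all check out:
\begin{itemize}
\item $\ker C\subseteq\ker A\cap\ker B$ makes $A,B,C$ block-diagonal with respect to $\supp(C)\oplus\ker C$, and $f(0)=0$ removes the $\ker C$ block.
\item $0\le K\le I$ follows from $A\le C$.
\item The $XX^*$ versus $X^*X$ identity with $X=C^{1/2}K^{1/2}$ moves $\tr f(A)$ to $\tr f(K^{1/2}CK^{1/2})$.
\item The two Jensen bounds add up to $\tr f(C)$ by cyclicity.
\end{itemize}

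The step you flagged is valid for merely convex $f$, and in finite dimensions it has a two-line proof. Let $\{v_j\}$ be an eigenbasis of $Z^*CZ$ and set $w_j:=Zv_j$, so that $t_j:=\|w_j\|^2\le 1$. Convexity together with $f(0)\le 0$ gives $f(tx)\le tf(x)$ for $t\in[0,1]$. Combining this with scalar Jensen for the spectral measure of $C$ in the unit vector $w_j/\|w_j\|$ yields $f(\langle w_j,Cw_j\rangle)\le\langle w_j,f(C)w_j\rangle$; when $w_j=0$ this is just $f(0)\le 0$. Summing over $j$ gives $\tr f(Z^*CZ)\le\tr Z^*f(C)Z$. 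This is exactly where $f(0)\le0$ enters, and why contractions rather than only isometries are allowed. Conceptually, your proof is the operator version of the scalar bound $f(a)\le\frac{a}{a+b}f(a+b)$, with $K$ playing the role of $\frac{a}{a+b}$.

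A shorter route avoids $K$, Jensen, the support reduction and the induction. For $A_1,\dots,A_m\ge0$, define $T:\mc H\to\mc H^{\oplus m}$ by $T\psi:=(A_1^{1/2}\psi,\dots,A_m^{1/2}\psi)$. Then $T^*T=\sum_xA_x$, while $TT^*$ has diagonal blocks $A_1,\dots,A_m$ and the same nonzero spectrum as $T^*T$. Pinching onto the diagonal blocks is a mixed-unitary channel, so it can only decrease $X\mapsto\tr f(X)$ for convex $f$. Hence $\sum_x\tr f(A_x)\le\tr f(TT^*)=\tr f(\sum_xA_x)+(m-1)|\mc H|\,f(0)$, which is the lemma because $f(0)=0$.

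Your dilation fallback, specialised to the self-adjoint contraction $Z=K^{1/2}$, is exactly this argument for $m=2$. The diagonal blocks of $U(C\oplus 0)U$ are $K^{1/2}CK^{1/2}$ and $(I-K)^{1/2}C(I-K)^{1/2}$, so pinching already gives $\tr f(A)+\tr f(B)\le\tr f(C)$. For a general contraction, however, pinching only controls the sum over both diagonal blocks. To isolate $\tr f(Z^*CZ)\le\tr Z^*f(C)Z$ you would also need the compression inequality $\tr f(W_{11})\le\tr\,[f(W)]_{11}$, so that fallback as sketched is incomplete. You do not need it, and the Weyl-interlacing fallback is unnecessary as well.

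Your route proves the more general Jensen-type statement and explains the role of $f(0)\le0$. The block trick is shorter and handles all $m$ summands in one step.
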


\begin{lemma} [\hspace{-0.1pt}{\cite[Theorem~\uppercase\expandafter{\romannumeral4}.4]{Mosonyi_2014}}]
\label{lem:mons}
    For any $\rho \in \mc{S}(\mc{H})$, $\sigma \in \mc{P}(\mc{H})$, $a \in \mathbb{R}$ and $t>0$, we have
    \begin{equation}
        \begin{split}
         &\lim_{n \rightarrow \infty} \frac{1}{n}\log  \tr \rho^{\ox n}\{\rho^{\ox n}>t2^{na}\sigma^{\ox n} \} \\
         =&\lim_{n \rightarrow \infty} \frac{1}{n}\log  \tr (\rho^{\ox n}-t2^{na}\sigma^{\ox n})_+ \\
         =&\inf_{s \geq 0} \left\{ s(\widetilde{D}_{1+s}(\rho\|\sigma)-a)\right\}.
        \end{split}
    \end{equation}
\end{lemma}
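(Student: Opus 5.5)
The claim to prove is the large-deviation identity of Lemma~\ref{lem:mons}: for $\rho\in\mc S(\mc H)$, $\sigma\in\mc P(\mc H)$, $a\in\mathbb R$ and $t>0$, both $\frac1n\log\tr\rho^{\ox n}\{\rho^{\ox n}>t2^{na}\sigma^{\ox n}\}$ and $\frac1n\log\tr(\rho^{\ox n}-t2^{na}\sigma^{\ox n})_+$ converge to $\inf_{s\ge0}s(\widetilde D_{1+s}(\rho\|\sigma)-a)$. The plan is to prove a matching upper bound for the projector quantity and a matching lower bound for the positive-part quantity. The elementary sandwich
\[
\tr(X-Y)_+\le \tr X\{X>Y\}
\]
then transfers the upper bound to the positive part. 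For the lower bound I also want, for any $\delta>0$ and all large $n$,
\[
\tr X\{X>Y\}\le (1+2^{n\delta})\,\tr\bigl(X-2^{-n\delta}Y\bigr)_+ ,
\]
which turns the other direction into a statement about the positive part at a slightly shifted $a$. Continuity in $a$ of the concave-in-$a$ function $\phi(a):=\inf_{s\ge0}s(\widetilde D_{1+s}-a)$ then closes the gap. The constant $t$ only shifts $a$ by $\log t/n$, so it is harmless.

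\textbf{Upper bound (Chernoff/Markov step).} Pinch with respect to $\sigma^{\ox n}$. The number of distinct eigenvalues is $v_n\le (n+1)^{d}$, so the pinching inequality gives $\rho^{\ox n}\le v_n\mc E(\rho^{\ox n})$. Write $\Pi=\{\rho^{\ox n}>t2^{na}\sigma^{\ox n}\}$.

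Next I would bound $\tr\rho^{\ox n}\Pi$ by $v_n$ times the corresponding quantity for the commuting pair $\mc E(\rho^{\ox n}),\sigma^{\ox n}$, up to a polynomial factor. Doing this cleanly is delicate. A cleaner route is the operator Markov inequality
\[
\tr\rho\{\rho>\lambda\sigma\}\le \lambda^{-s}\tr\rho^{1+s}\sigma^{-s},
\]
but that produces the Petz divergence, not the sandwiched one. So I would pass to pinched states instead: for commuting pairs, $\tr P\{P>\lambda S\}\le\lambda^{-s}\widetilde Q_{1+s}(P\|S)$. Then use $\widetilde Q_{1+s}(\mc E(\rho^{\ox n})\|\sigma^{\ox n})\le\widetilde Q_{1+s}(\rho^{\ox n}\|\sigma^{\ox n})$ (data processing) and $\rho^{\ox n}\le v_n\mc E(\rho^{\ox n})$. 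This yields the upper bound $\limsup\le \inf_{s\ge0}s(\widetilde D_{1+s}-a)$, since polynomial factors vanish at exponential scale.

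\textbf{Lower bound.} Here I would again pinch, using that $\frac1n\widetilde D_{1+s}(\mc E(\rho^{\ox n})\|\sigma^{\ox n})\to\widetilde D_{1+s}(\rho\|\sigma)$ uniformly on compact $s$-sets. After pinching everything commutes, so the classical G\"artner--Ellis / Cram\'er theorem for the log-likelihood ratio, together with an exponential tilting (change-of-measure) argument, gives the matching lower bound $\liminf\ge\phi(a)$ up to arbitrarily small slack. To get back from the pinched positive part to the unpinched one, I would use monotonicity of $\tr(\cdot)_+$ under the pinching channel:
\[
\tr(\rho^{\ox n}-\lambda\sigma^{\ox n})_+\ \ge\ \tr\bigl(\mc E(\rho^{\ox n})-\lambda\sigma^{\ox n}\bigr)_+ ,
\]
which is valid because $\mc E$ is a channel fixing $\sigma^{\ox n}$. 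The delicate part of this step is the tilting, because the ``classical'' pinched distributions are not i.i.d. I expect this to be the main obstacle.

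To handle it, I would block: fix $m$, treat $(\mc E_m(\rho^{\ox m}),\sigma^{\ox m})$ as a commuting pair, apply i.i.d.\ Cram\'er over $k$ blocks, and let $m\to\infty$. A difficulty at the boundary case $a\le D(\rho\|\sigma)$ (where the exponent is $0$) is settled by the law of large numbers. Since this lemma is cited from \cite{Mosonyi_2014}, the cleanest course is to invoke their Theorem~IV.4 directly, and I would present the argument above only as a sketch of why it holds.
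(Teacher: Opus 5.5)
The paper gives no proof of this lemma. It imports it verbatim from \cite[Theorem~IV.4]{Mosonyi_2014}, so your recommendation to invoke that theorem is exactly what the paper does. The self-contained sketch you add, however, does not close in the upper-bound step.

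Write $\Pi_n=\{\rho^{\ox n}>t2^{na}\sigma^{\ox n}\}$. Pinching only gives $\tr\rho^{\ox n}\Pi_n\le v_n\tr\mc E(\rho^{\ox n})\Pi_n$. The commuting Markov bound, by contrast, controls $\tr P\{P>\lambda S\}$ only for the pinched pair's \emph{own} projector $\{P>\lambda S\}$, not for $\Pi_n$. You cannot swap the two projectors either, because $X\mapsto\tr X\{X>Y\}$ is not monotone for non-commuting operators.

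The direct route also does not force the Petz quantity. A Nagaoka-type argument gives the sandwiched bound for non-commuting pairs and every $s>0$:
\begin{itemize}
\item Data processing of $\widetilde D_{1+s}$ under the measurement $\{\Pi,I-\Pi\}$ gives $(\tr\rho\Pi)^{1+s}(\tr\sigma\Pi)^{-s}\le\widetilde Q_{1+s}(\rho\|\sigma)$. This is the same inequality as \eqref{equ:data}.
\item From $\Pi(\rho-\lambda\sigma)\Pi\ge0$ you get $\tr\sigma\Pi\le\lambda^{-1}\tr\rho\Pi$.
\item Combining the two yields $\tr\rho\{\rho>\lambda\sigma\}\le\lambda^{-s}\widetilde Q_{1+s}(\rho\|\sigma)$.
\end{itemize}
Applying this to $\rho^{\ox n},\sigma^{\ox n}$ and using multiplicativity of $\widetilde Q_{1+s}$ gives $\limsup_n\frac1n\log\tr\rho^{\ox n}\Pi_n\le s(\widetilde D_{1+s}(\rho\|\sigma)-a)$ for each $s\ge0$, with no pinching at all. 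Your pinched chain can also be repaired by running the same two-outcome argument on $(\mc E(\rho^{\ox n}),\sigma^{\ox n})$ with test $\Pi_n$, at the cost of a factor $v_n^{1+s}$.

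Your blocked lower bound is sound, and it is the right way around G\"artner--Ellis for the non-i.i.d.\ pinched sequence. Write $n=mk+\ell$ and $P_m=\mc E_m(\rho^{\ox m})$. The map $\mc E_m^{\ox k}$, composed with the partial trace over the last $\ell$ copies, is a channel sending $\sigma^{\ox n}$ to $(\tr\sigma)^\ell\sigma^{\ox mk}$. Hence $\tr(\rho^{\ox n}-\lambda\sigma^{\ox n})_+\ge\tr(P_m^{\ox k}-\lambda(\tr\sigma)^\ell\sigma^{\ox mk})_+$. From $\rho^{\ox m}\le v_mP_m$ you get $\frac1m\log\widetilde Q_{1+s}(P_m\|\sigma^{\ox m})\ge\log\widetilde Q_{1+s}(\rho\|\sigma)-(1+s)\frac{\log v_m}{m}$. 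Combining this with Cram\'er over the $k$ i.i.d.\ blocks gives $\liminf_n\frac1n\log\tr(\rho^{\ox n}-t2^{na}\sigma^{\ox n})_+\ge\phi(a+\eta)-\eta$ for every $\eta>0$.

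Two simplifications follow:
\begin{itemize}
\item Cram\'er already covers $a\le D(\rho\|\sigma)$, so no separate law-of-large-numbers step is needed.
\item Your second auxiliary inequality is superfluous, since $\tr(X-Y)_+\le\tr X\{X>Y\}$ alone transfers both bounds.
\end{itemize}

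The final step, letting $\eta\downarrow0$ by ``continuity of $\phi$'', needs right-continuity of $\phi$ at $a$. That holds only for $a<D_{\rm max}(\rho\|\sigma)$, because $\phi$ is concave but equals $-\infty$ on $(D_{\rm max},\infty)$. At $a=D_{\rm max}$ with $t\ge1$ the identity as stated is actually false:
\begin{itemize}
\item $\rho^{\ox n}\le2^{nD_{\rm max}}\sigma^{\ox n}$ makes both left-hand quantities vanish.
\item $\phi(D_{\rm max})=\lim_{s\to\infty}(\log\widetilde Q_{1+s}(\rho\|\sigma)-sD_{\rm max})$ can be finite. For commuting $\rho=P$, $\sigma=Q$ it equals $\log\sum_{x:\,P(x)=2^{D_{\rm max}}Q(x)}P(x)$.
\end{itemize}
So, once the upper bound is fixed, your argument proves the lemma for every $a\neq D_{\rm max}(\rho\|\sigma)$. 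The boundary point has to be excluded from the statement anyway, at least for $t\ge1$. This is harmless for the paper's converse corollary, where the rate carries a free slack $\delta$.
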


\begin{lemma}[\hspace{-0.1pt}{\cite[Lemma~22]{cheng2024jointstatechanneldecouplingoneshot}}]
\label{lem:avehar}
Let $\mc{H}_{\tilde{\A}}$, $\mc{H}_{\A'}$, and
$\mc{H}_{\tilde{\A}'}$ be copies of $\mc{H}_{\A}$. Then
\begin{align}
&\mathbb{E}_{\mathbb{U}(\A)}
 \left[
 U_{\A}\Phi_{\A\A'}U_{\A}^*
 \ox
 U_{\tilde{\A}}\Phi_{\tilde{\A}\tilde{\A}'}U_{\tilde{\A}}^*
 \right]
\nonumber\\
&=
\frac{
 I_{\A\A'\tilde{\A}\tilde{\A}'}
 +F_{\A\tilde{\A}}\ox F_{\A'\tilde{\A}'}
}{
 |\A|^2(|\A|^2-1)
}
-\frac{
 F_{\A\tilde{\A}}\ox I_{\A'\tilde{\A}'}
 +I_{\A\tilde{\A}}\ox F_{\A'\tilde{\A}'}
}{
 |\A|^3(|\A|^2-1)
},
\end{align}
where $F_{\A\tilde{\A}}=\sum_{i,j=1}^{|\A|} \ket{i}\bra{j}_{\A} \otimes \ket{j}\bra{i}_{\tilde{\A}}$ is the swap operator between systems $\A$ and $\tilde{\A}$.
\end{lemma}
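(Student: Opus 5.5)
The plan is to treat the left-hand side as a second-moment twirl on $\A\tilde{\A}$, with the systems $\A'\tilde{\A}'$ carried along as passive coefficients, and to evaluate it with Schur--Weyl duality. Write $d=|\A|$ and $X:=\Phi_{\A\A'}\ox\Phi_{\tilde{\A}\tilde{\A}'}$. Because the same Haar unitary acts on $\A$ and $\tilde{\A}$, the quantity to compute is
\[
\mathbb{E}_{\mathbb{U}(\A)}\left[(U_{\A}\ox U_{\tilde{\A}})X(U_{\A}\ox U_{\tilde{\A}})^*\right].
\]

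First, I use the standard second-order twirling identity. For any operator $Y$ on $\A\tilde{\A}\ox K$, with $K$ an arbitrary auxiliary space (here $K=\A'\tilde{\A}'$), we have
\[
\mathbb{E}_{U}\left[U^{\ox 2}YU^{*\ox 2}\right]=I_{\A\tilde{\A}}\ox\frac{Y_1-\frac{1}{d}Y_F}{d^2-1}+F_{\A\tilde{\A}}\ox\frac{Y_F-\frac{1}{d}Y_1}{d^2-1},
\]
where $Y_1:=\tr_{\A\tilde{\A}}Y$ and $Y_F:=\tr_{\A\tilde{\A}}\big((F_{\A\tilde{\A}}\ox I_K)Y\big)$. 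This follows because the commutant of $\{U^{\ox2}\}$ is spanned by $I$ and $F$. The coefficients are then fixed by requiring that the partial traces of both sides against $I$ and against $F$ agree, using $\tr I=d^2$, $\tr F=d$, and $F^2=I$. Since the identity is linear in $Y$, it suffices to check it on $Y=Y_{\A\tilde{\A}}\ox Z_K$, where it reduces to the scalar Weingarten formula.

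Second, I compute the two partial traces for $Y=X$. In the computational basis,
\[
X=\frac{1}{d^2}\sum_{i,j,k,l}\ket{ik}\bra{jl}_{\A\tilde{\A}}\ox\ket{ik}\bra{jl}_{\A'\tilde{\A}'}.
\]
Taking $\tr_{\A\tilde{\A}}$ forces $i=j$ and $k=l$, which gives $X_1=I_{\A'\tilde{\A}'}/d^2$. For the swapped trace, $\bra{jl}F\ket{ik}=\delta_{jk}\delta_{li}$, so
\[
X_F=\frac{1}{d^2}\sum_{i,k}\ket{ik}\bra{ki}_{\A'\tilde{\A}'}=\frac{1}{d^2}F_{\A'\tilde{\A}'}.
\]

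Finally, substituting into the twirl formula gives
\[
I_{\A\tilde{\A}}\ox\frac{I_{\A'\tilde{\A}'}/d^2-F_{\A'\tilde{\A}'}/d^3}{d^2-1}+F_{\A\tilde{\A}}\ox\frac{F_{\A'\tilde{\A}'}/d^2-I_{\A'\tilde{\A}'}/d^3}{d^2-1}.
\]
Regrouping these terms reproduces the claimed expression exactly.

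The step that needs the most care is the twirl identity with a nontrivial auxiliary system $K$, specifically getting the signs and the $1/d$ factors of the Weingarten coefficients right. I would verify it by taking $\tr_{\A\tilde{\A}}$ of both sides, and $\tr_{\A\tilde{\A}}$ after multiplying by $F$, to confirm consistency. As an independent sanity check, I would confirm that the final answer has trace $1$ and partial trace $I_{\A'}\ox I_{\tilde{\A}'}/d^2$ over $\A\tilde{\A}$.
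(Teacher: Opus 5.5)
Your proposal is correct. The paper does not prove this lemma itself but imports it from the cited reference, and the standard derivation of the identity is exactly your second-moment twirl: feeding $X_1=I_{\A'\tilde{\A}'}/|\A|^2$ and $X_F=F_{\A'\tilde{\A}'}/|\A|^2$ into the coefficients $\bigl(X_1-X_F/|\A|\bigr)/(|\A|^2-1)$ and $\bigl(X_F-X_1/|\A|\bigr)/(|\A|^2-1)$ reproduces the stated formula, which indeed has unit trace. The only implicit assumption is $|\A|\geq 2$, needed so that $I_{\A\tilde{\A}}$ and $F_{\A\tilde{\A}}$ are linearly independent and $|\A|^2-1\neq 0$; this is also why the paper treats the case $|\A|=1$ separately in the proof of Theorem~\ref{thm:maindec}.
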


\begin{lemma}
\label{lem:inver}
Let $(\Omega,\mathcal F,\mu)$ be a probability space, let
$f:\Omega\to\mathcal P(\mathcal H)$ be an integrable
operator-valued random variable, and let
$A\in\mathcal P(\mathcal H)$. If
    $\supp(A)\subseteq\supp(f(\omega))$
for $\mu$-almost every $\omega$, then
\begin{align}
    \mathbb E_\mu\tr A\log f(\omega)
    \leq
    \tr A\log\mathbb E_\mu f(\omega).
\end{align}
\end{lemma}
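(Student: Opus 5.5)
The statement is Lemma~\ref{lem:inver}: $\mathbb E_\mu\tr A\log f(\omega)\le \tr A\log\mathbb E_\mu f(\omega)$ whenever $\supp(A)\subseteq\supp(f(\omega))$ almost surely. The plan is to reduce it to the operator concavity of the logarithm (an operator Jensen inequality), with a regularization to handle singular supports.

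\textbf{Step 1: full-rank case.} For $\epsilon>0$ set $f_\epsilon(\omega):=f(\omega)+\epsilon I$, which is positive definite. Since $x\mapsto\log x$ is operator concave on $(0,\infty)$, the Jensen-type inequality
\begin{equation}
\mathbb E_\mu \log f_\epsilon(\omega)\le \log \mathbb E_\mu f_\epsilon(\omega)
\end{equation}
holds in the operator order. For finite convex combinations this is just the definition of operator concavity. For a general probability measure, approximate $\mu$ by finitely supported measures, using integrability of $f$ and continuity of $\log$ on $[\epsilon,\infty)$; alternatively, use Hansen--Pedersen's Jensen inequality for integrals. Pairing with $A\ge0$ preserves the order, so
\begin{equation}
\mathbb E_\mu\tr A\log f_\epsilon(\omega)\le \tr A\log\bigl(\mathbb E_\mu f(\omega)+\epsilon I\bigr).
\end{equation}

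\textbf{Step 2: removing $\epsilon$ on the right.} Let $\bar f:=\mathbb E_\mu f$. The hypothesis gives $\supp(A)\subseteq\supp(f(\omega))$ almost surely, and since every $f(\omega)\le$ a multiple of its support projection, it follows that $\supp(f(\omega))\subseteq\supp(\bar f)$ for a.e.\ $\omega$ (a vector orthogonal to $\supp\bar f$ has zero expectation of $\langle v,f(\omega)v\rangle$, hence is a.s.\ in $\ker f(\omega)$). Hence $\supp(A)\subseteq\supp(\bar f)$. Decompose $\log(\bar f+\epsilon I)=\log(\bar f+\epsilon P)\oplus(\log\epsilon)(I-P)$, where $P$ is the support projection of $\bar f$. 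Because $A=PAP$, only the first block contributes, and it converges to $\log\bar f$ on the support as $\epsilon\downarrow0$. This is exactly the paper's convention for the logarithm, so the right-hand side tends to $\tr A\log \bar f$.

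\textbf{Step 3: removing $\epsilon$ on the left.} This is the main obstacle, because it requires exchanging the limit and the expectation. For each $\omega$ with $\supp A\subseteq \supp f(\omega)=:P_\omega$, the same block argument gives
\begin{equation}
\tr A\log f_\epsilon(\omega)=\tr A\log\bigl(f(\omega)+\epsilon P_\omega\bigr).
\end{equation}
Since $\log$ is operator monotone, this quantity is nonincreasing as $\epsilon\downarrow0$ and converges to $\tr A\log f(\omega)$. It is also bounded above, for $\epsilon\le1$, by $\tr A\log(f(\omega)+I)\le \tr A\, f(\omega)$, which is integrable. Monotone convergence (applied to the decreasing family, bounded above by an integrable function) therefore gives $\mathbb E_\mu\tr A\log f_\epsilon\to\mathbb E_\mu \tr A\log f$, possibly with value $-\infty$, in which case the claim is trivial.

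Combining Steps 1--3 proves the lemma. The remaining details to check are the measurability of $\omega\mapsto P_\omega$ and the monotone-convergence direction; both are routine in finite dimension.
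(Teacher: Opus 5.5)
Your proof is correct and follows the paper's route: you regularize by $\epsilon I$, apply operator concavity (Jensen) of the logarithm to $f(\omega)+\epsilon I$, and use the support conditions to remove $\epsilon$, and your argument for $\supp(A)\subseteq\supp(\mathbb E_\mu f)$ fills in a step the paper only asserts. The one simplification is that your Step~3 does not need monotone convergence: the paper integrates the pointwise inequality $\tr A\log f(\omega)\le\tr A\log(f(\omega)+\epsilon I)$, which follows from operator monotonicity, and lets $\epsilon\downarrow0$ only on the right-hand side (also, with base-2 logarithms your integrable majorant should read $\frac{1}{\ln 2}\tr A f(\omega)$ rather than $\tr A f(\omega)$).
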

\begin{proof}
    For any $\epsilon>0$, since $\supp(A)\subseteq\supp(f(\omega))$ for $\mu$-almost every $\omega$ and $\supp(A)\subseteq\supp(\mathbb{E}_\mu(f(\omega)))$, operator monotonicity of the logarithm on $\supp(f(\omega))$ followed by operator concavity of the logarithm, gives
    \begin{equation}
    \label{equ:inver} 
        \mathbb{E}_\mu  \tr A \log( f(\omega)) 
        \leq   \mathbb{E}_\mu  \tr A \log( f(\omega)+\epsilon I) 
         \leq \tr A \log \mathbb{E}_\mu( f(\omega) +\epsilon I )      
    \end{equation}
    By letting $\epsilon \rightarrow 0$ in Eq.~(\ref{equ:inver}), we have
    \begin{equation}
  \mathbb{E}_\mu  \tr A \log( f(\omega))  \leq \lim_{\epsilon \rightarrow 0}  \tr A \log \mathbb{E}_\mu( f(\omega) +\epsilon I )  = \tr A \log \mathbb{E}_\mu f(\omega) .
    \end{equation}
\end{proof}

\begin{lemma}[$1$-LOCC entanglement generation to entanglement generation {\cite[Lemma 14.6]{khatri2024principlesquantumcommunicationtheory}}]
    \label{lem:wilde1}
    Given a $(d,\epsilon)$-achievable one-way LOCC-assisted entanglement generation protocol for a channel $\mscr{N}_{\A \rightarrow \B}$, there exists a $(d,\epsilon)$-achievable  entanglement generation protocol for $\mscr{N}_{\A \rightarrow \B}$.
\end{lemma}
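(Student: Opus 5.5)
Given a one-way LOCC-assisted entanglement generation protocol $(d,\Psi_{\A'\A},\{\mscr{E}^x_{\A'\A \rightarrow \M'\A}\}_x,\{\mscr{D}^x_{\B \rightarrow \bar{\M}}\}_x)$ with error $\epsilon$, I will extract a single branch $x$ that already performs at least as well. The protocol's output is
\[
\sigma^{\rightarrow}_{\M'\bar{\M}}=\sum_x p_x\, \sigma^x_{\M'\bar{\M}},\qquad
\sigma^x_{\M'\bar{\M}}:=\tfrac{1}{p_x}(\mscr{D}^x\circ\mscr{N})\bigl(\mscr{E}^x(\Psi_{\A'\A})\bigr),
\]
where $p_x:=\tr\mscr{E}^x(\Psi_{\A'\A})$. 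Since $\Phi_{\M'\bar{\M}}$ is pure, $F^2(\cdot,\Phi)=\langle\Phi|\cdot|\Phi\rangle$ is linear, so
\[
F^2(\sigma^{\rightarrow},\Phi)=\sum_x p_x F^2(\sigma^x,\Phi).
\]
Hence some $x_0$ with $p_{x_0}>0$ satisfies $F^2(\sigma^{x_0},\Phi)\geq F^2(\sigma^\rightarrow,\Phi)$, that is, $P(\sigma^{x_0},\Phi)\leq\epsilon$.

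The remaining task is to realise $\sigma^{x_0}$ by an unassisted entanglement generation protocol, which requires a \emph{pure} input state on $\M'\A$. I will decompose $\mscr{E}^{x_0}$ into Kraus operators $K_j$, so that $\mscr{E}^{x_0}(\Psi)=\sum_j K_j\Psi K_j^*$. Applying the same linearity argument to this finer decomposition gives a single $j_0$ for which the normalized pure state
\[
\Psi'_{\M'\A}:=\frac{K_{j_0}\Psi_{\A'\A}K_{j_0}^*}{\tr(K_{j_0}\Psi_{\A'\A}K_{j_0}^*)}
\]
satisfies $F^2\bigl((\mscr{D}^{x_0}\circ\mscr{N})(\Psi'),\Phi\bigr)\geq 1-\epsilon^2$. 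Each $K_{j}$ maps $\A'\A\to\M'\A$, and I will treat it as acting on the composite system; any action it has on $\A$ is simply absorbed into preparing the pure state before transmission. The triple $(d,\Psi'_{\M'\A},\mscr{D}^{x_0})$ is then a valid entanglement generation protocol with error at most $\epsilon$.

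The step I expect to be the main obstacle is confirming that, in the definition, $\mscr{E}^x$ acts before the channel, so that the state fed to $\mscr{N}$ is $\mscr{E}^x(\Psi)$ restricted to $\A$, and that Alice's post-selection on $x$ and $j$ is legitimate because $\Psi'$ is just a fixed state she prepares. Under the definition as written, with the channel applied after $\mscr{E}^x$ on system $\A$, this holds, so the argument reduces to the fidelity-linearity convexity step.
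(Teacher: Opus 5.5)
Your argument is correct: because $\Phi_{\M'\bar{\M}}$ is pure, $F^2(\cdot,\Phi_{\M'\bar{\M}})=\langle\Phi|\cdot|\Phi\rangle$ is linear on normalized states. This lets you derandomize first over the classical outcome $x$ and then over the rank-one components $K_j\Psi_{\A'\A}K_j^*$, which leaves a pure input on $\M'\A$ together with the single decoder $\mscr{D}^{x_0}$. The paper gives no proof of its own and only cites Khatri--Wilde, whose argument is this same linearity-of-fidelity derandomization. Your proposal therefore follows essentially the intended route, and the ordering of $\mscr{E}^x$ and $\mscr{N}$ is immaterial in any case, since an instrument acting only on Alice's retained system commutes with the channel.
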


\begin{lemma}[Entanglement generation to entanglement transmission {\cite[Lemma 14.7]{khatri2024principlesquantumcommunicationtheory}}]
    \label{lem:wilde2}
    Given a $(d,\epsilon)$-achievable entanglement generation protocol for a channel $\mscr{N}_{\A \rightarrow \B}$, there exists a $(d,2\epsilon)$-achievable  entanglement transmission protocol for $\mscr{N}_{\A \rightarrow \B}$.
\end{lemma}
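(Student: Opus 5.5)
The plan is to build the entanglement transmission code from the generation code $(d,\Psi_{\M'\A},\mscr{D}_{\B\to\bar{\M}})$ by keeping the same decoder and replacing the fixed resource state $\Psi_{\M'\A}$ with an encoder $\mscr{E}_{\M\to\A}$. This encoder should map half of $\Phi_{\R\M}$ (with $\R\cong\M'$) to something close to $\Psi_{\R\A}$. The error will then split, via the triangle inequality for the purified distance, into the encoder's approximation error and the original generation error. Each piece will be bounded by $\epsilon$.

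\textbf{Step 1: the reduced state on $\M'$ is nearly maximally mixed.} Since $\mscr{D}\circ\mscr{N}$ acts only on $\A$, the output $\sigma_{\M'\bar{\M}}$ has marginal $\sigma_{\M'}=\Psi_{\M'}$. Monotonicity of the purified distance under $\tr_{\bar{\M}}$ then gives
\[
P\left(\Psi_{\M'},\tfrac{I_{\M'}}{d}\right)\leq P(\sigma_{\M'\bar{\M}},\Phi_{\M'\bar{\M}})\leq\epsilon .
\]

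\textbf{Step 2: the encoder via Uhlmann's theorem.} Identify $\R\cong\M'$, so $\Phi_{\R\M}$ purifies $I_{\R}/d$ and $\Psi_{\R\A}$ purifies $\Psi_{\R}$. Append an auxiliary system $\E'$ in a fixed pure state so that the target purification $\Psi_{\R\A}\otimes\proj{0}_{\E'}$ lives on a space large enough to contain an isometric image of $\M$. Uhlmann's theorem then gives an isometry $V_{\M\to\A\E'}$ with
\[
F\left(V\Phi_{\R\M}V^*,\Psi_{\R\A}\otimes\proj{0}_{\E'}\right)=F\left(\tfrac{I_{\R}}{d},\Psi_{\R}\right).
\]
Set $\mscr{E}_{\M\to\A}:=\tr_{\E'}\circ V(\cdot)V^*$. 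Data processing under $\tr_{\E'}$ yields
\[
P\left(\mscr{E}(\Phi_{\R\M}),\Psi_{\R\A}\right)\leq P\left(\tfrac{I_{\R}}{d},\Psi_{\R}\right)\leq\epsilon .
\]

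\textbf{Step 3: triangle inequality.} With $\omega_{\R\bar{\M}}=\mscr{D}\circ\mscr{N}\circ\mscr{E}(\Phi_{\R\M})$, and noting that $\mscr{D}\circ\mscr{N}(\Psi_{\R\A})=\sigma_{\R\bar{\M}}$, we get
\[
P(\omega_{\R\bar{\M}},\Phi_{\R\bar{\M}})\leq P\left(\mscr{D}\circ\mscr{N}(\mscr{E}(\Phi_{\R\M})),\mscr{D}\circ\mscr{N}(\Psi_{\R\A})\right)+P(\sigma_{\R\bar{\M}},\Phi_{\R\bar{\M}})\leq\epsilon+\epsilon .
\]
The first term is handled by monotonicity of $P$ under the channel $\mscr{D}\circ\mscr{N}$ together with Step 2, and the second is the generation error. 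This proves $(d,2\epsilon)$-achievability.

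\textbf{Main obstacle.} The only delicate point is Step 2. One must make sure Uhlmann's theorem applies with the given dimensions, which the auxiliary system $\E'$ handles, and that the fidelity with the padded purification transfers to $\Psi_{\R\A}$ after discarding $\E'$. This transfer is just monotonicity of fidelity under partial trace. Everything else is routine use of the triangle inequality and data processing for the purified distance.
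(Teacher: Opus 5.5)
Your proof is correct: data processing shows $\Psi_{\M'}$ is $\epsilon$-close to $I/d$, Uhlmann's theorem (with your padding system $\E'$ covering the case $|\A|<d$) gives an encoder whose output on $\Phi_{\R\M}$ is $\epsilon$-close to $\Psi_{\R\A}$, and the triangle inequality together with monotonicity of the purified distance yields $2\epsilon$. The paper does not prove this lemma itself but imports it from Khatri--Wilde; your argument is the standard Uhlmann-plus-triangle-inequality proof behind that result, carried out directly for the purified-distance error criterion used here.
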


\begin{lemma}[Entanglement transmission to strong subspace transmission]
\label{lem:wilde3}
Given a $(d,\epsilon)$-achievable entanglement transmission protocol for a
channel $\mscr{N}_{\A \rightarrow \B}$ and $\eta\in(0,1)$, there exists a
strong subspace transmission protocol of dimension at least
$\lceil(1-\eta)d\rceil$ and error at most $\epsilon/\sqrt{\eta}$.
\end{lemma}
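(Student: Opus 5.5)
The plan is the classical expurgation argument of Barnum--Knill--Nielsen: reduce the entanglement-transmission guarantee to a bound on average pure-state fidelities, then keep only the best part of an orthonormal basis. Fix a $(d,\epsilon)$-achievable entanglement transmission protocol $(\mscr{E}_{\M\to\A},\mscr{D}_{\B\to\bar{\M}})$ for $\mscr{N}_{\A\to\B}$, and write $\mscr{M}:=\mscr{D}\circ\mscr{N}\circ\mscr{E}$, identifying $\bar{\M}\cong\M$. For a unit vector $\psi\in\mc{H}_{\M}$ set $f(\psi):=\bra{\psi}\mscr{M}(\proj{\psi})\ket{\psi}$. The hypothesis $P(\omega_{\R\bar{\M}},\Phi_{\R\bar{\M}})\leq\epsilon$ means that the entanglement fidelity satisfies $F_e:=\bra{\Phi}(\id_{\R}\ox\mscr{M})(\Phi_{\R\M})\ket{\Phi}\geq 1-\epsilon^2$.

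\emph{Step 1 (greedy basis).} I build an orthonormal basis of $\mc{H}_{\M}$ from the top down. Put $S_d:=\mc{H}_{\M}$ and, for $j=d,d-1,\dots,1$, let $\psi_j\in S_j$ be a unit vector minimizing $f$ over unit vectors of $S_j$; the minimum exists by compactness. Then set $S_{j-1}:=S_j\cap\psi_j^{\perp}$. By construction $\dim S_j=j$ and $1-f(\psi_j)=\max_{\psi\in S_j}(1-f(\psi))$. Since $S_{j-1}\subseteq S_j$, the sequence $j\mapsto 1-f(\psi_j)$ is nondecreasing in $j$.

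\emph{Step 2 (average bound and Markov).} I use the standard fact that entanglement fidelity is a lower bound on the ensemble-average fidelity for any pure-state decomposition of the input; this follows from the Kraus representation and Cauchy--Schwarz. Applied to $I/d=\sum_j\frac1d\proj{\psi_j}$, it gives
\begin{equation}
\frac1d\sum_{j=1}^d f(\psi_j)\geq F_e\geq 1-\epsilon^2,
\quad\text{hence}\quad
\sum_{j=1}^d\bigl(1-f(\psi_j)\bigr)\leq d\epsilon^2 .
\end{equation}
Let $k:=\lceil(1-\eta)d\rceil$ and suppose $1-f(\psi_k)>\epsilon^2/\eta$. By monotonicity every $j\geq k$ would also exceed $\epsilon^2/\eta$. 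There are $d-k+1\geq\eta d$ such indices, so the sum would exceed $d\epsilon^2$, a contradiction. Therefore every unit vector $\psi\in S_k$ satisfies $f(\psi)\geq 1-\epsilon^2/\eta$.

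\emph{Step 3 (new code).} I take an isometry $W:\mathbb{C}^k\to S_k\subseteq\mc{H}_{\M}$. The new encoder is $\mscr{E}\circ W(\cdot)W^*$. The new decoder is $\mscr{D}$ followed by the compression onto $S_k$ and $W^*$, with a fixed state put on the complement so that the map is trace preserving. Compression can only help, so the error on unreferenced pure inputs is at most $\sqrt{1-(1-\epsilon^2/\eta)}=\epsilon/\sqrt{\eta}$ in purified distance.

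\emph{Main obstacle.} The strong subspace error is a maximum over pure $\phi_{\R\M}$ with a reference system, not only over pure inputs on $\M$. A generic minimum-fidelity bound only yields $1-F_e(\rho)\leq\frac32(1-f_{\min})$, which would cost a constant factor. To get exactly $\epsilon/\sqrt{\eta}$, I would first rerun Steps 1--2 with the greedy choice made over $\min_{\phi_{\R S_j}}\bra{\phi}(\id\ox\mscr{M})(\phi)\ket{\phi}$, that is, over entangled inputs supported on $S_j$. The needed replacement for Step 2 is an averaging inequality: whenever the worst entangled input on $S_j$ has reduced state $\rho$ with spectral vectors inside $S_j$, one should be able to peel off a one-dimensional piece while controlling the sum by $dF_e$. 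Checking this inequality, or else accepting a modified constant, is the step I expect to be delicate.
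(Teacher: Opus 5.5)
\paragraph{Gap: entangled inputs are not handled.} Your Steps 1--3 are sound as far as they go. The greedy nested subspaces, the convexity bound $\frac1d\sum_j f(\psi_j)\ge F_e$, and the count $d-k+1>\eta d$ do give a subspace $S_k$ of dimension $\lceil(1-\eta)d\rceil$ on which every \emph{unreferenced} pure input has fidelity at least $1-\epsilon^2/\eta$. But the lemma is about strong subspace transmission. Its error is $\max_{\phi_{\R\M}}P(\mscr{M}(\phi_{\R\M}),\phi_{\R\bar{\M}})$ over inputs entangled with a reference, and for these your argument does not give the constant one.

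Even granting the $\frac32$ relation you quote, you only get $\sqrt{3/(2\eta)}\,\epsilon$. That would still suffice for \eqref{equ:relationnew} at the level of exponents (with $\sqrt3$ in place of $\sqrt2$), but it is not the stated lemma.

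The repair you sketch is not a proof, and as stated it is not even well defined. The worst entangled input on $S_j$ can have a full-rank reduced state on $S_j$. For a mixture of the identity channel and the completely depolarizing channel, $F_e(\rho)$ is an increasing affine function of $\tr\rho^2$, so it is minimized by the maximally mixed state on $S_j$. There is then no distinguished one-dimensional piece to peel off, and no evident decomposition of $I/d$ into such worst-case states with weights that would support your Markov count. In that same example the ratio of $1-F_e$ for the worst entangled input to $1-f$ for pure inputs on the whole space is $1+1/d$. So bounding pure-state fidelities alone cannot give the constant one, and the greedy construction on $f$ is the wrong object.

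\paragraph{Missing idea: linearize the fidelity.} For any input with $\M$-marginal $\rho$, the squared fidelity is $F_e(\rho)=\sum_j|\tr(K_j\rho)|^2$, which is convex in $\rho$. Its tangent at $I/d$ gives, for every state $\rho$ on $\M$,
\[
F_e(\rho)\ \ge\ 1-\tr(\Delta\rho),\qquad
\Delta:=\sum_j\Bigl(K_j-\tfrac{\tr K_j}{d}I_{\M}\Bigr)^{\dagger}\Bigl(K_j-\tfrac{\tr K_j}{d}I_{\M}\Bigr),
\]
with equality at $\rho=I/d$. Hence $\tr\Delta=d\,(1-F_e(I/d))\le d\epsilon^2$. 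This is exactly the paper's route: its identity $(\star)$ is this tangent bound plus the nonnegative remainder $\sum_j|\tr[(K_j-\frac{\tr K_j}{d}I_{\M})\rho]|^2$.

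Replace your Steps 1--2 by taking the spectral subspace of $\Delta$ for eigenvalues at most $\epsilon^2/\eta$. Markov on the eigenvalues gives dimension at least $\lceil(1-\eta)d\rceil$. Because the bound is linear in $\rho$, every input $\phi_{\R\M}$ whose marginal lies on that subspace, entangled or not, has squared fidelity at least $1-\epsilon^2/\eta$. Your compression decoder from Step 3 then completes the argument with error at most $\epsilon/\sqrt{\eta}$.
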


\begin{proof}
Let $(d,\mscr{E}_{\M \rightarrow \A},\mscr{D}_{\B \rightarrow \bar{\M}})$
be a $(d,\epsilon)$-achievable entanglement transmission protocol with $\bar{\M}\cong\M$. 
Let $\{K_j\}_j$ be Kraus
operators of the effective channel
$
\mscr{D}_{\B \rightarrow \bar{\M}}
\circ\mscr{N}_{\A \rightarrow \B}
\circ\mscr{E}_{\M \rightarrow \A}$.
Define the error operator
\[
\Delta
:=
\sum_j
\left(K_j-\frac{\tr K_j}{d}I_{\M}\right)^\dagger
\left(K_j-\frac{\tr K_j}{d}I_{\M}\right).
\]
Since $\sum_jK_j^\dagger K_j=I_{\M}$, the Kraus formula for
entanglement fidelity gives
\[
\frac{1}{d}\tr\Delta
=
1-\frac{1}{d^2}\sum_j|\tr K_j|^2
=
\bigl(
p_{\rm{err}}^{\rm{ET}}(\mscr{E},\mscr{D};\mscr{N})
\bigr)^2
\leq\epsilon^2.
\]

Let $\M_\eta\subseteq\M$ be the spectral subspace of $\Delta$
corresponding to eigenvalues not exceeding $\epsilon^2/\eta$.
If $\epsilon=0$, then $\Delta=0$. If $\epsilon>0$, the preceding
bound shows that fewer than $\eta d$ eigenvalues of $\Delta$ can be
larger than $\epsilon^2/\eta$. Consequently,
\[
|\M_\eta|
\geq
\lceil(1-\eta)d\rceil,
\qquad
P_{\M_\eta}\Delta P_{\M_\eta}
\leq
\frac{\epsilon^2}{\eta}P_{\M_\eta}.
\]

Let $\bar{\M}_\eta\subseteq\bar{\M}$ be the corresponding output
subspace. Restrict the original encoder to $\M_\eta$, and define the
new decoder from $\B$ to $\bar{\M}_\eta$ by
\[
\begin{split}
\mscr{D}^{(\eta)}_{\B\to\bar{\M}_\eta}(X)
:={}&
P_{\bar{\M}_\eta}
\mscr{D}_{\B \rightarrow \bar{\M}}(X)
P_{\bar{\M}_\eta}
\\
&+
\tr\!\left[
(I_{\bar{\M}}-P_{\bar{\M}_\eta})
\mscr{D}_{\B \rightarrow \bar{\M}}(X)
\right]
\frac{P_{\bar{\M}_\eta}}{|\M_\eta|}.
\end{split}
\]

Let $\psi$ be any pure input state on $\M_\eta$ and an arbitrary
reference system, and let $\psi_{\M_\eta}$ be its marginal on
$\M_\eta$. Since the target state is supported on
$\bar{\M}_\eta$, the definition of $\mscr{D}^{(\eta)}_{\B\to\bar{\M}_\eta}$ and data-processing inequality of fidelity give
\begin{align*}
&F^2\!\left(
\bigl(
\id\otimes
(\mscr{D}^{(\eta)}_{\B\to\bar{\M}_\eta}
\circ\mscr{N}_{\A \rightarrow \B}
\circ\mscr{E}_{\M_\eta \rightarrow \A})
\bigr)(\psi),
\psi
\right)
\\
&\quad\geq
F^2\!\left(
\bigl(
\id\otimes
(\mscr{D}_{\B \rightarrow \bar{\M}}
\circ\mscr{N}_{\A \rightarrow \B}
\circ\mscr{E}_{\M_\eta \rightarrow \A})
\bigr)(\psi),
\psi
\right)
\\
&\quad=
\sum_j
\left|
\tr(K_j\psi_{\M_\eta})
\right|^2
\\
&\quad\overset{(\star)}{=}
1-\tr(\Delta\psi_{\M_\eta})
+
\sum_j
\left|
\tr\!\left[
\left(
K_j-\frac{\tr K_j}{d}I_{\M}
\right)
\psi_{\M_\eta}
\right]
\right|^2
\\
&\quad\geq
1-\tr(\Delta\psi_{\M_\eta})
\\
&\quad\geq
1-\frac{\epsilon^2}{\eta},
\end{align*}
where $(\star)$ follows by expanding
\[
\sum_j
\tr(K_j^\dagger K_j\psi_{\M_\eta})
=
1.
\]
Hence the purified-distance error of the new strong subspace
transmission protocol is at most $\epsilon/\sqrt{\eta}$.
\end{proof}

\begin{lemma}
\label{lem:reldt}
	For any quantum states $\rho$ and $\sigma$, we have
	\begin{equation}
		D(\rho\Vert\sigma) \geq \tr(\rho - 9\sigma)_{+}. \label{eq:lemma1}
	\end{equation}
\end{lemma}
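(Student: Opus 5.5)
The plan is to reduce the inequality to a classical two-outcome statement via the data processing inequality, and then verify an elementary scalar inequality. Let $P:=\{\rho>9\sigma\}$ be the projection onto the positive part of $\rho-9\sigma$, so that
\[
t:=\tr(\rho-9\sigma)_+=\tr P(\rho-9\sigma)=p-9q,\qquad p:=\tr P\rho,\quad q:=\tr P\sigma .
\]
If $t=0$ there is nothing to prove, since $D(\rho\|\sigma)\geq 0$ for states. So assume $t>0$, which gives $p>9q\geq 0$. If $\supp(\rho)\not\subseteq\supp(\sigma)$, then $D(\rho\|\sigma)=+\infty$ and the claim is trivial. We may therefore also assume $q>0$: otherwise $P\sigma P=0$ while $P\rho P\neq0$, which would violate the support inclusion.

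\textbf{Step 1 (data processing).} Apply the data processing inequality for the Umegaki relative entropy to the measurement channel $X\mapsto \tr(PX)\ket{0}\!\bra{0}+\tr((I-P)X)\ket{1}\!\bra{1}$. This yields
\[
D(\rho\|\sigma)\geq d(p\|q):=p\log\frac{p}{q}+(1-p)\log\frac{1-p}{1-q}.
\]

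\textbf{Step 2 (remove the second term).} Use $\log x\geq (1-1/x)/\ln 2$ with $x=(1-p)/(1-q)$ to obtain
\[
(1-p)\log\frac{1-p}{1-q}\geq \frac{q-p}{\ln 2}.
\]
The cases $p=1$ or $q=1$ are handled by continuity or directly. Hence
\[
D(\rho\|\sigma)\geq p\log\frac pq-\frac{p-q}{\ln 2}.
\]

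\textbf{Step 3 (scalar inequality).} Write $q=up$ with $u\in(0,1/9)$. The required bound $d(p\|q)\geq p-9q=p(1-9u)$ then follows once we show
\[
g(u):=-\log u-\frac{1-u}{\ln 2}-1+9u\geq 0 \qquad\text{for } u\in(0,1/9].
\]
Differentiating gives
\[
g'(u)=-\frac{1}{u\ln2}+\frac1{\ln2}+9\leq -\frac{9}{\ln 2}+\frac{1}{\ln 2}+9<0
\]
on this interval, so $g$ is decreasing there. It therefore suffices to check the endpoint:
\[
g(1/9)=\log 9-\frac{8}{9\ln 2}\approx 3.17-1.28>0 .
\]

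This completes the argument. The only delicate point is the constant: Step 3 is exactly where the factor $9$ is needed, and a numerical check confirms it has enough slack. The remaining steps are routine.
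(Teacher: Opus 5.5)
Your proof is correct and follows essentially the same route as the paper: project onto $\{\rho>9\sigma\}$, use data processing to reduce to the binary relative entropy $d(p\|q)$ with $p> 9q$, and finish with an elementary scalar inequality. The only difference is in that last step. The paper uses $p\log\frac{p}{q}\geq p\log 9\geq 3p$ and $(1-p)\log\frac{1-p}{1-q}\geq(1-p)\log(1-p)$, together with $2x+(1-x)\log(1-x)\geq 0$, to obtain the slightly stronger $D(\rho\|\sigma)\geq p\geq p-9q$; you instead use $\ln x\geq 1-1/x$ and a monotonicity argument in $u=q/p$, which works equally well.
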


\begin{proof}
	Set $Q := \{\rho > 9\sigma\}$. Denote $p = \tr(\rho Q)$ and $q = \tr(\sigma Q)$, which are the probabilities of obtaining the outcome associated with $Q$ when a projective measurement $\{Q, I-Q\}$ is applied to $\rho$ and $\sigma$, respectively. Then, it is easy to see that
	\begin{equation}
		Q \rho Q \geq 9 Q \sigma Q,
	\end{equation}
	which gives
	\begin{equation}
		p \geq 9q. \label{eq:pgeq9q}
	\end{equation}
	
	Now by the monotonicity of the quantum relative entropy under quantum measurements, we have
	\begin{align}
		D(\rho\|\sigma) &\geq D\big( (p, 1-p)\|(q, 1-q) \big) \\
		&\geq  p\log\frac{p}{q}+(1-p)\log \frac{1-p}{1-q} \\
		&\geq 3p + (1-p)\log(1-p), 
	\end{align}
where the last line is from Eq.~\eqref{eq:pgeq9q}. Noticing that
$2x+(1-x)\log(1-x) \geq 0$ for $x \in [0,1]$,
we obtain
	\begin{align}
	D(\rho\| \sigma) \geq p  \geq  \tr(\rho - 9\sigma)_{+}.
	\end{align}
\end{proof}


\bibliography{ref.bib}

\end{document}